%% file: main.tex
\documentclass[acmsmall,screen,nonacm]{acmart}
\usepackage{braket}
\usepackage{amsmath}
\usepackage{mathtools}
\usepackage{mathpartir}
\usepackage{bussproofs}
\usepackage{makecell}
\usepackage{stmaryrd}
\usepackage{tcolorbox} % needed for colored boxes
\tcbuselibrary{skins,breakable} % optional, for nicer boxes
\usepackage{pdflscape}
\usepackage{graphicx}
\usepackage{adjustbox} % <— for automatic fitting
\usepackage{listings}
\usepackage{xcolor}
\usepackage{subcaption} % for subfigures
\usepackage{float}
\usepackage{multirow}
\usepackage{colortbl}    % for \arrayrulecolor

\definecolor{mygreen}{HTML}{1b5e20}

\lstdefinestyle{pseudocode}{
    basicstyle=\ttfamily\scriptsize,          % monospaced font
    keywordstyle=\color{blue!70!black},  % keywords in blue
    commentstyle=\color{gray!70!black},  % comments in gray
    stringstyle=\color{orange!90!black}, % (if you use strings)
    columns=fullflexible,                % no weird spacing
    keepspaces=true,                     % preserve indentation
    showstringspaces=false,              % don't show spaces as dots
    frame=none,                          % no box around code
    xleftmargin=1em,                     % small left indent
    escapeinside={(*@}{@*)},             % allow LaTeX inside
    aboveskip=0.5em,
    belowskip=0.5em,
    breaklines=true,  
    backgroundcolor=\color{white},
    frame=single,
}

\input{macros}

\newtheorem{example}{Example}
\newtheorem{definition}{Definition}
\newtheorem{lemma}{Lemma}
\newtheorem{proposition}{Proposition}
\newtheorem{corollary}{Corollary}
\newtheorem{theorem}{Theorem}
\newtheorem*{remark}{Remark}

\AtBeginDocument{%
  }

\begin{document}

%%
%% The "title" command has an optional parameter,
%% allowing the author to define a "short title" to be used in page headers.
\title{Formal Verification of Continuous-Variable Quantum Programs}

%%
%% The "author" command and its associated commands are used to define
%% the authors and their affiliations.
%% Of note is the shared affiliation of the first two authors, and the
%% "authornote" and "authornotemark" commands
%% used to denote shared contribution to the research.
\author{Stefanie Muroya}
\affiliation{%
  \institution{Institute of Science and Technology in Austria (ISTA)}
  \city{Klosterneuburg}
  \country{Austria}
}

\author{Thomas A. Henzinger}
\affiliation{%
  \institution{Institute of Science and Technology in Austria (ISTA)}
  \city{Klosterneuburg}
  \country{Austria}
}

%%
%% By default, the full list of authors will be used in the page
%% headers. Often, this list is too long, and will overlap
%% other information printed in the page headers. This command allows
%% the author to define a more concise list
%% of authors' names for this purpose.
\renewcommand{\shortauthors}{Muroya et al.}

%%
%% The abstract is a short summary of the work to be presented in the
%% article.
\begin{abstract}
We provide a formal framework for Continuous-Variable Quantum Computing (CQC).
While CQC is supported by photonic quantum hardware,
we are not aware of a formal semantics for continuous-variable quantum programs nor of a unary Hoare logic for their verification.
There are several technical obstacles to extending to CQC any of the formal frameworks available for Discrete-Variable Quantum Computing (DQC).
Most importantly, continuous-variable quantum programs act on {\em infinite-dimensional} Hilbert spaces;
their measurement outcomes are often {\em unbounded} and have expected values
that are defined by an improper integral (or an infinite series), which may not converge.
We overcome these challenges to give a formal semantics to a universal programming language for CQC and to provide the first Hoare logic for CQC.
The assertions of our logic are built from polynomials over canonical observables.
Besides proving relative completeness, 
we implement a symbolic weakest-precondition calculator for CQC based on our logic.
Our tool has successfully verified CQC algorithms from textbooks and calculated their approximation errors for physically realizable implementations,
proved the correctness (i.e., equivalence) of gate decompositions for CQC hardware, 
and computed the resource requirements (i.e., number of photon-number states) for achieving a desired accuracy in the classical simulation of continuous-variable quantum programs.
\end{abstract}

%%
%% The code below is generated by the tool at http://dl.acm.org/ccs.cfm.
%% Please copy and paste the code instead of the example below.
%%
\begin{CCSXML}

<ccs2012>
   <concept>
       <concept_id>10003752.10010124.10010138.10010142</concept_id>
       <concept_desc>Theory of computation~Program verification</concept_desc>
       <concept_significance>500</concept_significance>
       </concept>
 </ccs2012>
\end{CCSXML}

\ccsdesc[500]{Theory of computation~Program verification}

%%
%% Keywords. The author(s) should pick words that accurately describe
%% the work being presented. Separate the keywords with commas.
\keywords{Quantum computing, Program verification.}

%%
%% This command processes the author and affiliation and title
%% information and builds the first part of the formatted document.
\maketitle

\input{sections/_1_intro}

\input{sections/_3_background}

\input{sections/_5_pl}
\input{sections/_6_hoare}

\input{sections/_7_experiments}
\input{sections/conclusion}

%%
%% The acknowledgments section is defined using the "acks" environment
%% (and NOT an unnumbered section). This ensures the proper
%% identification of the section in the article metadata, and the
%% consistent spelling of the heading.
\begin{acks}

\end{acks}
%%
%% The next two lines define the bibliography style to be used, and
%% the bibliography file.
\bibliographystyle{ACM-Reference-Format}
\bibliography{sample.bib}

\newpage
\input{sections/appendix.tex}

\end{document}

%% file: macros.tex
\newcommand{\myHL}[0]{\textsc{CvQHL}}
\newcommand{\myPL}[0]{\textsc{CvQPL}}

\newcommand{\zeroMatrix}[0]{\textbf{0}}
\newcommand{\real}[0]{r}
\newcommand{\cnum}[0]{z}
\newcommand{\identity}[0]{\mathbb{I}}
\newcommand{\im}[0]{\textbf{i}}

\newcommand{\norm}[1]{||#1 ||}
\newcommand{\innerProd}[2]{\innerProdF{#1}{#2}}
\newcommand{\innerProdF}[2]{\langle #1, #2 \rangle}
\newcommand{\qstate}[0]{\alpha}
\newcommand{\hs}[0]{\mathcal{H}}
\newcommand{\numModes}[0]{n}
\newcommand{\vvals}[1]{\vec{#1}}
\newcommand{\fvar}[0]{x}
\newcommand{\fvars}[0]{\vvals{\fvar}}
\newcommand{\basis}[0]{e}
\newcommand{\basisSet}[0]{B}

\newcommand{\denseDomain}[1]{\mathcal{S}}
\newcommand{\domainOp}[1]{\mathcal{D}(#1)}

\newcommand{\matrixOp}[0]{L}

\newcommand{\dm}[0]{\rho}
\newcommand{\trace}[0]{\texttt{Tr}}
\newcommand{\expectedV}[1]{\langle #1 \rangle}
\newcommand{\allOps}[0]{\mathcal{L}}

\newcommand{\allOpsA}[0]{\allOps^{\dagger}}

\newcommand{\alldmS}[0]{\mathcal{P}}
\newcommand{\alldm}[0]{\alldmS}

\newcommand{\fockBasis}[0]{\mathbb{F}}
\newcommand{\basisF}[0]{F}
\newcommand{\fockBF}[0]{\vartheta}
\newcommand{\hermitePol}[0]{\textsc{H}}

\newcommand{\proj}[0]{\Lambda}

\newcommand{\positionOp}[0]{\widehat{X}}
\newcommand{\momentumOp}[0]{\widehat{P}}
\newcommand{\numOp}[0]{\widehat{N}}
\newcommand{\destroyOp}[0]{\widehat{A}}
\newcommand{\createOp}[0]{\destroyOp^\dagger}
\newcommand{\quadOps}[0]{\Xi}
\newcommand{\quadOp}[0]{\widehat{q}}
\newcommand{\obsOp}[0]{\widehat{Z}}
\newcommand{\polyOps}[0]{\quadOps_{f}^\dagger}
\newcommand{\allPolyOps}[0]{\quadOps_{f}}
\newcommand{\poly}[0]{\zeta}
\newcommand{\allPosOp}[0]{\vec{X}}
\newcommand{\allMomOp}[0]{\vec{P}}

\newcommand{\charactF}[0]{\chi}

\newcommand{\weylOp}[0]{W}

\newcommand{\allqv}[0]{\mathcal{Q}}
\newcommand{\qvar}[0]{q}
\newcommand{\qvars}[0]{\vec{\qvar}}
\newcommand{\permuteHs}[0]{\Pi}
\newcommand{\prog}[0]{P}
\newcommand{\atomProg}[0]{P^{\bullet}}
\newcommand{\progs}[0]{\Sigma}
\newcommand{\atomProgs}[0]{\Sigma^{\bullet}}
\newcommand\sem[1]{\llbracket #1\rrbracket}
\newcommand\semD[1]{\llbracket #1\rrbracket^{*}}
\newcommand{\Ins}[1]{\texttt{#1}}
\newcommand{\measFock}[0]{\Ins{Meas}}
\newcommand{\resetIns}[0]{\Ins{Reset}}
\newcommand{\dispIns}[0]{\Ins{D}}
\newcommand{\squeezeIns}[0]{\Ins{S}}
\newcommand{\rotationIns}[0]{\Ins{R}}
\newcommand{\beamSplitterIns}[0]{\Ins{BS}}
\newcommand{\cubicPhaseIns}[0]{\Ins{V}}
\newcommand{\skipIns}[0]{\Ins{skip}}

\newcommand{\relOp}[0]{\Box}
\newcommand{\QTerm}[0]{\mathcal{T}}
\newcommand{\Assertion}[0]{\mathcal{A}}
\newcommand{\pre}[0]{\phi}
\newcommand{\qterm}[0]{\tau}
\newcommand{\post}[0]{\psi}
\newcommand{\weakp}[2]{wp(#1, #2)}
\newcommand{\hltriple}[3]{\{#1\}~#2~\{#3\}}

\newcommand{\toQAss}[0]{\mathcal{F}_x}

\newcommand{\prodG}[0]{\mathcal{X}}
\newcommand{\normalG}[0]{\texttt{NG}}
\newcommand{\countLadder}[0]{\texttt{NumA}}
\newcommand{\ladderOp}[0]{\texttt{A}}
\newcommand{\toNormalG}[0]{\mathcal{N}}

\newcommand{\dualPred}[0]{\texttt{SubA}}
\newcommand{\dualTerm}[0]{\texttt{SubT}}

\newcommand{\getResetCoeff}[0]{
\texttt{RCoeff}
}
\newcommand{\RBasis}[0]{
\texttt{RBasis}
}

\newcommand{\hsDim}[0]{N}
\newcommand{\progName}[1]{\textbf{\textsf{#1}}}

%% file: sections/_1_intro.tex
\section{Introduction}
Quantum computing is currently being pursued through multiple hardware technologies and computational paradigms~\cite{Choe22}.
We can divide the gate-based paradigms for quantum computing into two types: 
{\em discrete variable} (DQC) and {\em continuous variable} (CQC) quantum computing.
Both paradigms are candidates for bringing quantum advantage, and CQC has its own unique features and advantages~\cite{Killoran19,Weedbrook12,Zhong20,Graves14,graves16,Bourassa21,larsen25,aghaee25,Braunstein05}.
In particular, proposals for fault-tolerant~\cite{Gottesman00,Renault25} and mixed discrete-continuous~\cite{lloyd03,Wang01} quantum architectures advocate continuous-variable systems as a promising candidate.
Also, CQC has emerged as the leading paradigm for photonic quantum hardware~\cite{Killoran19}.

While DQC has historically received most attention from the programming-languages community,
as CQC hardware and compilers mature, there is an increasing need for formal methods that provide correctness guarantees for continuous-variable quantum software. 
For discrete-variable quantum programs, Hoare logic has become a successful deductive verification framework
\cite{hondt06,Ying12,ying24,zhou19,liu19,sun24Q,Sundaram22}.
To the best of our knowledge, there is no unary Hoare logic yet for CQC.
This paper fills this gap.

% {\em States.}
In DQC the basic unit of information is the {\em qubit}, and 
a pure state of an $n$-qubit system is a unit vector residing in a {\em finite-dimensional} complex Hilbert space of dimension~$2^n$~\cite{qc_bible}.
In contrast, in CQC the basic unit of information is the {\em qumode} (a.k.a. mode), and each pure state of an $n$-qumode system is a unit vector in an {\em infinite-dimensional} complex Hilbert space~$\hs$, whose vectors are realized by square-integrable functions (called {\em wave  functions}) of arity~$n$, which map $n$ reals to a complex amplitude~\cite{Braunstein05}.
In both cases, a (general) quantum state is an ensemble (i.e., probability distribution) of pure quantum states, and such ensembles are represented by {\em density} operators,
which are linear operators acting on the corresponding  Hilbert space. 

% {\em Observables.}
In quantum computing, measurements connect quantum information with the classical world and, therefore, we seek to engineer quantum programs whose measurement results encode a solution to a given problem~\cite{qc_bible}. 
For this reason, we specify and verify the expected measurement outcomes of quantum programs.
Self-adjoint linear operators known as {\em observables} capture measurement outcomes~\cite{Ballentine14}.
The primitive DQC observables are the Pauli operators~\cite{qc_bible}; 
the primitive CQC observables are the
two {\em canonical} observables $\positionOp$ and 
$\momentumOp$, for measuring position and momentum, respectively~\cite{Weedbrook12,Ballentine14,Braunstein05}.

% {\em Assertions.}
Most quantum Hoare logics for DQC \cite{sun24Q,Ying12,ying24,zhou19,liu19,hondt06,Sundaram22}
use observables as assertions:
a Hoare triple $\hltriple{\pre}{\prog}{\post}$ is valid if for all (general) quantum states, the expected measurement value $\expectedV{\pre}$ of the precondition $\pre$ in the initial state is at most the expected measurement value $\expectedV{\post}$ of the postcondition $\post$ on the successor quantum state that results from executing the quantum program~$\prog$.
Our central observation is that the Hoare logics for DQC cannot be extended to CQC and, rather, a fundamentally different logical foundation is needed for CQC.

\subsection{The challenges and key ideas behind building a logical foundation for CQC}
Existing quantum Hoare logics rely on two deeply embedded assumptions that fail for continuous-variable quantum programs:
(I)~the underlying Hilbert space is finite-dimensional, and
(II)~all assertions represent bounded observables. 
These two obstacles are not merely technical inconveniences; 
they invalidate the foundations of DQC logics for use in CQC. 

\subsubsection*{(I) Overcoming the finiteness assumption} 
Qumodes live in an infinite-dimensional Hilbert space 
and continuous-variable operators act on this space.
In particular, the expected value of an observable on a pure quantum state is defined as an improper integral (or infinite series),
which may not converge~\cite{Ballentine14}.
Although mathematically valid, pure quantum states with divergent or undefined expected values are not realizable physically~\cite{Carcassi25}.
By contrast, 
in DQC the finiteness of all sums guarantees well-defined expected values for all assertions on all quantum states~\cite{Reed12}.

\begin{example}[Unphysical quantum state] 
Consider a system that consists of a single qumode
whose state is $\qstate(x) = \frac{1}{\sqrt{2}\cdot(1+|x|)}$ in the position basis; 
i.e., the probability density that the qumode is in position $x \in \mathbb{R}$ is $\qstate(x)\cdot\qstate(x)$.
Recall that $\positionOp$ is the observable to measure the qumode's position.
If the assertion $\positionOp$ is interpreted over all quantum states in the 
infinite-dimensional Hilbert space~$\hs$, it is ill-formed,
because a position measurement does not have a defined expected value:
$$
\int_{\fvar\in \mathbb{R}} x\cdot\qstate(x)\cdot\qstate(x) dx = \int_{x\in\mathbb{R}}\frac{x}{2\cdot (1+|x|)^2}dx = \infty - \infty
$$
This is the reason we shall interpret assertions such as $\positionOp$ over a dense subspace of $\hs$ that does not contain unphysical quantum states such as~$\alpha(x)$.   
\end{example}

\textbf{\textit{The first challenge}} for building logical foundations of CQC, therefore, is to identify
\begin{enumerate} 
\item 
a subset $\mathcal{S}\subset\hs$ of pure quantum states that are physically relevant, 
\item 
an induced set $\alldmS$ of density operators with possibly infinite support, which represent general quantum states that are realizable physically,
\item 
an assertion language, and 
\item 
a programming language 
\end{enumerate}
such that the values of all assertions are well-defined on~$\alldmS$, 
and $\alldmS$ is closed under all program instructions. 
\textbf{\textit{Our first key idea}}
is to use
for $\mathcal{S}$ a dense subset of the infinite-dimensional Hilbert space, namely,
the vectors in the {\textit{Schwartz space}}~\cite{Becnel15} that represent quantum states.
Consequently,
for $\alldmS$ we use \textbf{\textit{Schwartz density operators}},
which ensure the convergence of expected values for all assertions
and, therefore, are realizable physically.

\subsubsection*{(II) Overcoming the boundedness assumption}
Every observable in a finite-dimensional Hilbert space has finitely many real outcomes and, therefore, is bounded (from above and below). 
By contrast, observables of CQC are generally unbounded,
because measurement outcomes can be arbitrarily large;
e.g., the position measurement $\positionOp$ can be any real number.
The DQC logics exploit boundedness.
For bounded assertions $\pre$ and $\post$, 
logical implication $\pre\Rightarrow\post$ can be defined as follows: 
on all quantum states, 
the expected measurement value of $\pre$ is at most the expected 
measurement value of~$\post$;
this is called the L\"owner order on observables \cite{hondt06}
and corresponds to the DQC Hoare triple $\hltriple{\pre}{\skipIns}{\post}$.
However, 
the L\"owner order does not capture the desired CQC semantics of 
$\hltriple{\pre}{\skipIns}{\post}$.
To see this, let $\pre=\positionOp$ and $\post=2\positionOp$.
Then $\hltriple{\pre}{\skipIns}{\post}$ shall be valid in CQC, 
but for any quantum state whose expected position is negative,
the expected value of $\positionOp$ is greater than the expected value of $2\positionOp$.
The problem is that the measurement outcome of $\positionOp$ is indefinite
(i.e., positive, zero, or negative).
By contrast,
in DQC, the boundedness of all observables ensures that their measurement outcomes can be made positive semidefinite (i.e., positive or zero),
by adding a lower bound to all observation values. 

There are bounded observables in infinite-dimensional Hilbert space, 
however, two problems arise.
First, while DQC Hoare logics reason naturally about measurement values that correspond to projectors onto eigenstates of the observables, 
the (bounded) projectors of the CQC canonical observables are not well-defined
because their eigenstates are (unphysical) non-normalizable functions. 
Second, even for a simple bounded postcondition, the weakest precondition is not guaranteed to have an appealing finitary syntax.

\begin{example}[Weakest precondition of vacuum under squeeze] 
Consider the {\em squeeze} instruction $\squeezeIns(r)$ 
(with a real parameter~$r$), 
which is useful for generating entangled states in CQC, 
and the bounded postcondition $\ket{0}\bra{0}$
(written in Dirac's braket notation),
which measures the probability of the vacuum state $\ket{0}$.
The resulting weakest precondition is the observable 
$\ket{\qstate}\bra{\qstate}$ with 
\begin{equation}
    \label{eq:squeezed_vacuum}
    \ket{\qstate} = 
    \frac{1}{
        \sqrt{\cosh(r)}
    } \sum^{\infty}_{i = 0} (\tanh(r))^i 
        \frac{\sqrt{(2i)!}}{2^i(i!)}\int_{x \in \mathbb{R}}
        \frac{ 
            e^{-x^2/2} \cdot \hermitePol_{2i}(x)
        }{
            \pi^{1/4}\sqrt{4^i\cdot (2i)!}
        }\ket{x}dx,
\end{equation}
where $\hermitePol_{i}$ is the $i$-th physicist's Hermite polynomial~\cite{Ballentine14}, and $\ket{x}$ is an eigenvector of the position operator~$\positionOp$.
(There is no need to understand this equation!---it is shown only to illustrate how complicated CQC assertions can become if no attention is paid to the design of the assertion language.)
\label{ex:squeezed_vacuum}
\end{example}

\textbf{\textit{The second challenge}} for a CQC Hoare logic, therefore, is to devise a sufficiently rich assertion language $\Assertion$, 
which can specify the unbounded canonical observables,
together with a programming language such that the assertions in 
$\Assertion$ are closed under weakest preconditions with respect to all programs.
\textbf{\textit{Our second key idea}}
is to base the assertion language $\Assertion$ on finite \textbf{\textit{polynomials over the canonical observables}}. 
The canonical observables have long served as a fundamental formalism 
for CQC, and polynomials over canonical observables have been used to define the concept of universal CQC instruction set~\cite{Weedbrook12,Braunstein05}.
These polynomials play a central role in CQC~\cite{Braunstein05}, 
analogous to the role Pauli matrices play in DQC: 
they identify an expressively complete set of primitive observables~\cite{Braunstein05}.
One can characterize a (general) quantum state completely using polynomials~\cite{Braunstein05,Weedbrook12,Kok10}, 
and many quantum states of interest can be characterized in terms of the expected values of finitely many monomials~\cite{Weedbrook12}. 

\begin{example}[Polynomial weakest precondition of vacuum under squeeze]
Recall Example~\ref{ex:squeezed_vacuum}.
We write the polynomial precondition that specifies the density operator $\ket{\qstate}\bra{\qstate}$, and the polynomial postcondition that defines $\ket{0}\bra{0}$.
Both $\ket{\qstate}$ and the vacuum state $\ket{0}$ are Gaussian states that can be characterized by a covariance matrix over the expected values of finitely many polynomials~\cite{Weedbrook12}:
\begin{equation}
\hltriple{\positionOp = 0 \land e^{-2r}\cdot\positionOp^2 = 0.5 \land \momentumOp = 0 \land e^{2r}\cdot\momentumOp^2 = 0.5}{\squeezeIns(r)}{\positionOp = 0 \land \positionOp^2 = 0.5 \land \momentumOp = 0 \land \momentumOp^2 = 0.5}.
\end{equation}
These assertions are more readable than Equation~\ref{eq:squeezed_vacuum}.
\end{example}

\textbf{\textit{Our third key idea}} is to replace the DQC interpretation of assertions as functions on quantum states, which map each state to an expected measurement value.
Instead, we use a \textbf{\textit{set-based interpretation}} for CQC assertions, 
interpreting every assertion in $\Assertion$ as denoting a set of (general) quantum states,
namely, a subset of the set $\alldmS$ of Schwartz density operators.
Syntactically, our polynomial assertions resemble those of classical Hoare logic over the reals, while referring directly to the physical quantities relevant to CQC.
Semantically, for example, the polynomial assertion  
$\positionOp = 5 \land \momentumOp = 3$ 
denotes the set of Schwartz density operators for which the expected values of measurement outcomes for position and momentum are 5 and~3, respectively. 
Similarly,
the polynomial assertion $\positionOp^2=3\wedge\positionOp=1$ is satisfied by 
the Schwartz density operators for which the variance 
of the position measurement is $3-1=2$.
This set-based semantics allows us to model logical implication extensionally, as the subset relation on sets of quantum states, 
thus avoiding the L\"owner order on DQC observables.
In particular,
the implication $\pre\Rightarrow\post$ and the corresponding Hoare triple
$\hltriple{\pre}{\skipIns}{\post}$ are defined as follows:
every Schwartz density operator that satisfies $\pre$ also satisfies~$\post$.
More generally, 
a Hoare triple $\hltriple{\pre}{\prog}{\post}$ is valid if the CQC program $\prog$ transforms every Schwartz density operator that satisfies the precondition $\pre$, into a density operator (i.e., quantum state) that satisfies the postcondition~$\post$.
It should be noted that we consider only terminating, loop-free CQC programs,
as current quantum hardware does not support loops (see below).

\subsection{A summary of our design decisions and related work}

We base our Hoare logic $\myHL$  for continuous-variable quantum programs on three key decisions that deviate from Hoare logics for discrete-variable quantum programs:
\begin{enumerate}
\item 
  {\em Restricting the state space to physical quantum states.}
  We focus our attention on the quantum states that are induced by the Schwartz subspace of the infinite-dimensional Hilbert space, rather than considering all quantum states induced by the Hilbert space.
  The Schwartz space contains only physical quantum states and,
  since it is dense, it provides a natural common domain for unbounded observables.
  We show that the corresponding Schwartz density operators are closed under our instruction set for CQC.
\item 
  {\em Restricting the assertions to polynomials over the canonical observables.}
  The assertions of $\myHL$ are generated by applying boolean operators 
  to inequalities between finite polynomials over the canonical observables.
  This provides a readable and familiar syntax for CQC.
  We show that the assertion language is closed under weakest preconditions 
  for our instruction set.
  Unlike DQC logics, which compute weakest preconditions semantically using matrix multiplication,
  polynomial weakest preconditions can be computed by syntactic substitution,
  thus avoiding the explicit manipulation of infinite-dimensional operators that involve derivatives of transcendental functions (cf.\ Equation~\ref{eq:squeezed_vacuum}).
  This also allows the framework to be automated.
\item 
  {\em Interpreting logical implication as the subset relation on sets of density operators.}
  This set-based, point-wise interpretation of entailment is stronger than 
  the L\"owner order;
  it can cope with unbounded observables and provides a complete lattice for reasoning about implication,
  in contrast to the antilattice induced by the L\"owner order~\cite{Kadison51,Ramanantoanina24}.
  This observation leads us to a sound and relatively complete proof system for $\myHL$.
\end{enumerate}
All three design decisions are prompted by the complexities of CQC.
Together, they let us build the first Hoare logic for CQC.

\subsubsection*{Related work}
The two closest line of works to ours focus on reasoning about equality of quantum processes.
 The first line of work, is a relational Hoare logic that reasons with infinite-dimensional operators bounded from below~\cite{barthe26}.
While their work provides a foundational theory for relational reasoning in infinite dimensions, it does not specifically address continuous-variable quantum programming languages or reasoning about CV quantum primitives.
The second line of work for formal verification are the
studies on the ZX calculus for CQC~\cite{Shaikh24,Nagayoshi25}.
The basic judgment of this calculus is the equality of linear maps 
(a.k.a.\ quantum circuits).
The ZX calculus is shown complete for the Gaussian fragment of CQC,
which is a tractable fragment of CQC analogous to Clifford gates for DQC.
By contrast, we reason about general properties of quantum states rather than 
specifically the equivalence of quantum circuits, 
and our polynomial assertions generally lie outside the scope of the ZX calculus.
We are not aware of any other formal work about CQC.

Hoare logics for DQC~\cite{hondt06,Ying12,ying24,zhou19,liu19,sun24Q,Sundaram22} leverage two equivalent frameworks for quantum mechanics which describe how quantum systems evolve over time, 
known as the {\em Schrödinger} and {\em Heisenberg} pictures, respectively~\cite{Sakurai20}.
The Schr\"odinger picture evolves quantum states forward in time while fixing observables; 
programs are seen as linear maps that evolve states.
The Heisenberg picture evolves observables backward using the dual of a linear map while fixing the quantum state,
in the same way that proof rules apply the dual of a program to an observable to obtain the weakest precondition.
Both pictures yield the same measurement statistics, from which the soundness of proof rules and the weakness of precondition backpropagation follow.
We also leverage these two pictures. 
Thus our work is inspired and rooted in previous works for DQC Hoare logic. 

\subsection{Contributions} 
In this work, we develop a formal framework for CQC which consists of 
(I)~a universal continuous-variable programming language that includes photon-number measurements in the Fock basis and vacuum resets; and 
(II)~a corresponding assertion language and Hoare logic,
which guarantees finite expected values for all assertions on all reachable program states.
(III)~We also implemented this framework in a tool for computing weakest preconditions and we present the results of applying the tool to several use cases.

\subsubsection*{(I) A universal continuous-variable programming language.}
Our first contribution is a universal~\cite{Weedbrook12} programming language (Section~\ref{sec:pl}) for CQC with a formal semantics that guarantees the existence of finite expectations for all polynomials over the canonical observables.
The instructions are chosen so that, first,
these polynomials are closed under the dual of every program and, second,
the forward semantics of programs stays within the set of well-behaved 
(i.e., Schwartz) density operators.
As far as we know, the only published programming language for CQC is 
{\sc Blackbird}, which is accessible via the Strawberry Fields framework of Xanadu~\cite{Killoran19,Rueda21} but lacks a formal semantics.
Our programming language $\myPL$ expresses only straight-line quantum programs.
This limitation fits the available technology because loops and conditionals are not native to current quantum hardware,
which executes straight-line subroutines that consist only of quantum instructions but may be called from a classical computer.

\subsubsection*{(II) An assertion language based on polynomials with a sound and relatively complete Hoare logic.}
Our second contribution is the first quantum Hoare logic (Section~\ref{sec:hoare})
for CQC.
We show how symmetries in the canonical observables allow us to compute  weakest preconditions by syntactic substitution.
We then introduce a proof system and establish its soundness and relative completeness (relative to an oracle for the validity of assertions).

\subsubsection*{(III) A symbolic calculator of weakest preconditions for continuous-variable quantum programs, and three case studies.}
Our third contribution is the automation of the framework.
We implemented a tool in Python with SymPy, 
which derives a symbolic weakest precondition for a given $\myPL$ program 
and a given polynomial assertion that may contain symbolic constants. 
Our experimental evaluation (Section~\ref{sec:experiments}) of the tool performs three case studies: 
\begin{enumerate}
    \item \textbf{Verification of textbook CQC algorithms.} We verify four published algorithms:
    (i)~homodyne measurement via photon counting \cite{Leonhardt95}, 
    (ii)~a continuous-variable Deutsch-Jozsa/Bernstein-Vazirani oracle~\cite{Jozsa92, Braunstein03}, (iii)~CQC superdense coding~\cite{Ban99,Braunstein00}, and 
    (iv)~CQC state teleportation~\cite{Furusawa11,Braunstein98}. 
    Beyond functional correctness, our framework derives symbolic expressions for the mean and variance of the relevant observables, yielding closed-form characterizations of the intrinsic approximation error present in physically realizable implementations. 
    These derivations recover well-known results from the physics literature~\cite{Braunstein05} while making them available through formal Hoare-style reasoning and weakest precondition computation.
    \item \textbf{Equivalence of continuous-variable quantum programs.} 
    By the Stone–von Neumann theorem and Schur's Lemma, two unitary operators induce the same evolution on all canonical observables if and only if they differ by a global phase~\cite{Neumann31,Ballentine14}.
    More specifically, if two unitary programs agree on the weakest preconditions for the two postconditions $\positionOp=x$ and $\momentumOp=p$, for symbolic real-valued constants $x$ and~$p$, then the two programs agree on all Hoare triples from $\myHL$ (i.e., on all polynomial assertions).
    In our experiments, we use equivalence checking to verify all quantum gate decompositions provided in the Strawberry Fields documentation~\cite{Killoran19}.
    \item \textbf{Resource estimation for classical simulation.} 
    The number operator $\numOp = \frac{\positionOp^2 + \momentumOp^2 -\identity}{2}$ of CQC~\cite{Braunstein05} is an observable whose measurement outcome is discrete and denotes the number of photons of the observed qumode.
    The symbolic mean and variance of the number operator can be obtained by computing weakest preconditions;
    they bound through a one-sided Chebyshev-Cantelli inequality~\cite{Boucheron13} the number of
    photon-number states one must retain to simulate a quantum instruction within a specified target accuracy.
    As far as we know, our formal framework is the first to provide a guaranteed error bound for the classical simulation of continuous-variable quantum programs. 
    Such a bound allows the truncation of the infinite-dimensional Hilbert space to a finite-dimensional space on which DQC can act.
    In this way, we enable the entire DQC formal-methods toolchain to be reused for CQC with soundness guarantees.
\end{enumerate}

%% file: sections/_3_background.tex
\section{Framework}
\label{sec:background}
In finite-dimensional Hilbert spaces, every linear operator is bounded and defined on the whole space, so domain issues do not arise. 
In infinite-dimensional Hilbert spaces, by contrast, many operators of interest are unbounded and therefore must be defined on a proper dense domain. As a result, algebraic operations such as sums, products, and adjoints are only meaningful on suitable intersections of domains, and several familiar finite-dimensional notions must be reformulated. 
In this section, we introduce the linear-algebraic and operator-theoretic notions needed to reason soundly about continuous-variable quantum computing. We then conclude with the relevant notions specific to CQC.

\paragraph{Notation.} We denote by $\mathbb{N}_0, \mathbb{R}, \mathbb{R}^n, \mathbb{C}$ the sets of all nonnegative integers, reals, vectors of real numbers of dimension $n$, and complex numbers respectively. 
The imaginary unit is denoted by $\im = \sqrt{-1}$ (in bold), and 
the conjugate of a complex number $\cnum \in \mathbb{C}$ by $\cnum^*$. 
We denote by $\identity$ and by  $\zeroMatrix$ 
the identity operator and zero operator respectively.

\begin{remark} In the following paragraphs, definitions assume that frequency, mass, and Planck's constant $\hbar$ are equal to 1. 
This is a common practice that simplifies the algebra without compromising the soundness of our logic.
\end{remark}

\paragraph{Vector space $(V)$.} A vector space $V$ is a set of vectors closed under addition and multiplication by scalars~\cite{Ballentine14}.
In this work, we consider vector spaces over complex numbers, and therefore scalars $\cnum \in \mathbb{C}$.
A vector $\qstate \in V$ will often be denoted using Dirac's braket notation $\ket{\qstate}$. 
In contrast, $\bra{\qstate}$ is the linear functional, i.e. a linear map from vectors to scalars, residing in the dual space of $V$.
We often refer to $\bra{\qstate}$ and $\ket{\qstate}$ as a bra and a ket vector respectively.
Given an orthonormal basis $\{\ket{\basis_i}\}_i$ for $V$, we denote the coefficient associated with $\ket{\basis_i}$ by $\qstate(\basis_i)\in \mathbb{C}$.
Given two vectors $\qstate, \qstate'$, its inner product maps two vectors to a scalar: i.e., $\innerProd{\qstate}{\qstate'} = \sum_{\basis_i} \qstate(\basis_i)^* \cdot \qstate'(\basis_i)$.
The norm of a vector $\qstate$ is formally given by $\norm{\qstate} = \sqrt{\innerProd{\qstate}{\qstate}}$~\cite{Ballentine14}.
A linear operator acting on $V$ may be specified as an outer product $\ket{\qstate}\bra{\qstate'}$ of two vectors $\ket{\qstate}, \ket{\qstate'} \in V$, and its action on an arbitrary vector $\ket{\qstate''} \in V$ is $\ket{\qstate}\bra{\qstate'} \ket{\qstate''} = (\innerProd{\qstate'}{\qstate''})\ket{\qstate}$~\cite{Ballentine14}.

\paragraph{Infinite-dimensional Hilbert space $(\hs)$} A Hilbert space $\hs$ is a complete inner product space~\cite{Ballentine14}.
In continuous-variable quantum computing, an $\numModes$-qumode system resides in a Hilbert space $\hs \simeq  L^2(\mathbb{R}^\numModes)$ realized by the set $L^2(\mathbb{R}^\numModes) = \{\qstate: \mathbb{R}^\numModes \rightarrow \mathbb{C} \mid \norm{\qstate} < \infty \}$ of square integrable functions with arity $\numModes$~\cite{Ballentine14}.
According to the principles of quantum mechanics, a closed quantum system of $\numModes$ qumodes is described by a unit {\em (pure quantum) state vector} $\ket{\qstate} \in \hs$~\cite{Ballentine14}.

\paragraph{Tensor product of Hilbert spaces $(\otimes)$}
Given two Hilbert spaces $\hs_1 \simeq L^2(\mathbb{R}^\numModes)$ and $\hs_2 \simeq L^2(\mathbb{R}^m)$, there is a canonical isomorphism $\hs_1\otimes\hs_2 \simeq  L^2(\mathbb{R}^{\numModes+m})$~\cite{Reed12}. The tensor product $\ket{\qstate}\otimes\ket{\qstate'}$ of any two vectors $\ket{\qstate} \in \hs_1$ and $\ket{\qstate'}\in \hs_2$ is the function $(\qstate \otimes \qstate') \in \hs_1\otimes\hs_2$ such that for all $\fvars_1 \in \mathbb{R}^\numModes$ and $\fvars_2 \in \mathbb{R}^m$ we have $(\qstate \otimes \qstate')(\fvars_1, \fvars_2) = \qstate(\fvars_1)\qstate'(\fvars_2)$.

Let $\{\ket{\basis_{1i}}\}_i$ and $\{\ket{\basis_{2j}}\}_j$ be orthonormal bases for $\hs_1$ and $\hs_2$, respectively. Then the tensor product of these two Hilbert spaces is the closure of the vector space spanned by the tensor products of their bases~\cite{Reed12}. Formally,
$
    \hs_1 \otimes \hs_2 = \overline{\operatorname{span}}(\{\ket{\basis_{1i}}\otimes \ket{\basis_{2j}}\}_{ij}).
$
We write $\ket{\qstate}\ket{\qstate'}$ or $\ket{\qstate\qstate'}$ for the tensor products of two vectors $\ket{\qstate}$ and $\ket{\qstate'}$.

\begin{remark}
In the remainder of this work, we fix our Hilbert space $\hs\simeq L^2(\mathbb{R}^\numModes)$ and its corresponding dense domain $\denseDomain{\numModes}$.
\end{remark}

\paragraph{The Schwartz space $(\denseDomain{\numModes})$.}
A dense domain is a subset of vectors in $\hs$ that allows one to approximate any member arbitrarily closely. Dense domains are essential for giving concise, well-defined mathematical definitions of operators on infinite-dimensional Hilbert spaces. In particular, the Schwartz space $\denseDomain{\numModes} \subset L^2(\mathbb{R}^\numModes)$ is a dense domain for $\hs\simeq L^2(\mathbb{R}^\numModes)$ that consists of smooth functions whose derivatives of all orders are rapidly decreasing. This space is formally defined by the set
    $\denseDomain{\numModes} = \{\qstate \in \hs \mid \forall \vvals{m},\vvals{k} \in (\mathbb{N}_0)^{\numModes}.~\sup_{\fvars \in \mathbb{R}^\numModes}|\fvars^{\vvals{m}} (\partial^{\vvals{k}} \qstate)(\fvars)| < \infty\},
$
where $\fvars^{\vvals{m}} = x_0^{m_0}x_1^{m_1}\cdots x_{\numModes-1}^{m_{\numModes-1}}$ and $\partial^{\vvals{k}} = \partial_{x_0}^{k_0}\partial_{x_1}^{k_1}\cdots \partial_{x_{\numModes-1}}^{k_{\numModes-1}}$ are the partial derivatives~\cite{schwartz57,Becnel15}. This is a well-studied dense domain~\cite{Hall13,Carcassi25} with several advantages~\cite{Carcassi25,Becnel15}, as it guarantees finite expectation values for all polynomials over canonical observables of continuous-variable quantum computing. At the same time, it is flexible enough to ensure that the entire Hilbert space can be approximated to any desired accuracy.

\paragraph{The Fock basis $(\fockBasis)$}
Let $\hermitePol_i$ be the $i$-th (physicist's) Hermite polynomial. The Fock basis is a discrete and countable orthonormal basis for $\hs$. For a Hilbert space $\hs\simeq L^2(\mathbb{R})$, the Fock basis is given by the infinite set $\{\ket{i}\}_{i\in \mathbb{N}_0}$, where $\ket{i}$ is a square-integrable function in the Schwartz space~\cite{Becnel15} defined by $\fockBF_i(x) = \frac{ e^{-\frac{x^2}{2}} \cdot \hermitePol_{i}(x)}{\pi^{1/4}\sqrt{2^i\cdot i!}}$~\cite{Ballentine14, Sakurai20}, for all nonnegative integers $i \in \mathbb{N}_0$. For arbitrary Hilbert spaces $\hs\simeq L^2(\mathbb{R}^\numModes)$ with $\numModes > 1$, the Fock basis $\fockBasis =  \{\ket{\vvals{\imath}}=\ket{\imath_0}\otimes\cdots\otimes\ket{\imath_{\numModes-1 }}:\vvals{\imath}\in\mathbb{N}_0^{\numModes} \}$ is obtained by taking tensor products of elements of the Fock basis for $L^2(\mathbb{R})$. The Fock basis state $\ket{0}^{\otimes \numModes}$ is often referred to as {\em vacuum}.

\begin{proposition}[Characterization of the Schwartz space in the Fock basis]
Let $\qstate\in \hs$ be a vector with Fock basis expansion $\ket{\qstate} = \sum_{\vvals{\imath}\in \mathbb{N}_0^\numModes}\cnum_{\vvals{\imath}}\ket{\vvals{\imath}}$. Then $\qstate \in \denseDomain{\numModes}$ iff for all $\vvals{m} \in \mathbb{N}_0^\numModes$ we have $\sup_{\vvals{\imath} \in \mathbb{N}_0^\numModes} \vvals{\imath}^{~\vvals{m}}|\cnum_{\vvals{\imath}}| < \infty$~\cite{Becnel15}.
\label{prop:domainFock}
\end{proposition}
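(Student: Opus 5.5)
The plan is to pass through the ladder operators and the harmonic-oscillator Hamiltonian, which act diagonally (or by shifts) on the Fock basis. The Fock functions are $\fockBF_i(x)=\pi^{-1/4}(2^i i!)^{-1/2}e^{-x^2/2}\hermitePol_i(x)$. Put $\destroyOp=(\positionOp+\im\momentumOp)/\sqrt{2}$ and $\createOp=(\positionOp-\im\momentumOp)/\sqrt{2}$, where $\positionOp$ is multiplication by $x$ and $\momentumOp=-\im\,\partial_x$. From the Hermite recurrences one gets, in each mode,
\[
\destroyOp\ket{i}=\sqrt{i}\ket{i-1},\qquad \createOp\ket{i}=\sqrt{i+1}\ket{i+1},\qquad (\numOp+\identity)\ket{i}=(i+1)\ket{i}.
\]
Moreover $\numOp+\identity=\tfrac12(-\partial_x^2+x^2+\identity)$ is the operator $\tfrac12(-\Delta+|x|^2+1)$ acting mode by mode. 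The whole proof then rests on a single intermediate characterization: $\qstate\in\denseDomain{\numModes}$ iff for every $\vvals{m}$ the vector $\prod_j(\numOp_j+\identity)^{m_j}\qstate$ lies in $L^2$. Equivalently, the condition is $\sum_{\vvals{\imath}}\prod_j(\imath_j+1)^{2m_j}|\cnum_{\vvals{\imath}}|^2<\infty$ for all $\vvals{m}$. I will prove that both sides of the proposition are equivalent to this weighted-$\ell^2$ condition.

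\textbf{Sequence side.} Call a sequence rapidly decreasing if $\sup_{\vvals{\imath}}\vvals{\imath}^{\vvals{m}}|\cnum_{\vvals{\imath}}|<\infty$ for all $\vvals{m}$. I read $\vvals{\imath}^{\vvals{m}}$ with shifted entries $(\imath_j+1)$, or handle the zero entries separately by taking $\vvals{m}=0$ in those coordinates. Rapid decrease is then equivalent to the weighted-$\ell^2$ condition for all $\vvals{m}$. If the supremum with exponent $\vvals{m}+\vvals{2}$ is finite, each squared term is bounded by $C^2\prod_j(\imath_j+1)^{-4}$, which is summable over $\mathbb{N}_0^\numModes$. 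Conversely, a convergent weighted sum has bounded terms. This step is elementary.

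\textbf{Function side.} If $\qstate\in\denseDomain{\numModes}$, then $(\numOp_j+\identity)^{m_j}$ is a polynomial differential operator, so applying the product of these to $\qstate$ gives again a Schwartz function, hence an $L^2$ function. Since each $\numOp_j$ is symmetric on Schwartz functions and each $\ket{\vvals{\imath}}$ is itself Schwartz, pairing with $\ket{\vvals{\imath}}$ gives Fock coefficients $\prod_j(\imath_j+1)^{m_j}\cnum_{\vvals{\imath}}$. Parseval then yields the weighted-$\ell^2$ condition.

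\textbf{The converse is the main obstacle}: from rapidly decreasing coefficients, show that $\qstate$ is smooth and that $\sup|\fvars^{\vvals{m}}\partial^{\vvals{k}}\qstate|<\infty$. I would proceed as follows.
\begin{enumerate}
\item Write $x_j=(\destroyOp_j+\createOp_j)/\sqrt{2}$ and $\partial_{x_j}=(\destroyOp_j-\createOp_j)/\sqrt{2}$. Then $\fvars^{\vvals{m}}\partial^{\vvals{k}}$ expands as a finite sum of ladder monomials.
\item Each ladder monomial of total degree $d$ maps $\ket{\vvals{\imath}}$ to a multiple of a shifted basis vector, with coefficient at most $\prod_j(\imath_j+d+1)^{d/2}$. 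It therefore sends rapidly decreasing coefficient sequences to rapidly decreasing ones.
\item Termwise application is justified because the partial sums $\qstate_K=\sum_{|\vvals{\imath}|\le K}\cnum_{\vvals{\imath}}\ket{\vvals{\imath}}$ are Schwartz. By step 2, $\fvars^{\vvals{m}}\partial^{\vvals{k}}\qstate_K$ converges in $L^2$ for every pair $(\vvals{m},\vvals{k})$.
\item To get sup-norm bounds, use Cramér's inequality $\sup_x|\fockBF_i(x)|\le\pi^{-1/4}$, which tensorizes to all of $\fockBasis$. This shows that $\fvars^{\vvals{m}}\partial^{\vvals{k}}\qstate_K$ converges uniformly, since its coefficients are absolutely summable by rapid decrease.
\item Uniform convergence of all derivatives gives $\qstate\in C^\infty$, with $\fvars^{\vvals{m}}\partial^{\vvals{k}}\qstate$ equal to the bounded uniform limit. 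Hence $\qstate\in\denseDomain{\numModes}$.
\end{enumerate}
If Cramér's bound is unavailable, I would instead use the Sobolev embedding $H^{s}(\mathbb{R}^\numModes)\subset L^\infty$ for $s>\numModes/2$. Together with $L^2$-boundedness of $\fvars^{\vvals{m}}\partial^{\vvals{k}}$ applied to $\qstate$ for all multi-indices, this gives the same conclusion.
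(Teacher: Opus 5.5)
Your proof is correct. Since the paper gives no argument of its own here and only cites \cite{Becnel15}, what you have written is a self-contained version of the standard $N$-representation proof that the citation stands in for. The forward direction uses the symmetry of $\widehat{N}_j+\mathbb{I}$ on Schwartz functions together with Parseval. The converse writes $x^{\vec m}\partial^{\vec k}$ as a finite sum of ladder monomials, which act as weighted shifts with polynomially growing weights. An $n$-independent sup bound on the Hermite functions then upgrades $\ell^1$-summability of the transformed coefficients to uniform convergence of every $x^{\vec m}\partial^{\vec k}\alpha_K$, which gives a smooth representative with all Schwartz seminorms finite.

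Compared with the bare citation, your version adds two things the statement glosses over. First, it makes the $0^0$ convention in $\vec\imath^{\,\vec m}$ explicit. The literal condition with all $m_j>0$ says nothing about indices that have a zero entry, and you correctly recover the shifted weights $\prod_j(\imath_j+1)^{m_j}$ by letting the support of $\vec m$ vary. Second, it makes explicit that rapid decrease implies $\sum_{\vec k}|z_{\vec k}|<\infty$. This is exactly the consequence the paper draws from this proposition in its proof of Proposition~\ref{prop:fock-decay}.

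The precise constant you attribute to Cram\'er's inequality does not matter. Any bound on $\sup_x|\vartheta_i(x)|$ that is uniform in $i$, or even grows only polynomially in $i$, suffices for your step~4.
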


\paragraph{Densely defined operators}
Unlike the finite-dimensional case, operators might not be defined on the entire Hilbert space~\cite{Ballentine14,Reed12,Blanchard15}. The domain $\domainOp{\matrixOp} = \{\ket{\qstate} \in \hs \mid \matrixOp\ket{\qstate} \in \hs\}$ of an operator $\matrixOp$ is the set of all vectors that are mapped to a vector in the Hilbert space $\hs$. We say that $\matrixOp$ is {\em densely defined} if $\denseDomain{\numModes} \subseteq \domainOp{\matrixOp}$.

\paragraph{Adjoint}
The {\em adjoint} of a densely defined operator $\matrixOp$ is the unique operator $\matrixOp^\dagger$ whose domain $\domainOp{\matrixOp^\dagger}$ is the set of all vectors $\ket{\qstate} \in \hs$ for which there exists a vector $\ket{\qstate'} \in \hs$ such that
$
\innerProdF{\matrixOp\ket{\qstate''}}{\qstate} = \innerProdF{\qstate''}{\qstate'}
$
for all $\ket{\qstate''}\in \domainOp{\matrixOp}$. We then define the adjoint by its action $\matrixOp^\dagger \ket{\qstate} = \ket{\qstate'}$ on all $\ket{\qstate} \in \domainOp{\matrixOp^\dagger}$.

\paragraph{Densely defined operators ($\allOps$)}
In this work, we consider the set $$\allOps = \{\matrixOp : \domainOp{\matrixOp} \rightarrow \hs \mid  \forall \ket{\qstate}\in \denseDomain{\numModes}.~\matrixOp\ket{\qstate}, \matrixOp^\dagger\ket{\qstate} \in \denseDomain{\numModes}\}$$ of densely defined operators, closed under the adjoint, and where the Schwartz space is invariant. Operators in $\allOps$ guarantee existence of an adjoint that follows the same properties as those for finite-dimensional Hilbert spaces~\cite{qc_bible}.

If for all vectors $\qstate \in \denseDomain{\numModes}$, we have $\innerProd{\matrixOp\ket{\qstate}}{\qstate} = \innerProd{\matrixOp'\ket{\qstate}}{\qstate}$ then we write $\matrixOp = \matrixOp'$. The {\em commutator} of two linear operators $\matrixOp, \matrixOp' \in \allOps$ is defined by $[\matrixOp, \matrixOp'] = \matrixOp\matrixOp' - \matrixOp'\matrixOp$. We denote by $\matrixOp\geq \zeroMatrix$ if $\matrixOp \in \allOps$ is {\em positive semi-definite}. Formally, $\matrixOp\geq \zeroMatrix$ iff for all $\ket{\qstate} \in \denseDomain{\numModes}$ we have $\bra{\qstate}\matrixOp\ket{\qstate} \in \mathbb{R}$ and $\bra{\qstate}\matrixOp\ket{\qstate} \geq 0$.

\paragraph{Observables ($\allOpsA$)}
An {\em observable} is a linear operator $\obsOp \in \allOps$ acting on $\hs$ that is {\em self-adjoint} on the dense domain $\denseDomain{\numModes}$, i.e., $\obsOp = \obsOp^\dagger$. Self-adjoint operators guarantee that the expectation value is a real number. We denote the set of all self-adjoint operators as $\allOpsA = \{\obsOp \in \allOps \mid \obsOp = \obsOp^\dagger\}$.

\paragraph{Traces ($\trace(\cdot)$)}
Given an orthonormal basis $\basisSet = \{\basis_i\}_i \subseteq \denseDomain{\numModes}$, the {\em trace} $\trace_{\basisSet}(\matrixOp)$ of $\matrixOp$ is defined by $\trace_{\basisSet}(\matrixOp) = \sum_{i=0}^{\infty}\bra{\basis_i} \matrixOp \ket{\basis_i}$. We say $\matrixOp \in \allOps$ is {\em trace class} iff the trace exists regardless of the chosen orthonormal basis set $\basisSet$. Trace class operators always converge to the same value regardless of the chosen basis~\cite{Ballentine14}. Therefore, for a trace class operator $\matrixOp$ we simply write $\trace(\matrixOp)$. The following properties of traces hold for operators in $\allOps$, and are relevant for the proofs in the following sections.
\begin{proposition}[Properties of traces]
For all operators $\matrixOp,\matrixOp'\in \allOps$ the following holds.
\begin{itemize}
    \item \textit{Additive property.} If $\matrixOp$ and $\matrixOp'$ are trace class, then $\trace(\matrixOp + \matrixOp') = \trace(\matrixOp) + \trace(\matrixOp')$.
    \item \textit{Tensor product property.} If $\matrixOp$ and $\matrixOp'$ are trace class, $\trace(\matrixOp\otimes \matrixOp') = \trace(\matrixOp)\trace(\matrixOp')$.
    \item \textit{Cyclic property.} If $\matrixOp$ is trace class, and $\matrixOp'$ is bounded, then $\trace(\matrixOp\matrixOp') = \trace(\matrixOp'\matrixOp)$.
\end{itemize}
\label{prop:properties_traces}
\end{proposition}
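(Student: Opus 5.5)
The three properties are standard facts about trace-class operators on a separable Hilbert space. I will derive each one from the defining series $\trace(\matrixOp)=\sum_i\bra{\basis_i}\matrixOp\ket{\basis_i}$ together with basis-independence, which the definition of trace class grants us. Throughout I will use orthonormal bases contained in $\denseDomain{\numModes}$, typically the Fock basis $\fockBasis$ (whose elements lie in the Schwartz space, as noted above). Since operators in $\allOps$ leave $\denseDomain{\numModes}$ invariant, every matrix element $\bra{\basis_i}\matrixOp\ket{\basis_j}$ that appears is well defined.

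\emph{Additivity.} Fix any orthonormal basis $\basisSet\subseteq\denseDomain{\numModes}$. By linearity, $\bra{\basis_i}(\matrixOp+\matrixOp')\ket{\basis_i}=\bra{\basis_i}\matrixOp\ket{\basis_i}+\bra{\basis_i}\matrixOp'\ket{\basis_i}$. Both series on the right converge, and their sums do not depend on $\basisSet$, so the series for $\matrixOp+\matrixOp'$ converges to $\trace(\matrixOp)+\trace(\matrixOp')$ in every basis. Hence $\matrixOp+\matrixOp'$ is trace class and its trace is the sum.

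\emph{Tensor product.} I use the product basis $\{\ket{\basis_{1i}}\otimes\ket{\basis_{2j}}\}_{ij}$, which is an orthonormal basis of $\hs_1\otimes\hs_2$ by the closure-of-span description given earlier. Its elements lie in the Schwartz space, because a tensor product of Schwartz functions is Schwartz. The diagonal entries factor as $\bra{\basis_{1i}}\matrixOp\ket{\basis_{1i}}\,\bra{\basis_{2j}}\matrixOp'\ket{\basis_{2j}}$. The double series is then a Cauchy product of two convergent series, so it converges to $\trace(\matrixOp)\trace(\matrixOp')$. Independence of the basis follows from the standard fact that the tensor product of trace-class operators is trace class, with trace norm bounded by the product of the trace norms.

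\emph{Cyclicity.} This is the main obstacle. Basis-independent convergence of the diagonal series has to be upgraded to absolute summability before the order of summation can be exchanged. I would first establish that trace class in the paper's sense implies $\sum_i|\bra{\basis_i}\matrixOp\ket{\basis_i}|<\infty$ and, more strongly, that $\matrixOp$ admits a canonical decomposition $\matrixOp=\sum_k s_k\ket{u_k}\bra{v_k}$ with $\sum_k s_k<\infty$. For each rank-one term, inserting resolutions of the identity gives
\[
\trace(\ket{u}\bra{v}\matrixOp')=\bra{v}\matrixOp'\ket{u}=\trace(\matrixOp'\ket{u}\bra{v}).
\]
Because $\matrixOp'$ is bounded, $|\bra{v_k}\matrixOp'\ket{u_k}|\le\norm{\matrixOp'}$, so the sum over $k$ converges absolutely and the identity transfers to $\matrixOp$. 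If establishing the singular-value decomposition directly from the paper's definition is too delicate, I would instead cite the standard result from \cite{Reed12}, which states that these properties hold for trace-class operators on any separable Hilbert space, and observe that operators in $\allOps$ are a special case.
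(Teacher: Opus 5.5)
The paper does not prove this proposition. It quotes it as textbook background, with ``trace class'' tacitly carrying its standard meaning (bounded, $\trace|\matrixOp|<\infty$); under that reading, a citation such as \cite{Reed12} is all the paper relies on. You instead try to derive the properties from the paper's literal definition: the diagonal series converges in every orthonormal basis $\basisSet\subseteq\denseDomain{\numModes}$. On that route there is a genuine gap.

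Additivity is fine. The tensor and cyclic parts, however, both rest on a lemma you announce but never prove: that trace class in this sense forces $\matrixOp$ to be bounded with a decomposition $\matrixOp=\sum_k s_k\ket{u_k}\bra{v_k}$, $\sum_k s_k<\infty$. This lemma is the whole content of the cyclic property. Your basis-independence claim for $\matrixOp\otimes\matrixOp'$, which goes through trace norms, needs it too.

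Falling back on \cite{Reed12}, with the remark that $\allOps$ is ``a special case'', does not bridge the gap. Those results concern bounded operators of finite trace norm. An element of $\allOps$ may be unbounded, and the paper's condition only constrains diagonal sums over Schwartz bases, so the two notions must be reconciled first. Boundedness is not cosmetic: if $\matrixOp$ were unbounded, $\matrixOp\matrixOp'$ need not even be defined on the basis vectors, since a bounded $\matrixOp'$ need not map $\denseDomain{\numModes}$ into $\domainOp{\matrixOp}$.

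Part of the missing lemma is cheap. Any reordering of a basis in $\denseDomain{\numModes}$ is again such a basis, so Riemann's rearrangement theorem gives $\sum_i|\bra{\basis_i}\matrixOp\ket{\basis_i}|<\infty$ in every such basis. Your tensor computation needs exactly this, and as written it is wrong for two reasons. First, a Cauchy product of two merely convergent series need not converge to the product; Mertens' theorem needs one factor to converge absolutely. Second, the trace over the product basis is a sum along an arbitrary enumeration of the pairs $(i,j)$, not along the diagonals $i+j=\text{const}$. With absolute summability of both factors, every enumeration of $\{a_ib_j\}$ sums to $\trace(\matrixOp)\trace(\matrixOp')$.

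The remaining step needs real work, for instance by contradiction. If $\matrixOp$ is unbounded on $\denseDomain{\numModes}$, then one of its symmetric parts $(\matrixOp+\matrixOp^\dagger)/2$, $(\matrixOp-\matrixOp^\dagger)/(2\im)$ has an unbounded quadratic form on $\denseDomain{\numModes}\cap F^\perp$ for every finite-dimensional $F\subseteq\denseDomain{\numModes}$. Interleaving such vectors with a Gram--Schmidt sweep of a dense sequence then produces an orthonormal basis inside $\denseDomain{\numModes}$ with $\sum_i|\bra{\basis_i}\matrixOp\ket{\basis_i}|=\infty$. An analogous block construction on the positive and negative spectral parts of a bounded self-adjoint part gives finite trace norm.

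Only then does your singular-value decomposition exist. After that, the rank-one identity together with $\sum_k\sum_i s_k|\langle\basis_i,u_k\rangle|\,|\langle(\matrixOp')^\dagger v_k,\basis_i\rangle|\le\norm{\matrixOp'}\sum_k s_k$ justifies exchanging the sums, and your cyclicity argument goes through.
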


\paragraph{Canonical operators for qumodes.}
A qumode is realized by a physical system called {\em one-dimensional quantum harmonic oscillator (QHO)}~\cite{Braunstein05,Killoran19,Weedbrook12}. Such systems are defined via the two canonical observables $\positionOp$ and $\momentumOp$ --collectively referred to as {\em quadratures}, and individually as position and momentum respectively-- that satisfy the {\em canonical commutation relation} $[\positionOp, \momentumOp] = \im\identity$. For all vectors $\qstate \in \domainOp{\positionOp}$, the position operator $\positionOp$ is defined as $(\positionOp(\qstate))(x) = x\qstate(x)$. The standard representation of the momentum operator consistent with the commutation relation is $\momentumOp = -\im\frac{d}{dx}$.

Alternatively, a QHO can also be defined via two canonical operators $\destroyOp$ and $\createOp$ --called {\em ladder operators}--, that satisfy {\em the canonical commutation relation} $[\destroyOp, \createOp] = \identity$. Formally, ladder operators are defined as follows.
\begin{equation}
    \destroyOp = \sum_{i=0}^{\infty}\sqrt{i+1} \ket{i}\bra{i+1} \quad\quad
    \createOp = \sum_{i=0}^{\infty}\sqrt{i+1}\ket{i+1}\bra{i}.
    \label{eq:ladder_fock}
\end{equation}
Since $\destroyOp$ maps $\ket{i+1}$ to $\sqrt{i+1}\ket{i}$, and $\createOp$ maps $\ket{i}$ to $\sqrt{i+1}\ket{i+1}$, they are also referred to as {\em annihilation} and {\em creation} operators respectively. The Fock basis is the natural discrete basis that we can use in QHOs to perform computation, and the observable $\numOp = \createOp\destroyOp = \sum_{i \in \mathbb{N}_0} i \ket{i}\bra{i} = (\positionOp^2 + \momentumOp^2 - 1)/2$ describing photon-number measurements can be expressed as a polynomial over ladder operators using the identities from Equation~\ref{eq:id_canon_ops}.
\begin{equation}
    \positionOp = \frac{1}{\sqrt{2}}(\destroyOp + \createOp),\quad\quad 
    \momentumOp = \frac{1}{\im \sqrt{2}}(\destroyOp - \createOp),\quad\quad
    \destroyOp = \frac{1}{\sqrt{2}}(\positionOp + \im\momentumOp), \quad\quad 
    \createOp = \frac{1}{\sqrt{2}}(\positionOp - \im\momentumOp).
    \label{eq:id_canon_ops}
\end{equation}

The discrete nature of the ladder operators and their symmetries are later used to systematically compute weakest preconditions on $\myHL$ and prove closure properties of $\myPL$. We prove the following proposition in the Appendix,  Section~\ref{appendix:prop:powers_ladder}.
\begin{proposition}[Powers of ladder operators.] For all nonnegative integers $m \in \mathbb{N}_0$, we have 
    \begin{equation}
        \destroyOp^m = \sum_{i=0}^{\infty} \left(\prod_{j = 1}^m\sqrt{(i+j)}\right) \ket{i}\bra{i+m}, \quad\quad \text{ and } \quad\quad (\createOp)^m = \sum_{i=m}^{\infty} \left(\prod_{j = 0}^{m-1}\sqrt{i-j}\right) \ket{i}\bra{i-m}.
    \end{equation}    
    \label{prop:powers_ladder}
\end{proposition}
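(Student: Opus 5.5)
We prove both identities by induction on $m$, using only the Fock-basis definitions of $\destroyOp$ and $\createOp$ in Equation~\ref{eq:ladder_fock} and the orthonormality relation $\braket{i|j} = \delta_{ij}$.

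\textbf{Base case.} For $m=0$ both products are empty and equal $1$. The sums then reduce to $\sum_{i\ge 0}\ket{i}\bra{i}=\identity$, which matches $\destroyOp^0=(\createOp)^0=\identity$ on $\denseDomain{\numModes}$.

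\textbf{Inductive step for $\destroyOp$.} Write $\destroyOp^{m+1}=\destroyOp\,\destroyOp^m$ and expand both factors:
$$\Big(\sum_{k}\sqrt{k+1}\ket{k}\bra{k+1}\Big)\Big(\sum_i c^{(m)}_i\ket{i}\bra{i+m}\Big),$$
where $c^{(m)}_i=\prod_{j=1}^m\sqrt{i+j}$. The inner product $\braket{k+1|i}$ forces $i=k+1$, so the product becomes
$$\sum_k \sqrt{k+1}\,c^{(m)}_{k+1}\ket{k}\bra{k+m+1}.$$
Now reindex $j\mapsto j+1$ inside $c^{(m)}_{k+1}$:
$$\sqrt{k+1}\prod_{j=1}^m\sqrt{k+1+j}=\prod_{j=1}^{m+1}\sqrt{k+j}=c^{(m+1)}_k.$$

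\textbf{Inductive step for $\createOp$.} Write $(\createOp)^{m+1}=\createOp(\createOp)^m$ and expand in the same way. The factor $\createOp$ contributes $\sum_k\sqrt{k+1}\ket{k+1}\bra{k}$, and contracting with $\ket{i}\bra{i-m}$ forces $k=i$. This gives
$$\sum_{i\ge m}\sqrt{i+1}\prod_{j=0}^{m-1}\sqrt{i-j}\;\ket{i+1}\bra{i-m}.$$
Substitute $i'=i+1$, so that $i'$ ranges over $i'\ge m+1$. The coefficient becomes
$$\sqrt{i'}\prod_{j=0}^{m-1}\sqrt{i'-1-j}=\prod_{j=0}^{m}\sqrt{i'-j},$$
and the ket-bra becomes $\ket{i'}\bra{i'-(m+1)}$, which is the claimed form. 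Alternatively, the $\createOp$ identity follows from the $\destroyOp$ one by taking adjoints, since $(\createOp)^m=(\destroyOp^m)^\dagger$, and reindexing $i\mapsto i-m$.

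\textbf{Main obstacle: exchanging the infinite sums.} The formal product of two infinite sums was collapsed term by term, which needs justification in infinite dimensions. We handle this by computing matrix elements instead. For Fock states we check
$$\bra{k}\destroyOp^{m}\ket{l}=\delta_{l,k+m}\,c^{(m)}_k,$$
which follows by induction from the fact that $\destroyOp\ket{l}=\sqrt{l}\ket{l-1}$ is a finite expression. Every $\ket{\qstate}\in\denseDomain{\numModes}$ has rapidly decaying coefficients by Proposition~\ref{prop:domainFock}, and the weights $c^{(m)}_i$ grow only polynomially, like $(i+m)^{m/2}$. Hence both sides, applied to such a $\ket{\qstate}$, are absolutely convergent series with identical coefficients in the Fock basis. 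This shows the operators agree on the Schwartz space, which is the sense of equality used in the paper.
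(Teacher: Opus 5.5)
Your proof is correct and is essentially the paper's argument: induction on $m$ with the identity as base case, appending one more ladder factor and collapsing the double sum via $\langle i | j\rangle=\delta_{ij}$. The differences are minor: you multiply on the left ($\destroyOp\,\destroyOp^m$) where the paper multiplies on the right ($\destroyOp^m\destroyOp$), which only changes the reindexing, and your matrix-element check on $\denseDomain{\numModes}$ justifying the collapse of the infinite sums is a point the paper leaves implicit.
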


\paragraph{Canonical operators for multi-qumodes.}
A multi-qumode (a.k.a. multimode) system of $\numModes$ qumodes is realized by $\numModes$ one-dimensional quantum harmonic oscillators~\cite{Braunstein05,Killoran19,Weedbrook12}. The $i$-th mode has the following canonical operators
\begingroup
\small
\begin{equation}
    \destroyOp_i = \identity^{\otimes i}\otimes \destroyOp\otimes \identity^{\otimes \numModes - i -1}, \quad\quad
    \createOp_i = \identity^{\otimes i}\otimes \createOp\otimes \identity^{\otimes \numModes - i -1}, \quad\quad
    \positionOp_i = \identity^{\otimes i}\otimes \positionOp\otimes \identity^{\otimes \numModes - i -1}, \quad\quad 
    \momentumOp_i = \identity^{\otimes i}\otimes \momentumOp\otimes \identity^{\otimes \numModes - i -1},
    \label{eq:modes_canon_ops}
\end{equation}
\endgroup
for all $0 \leq i < \numModes$. These canonical operators satisfy the following {\em canonical commutation relations} (i) $[\positionOp_i, \momentumOp_i] = \im\identity$, and (ii) $[\positionOp_i, \positionOp_j] = [\momentumOp_i, \momentumOp_j] =[\momentumOp_i, \positionOp_j] = 0$ for all $0 \leq i,j < \numModes$ and $i\neq j$. We denote by $\quadOps = \{\positionOp_i, \momentumOp_i \mid 0 \leq i < \numModes\}$ the set of all quadrature operators.
Since quadrature operators and ladder operators can be written in terms of each other, and we can always use the commutation relations to rearrange terms, the following Proposition~\ref{prop:normalized-poly} follows (see proof in Appendix Section~\ref{appendix:prop:normalized-poly}).

\begin{proposition}[Polynomial ladder operators normal-ordering] Every $\matrixOp$ in $\allPolyOps$ can be written as
$
\matrixOp=\sum_{\vec k,\vec m\in\mathbb{N}_0^\numModes}\cnum_{\vec k\vec m}\,
\matrixOp_{\vec k\vec m}$, 
where $\cnum_{\vec k\vec m} \in \mathbb{C}$ and
$
        \matrixOp_{\vec k\vec m}:=\prod_{i=0}^{\numModes-1}(\destroyOp_i)^{k_i}(\createOp_i)^{m_i}
$.
\label{prop:normalized-poly}
\end{proposition}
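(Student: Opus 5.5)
The plan is to argue by structural induction on polynomials over the quadratures. Here $\allPolyOps$ is read as the set of finite complex linear combinations of finite products (words) of elements of $\quadOps$. Because every element of $\allPolyOps$ is a finite sum of scalar multiples of such words, and the claimed normal form is closed under linear combinations, it suffices to treat a single word $\quadOp_1\quadOp_2\cdots\quadOp_\ell$ with each $\quadOp_j\in\quadOps$, and to show that it equals a \emph{finite} linear combination of the monomials $\matrixOp_{\vec k\vec m}$.

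First substitute Equation~\ref{eq:id_canon_ops}, applied mode-wise via Equation~\ref{eq:modes_canon_ops}: $\positionOp_i=\frac{1}{\sqrt2}(\destroyOp_i+\createOp_i)$ and $\momentumOp_i=\frac{1}{\im\sqrt2}(\destroyOp_i-\createOp_i)$. Distributing the product turns the word into a finite linear combination of words in the letters $\{\destroyOp_i,\createOp_i\}_{0\le i<\numModes}$. Next, letters belonging to different modes commute, because they act on different tensor factors. So each such word can be reordered, without changing any coefficient, into a product over $i=0,\dots,\numModes-1$ of single-mode words in $\destroyOp_i,\createOp_i$. This reduces the problem to the single-mode claim: every finite word $w$ in $\destroyOp,\createOp$ is a finite linear combination of $\destroyOp^{k}(\createOp)^{m}$. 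Taking the product over modes of these single-mode expansions then gives exactly the form $\sum\cnum_{\vec k\vec m}\matrixOp_{\vec k\vec m}$.

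The single-mode claim is proved by induction on the length of $w$, and within a fixed length by induction on the number of inversions, that is, pairs where a $\createOp$ stands to the left of an $\destroyOp$. If there are no inversions, $w$ is already of the form $\destroyOp^k(\createOp)^m$. Otherwise $w$ contains an adjacent factor $\createOp\destroyOp$, which we rewrite using $[\destroyOp,\createOp]=\identity$ as $\createOp\destroyOp=\destroyOp\createOp-\identity$. The first resulting word has the same length and one fewer inversion. The second is strictly shorter. Both inductive hypotheses then apply, and the rewriting terminates after finitely many steps with a finite sum.

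The main obstacle is making sure these algebraic identities hold as operator identities in the sense the paper uses, namely equality of $\innerProd{\matrixOp\ket{\qstate}}{\qstate}$ on $\denseDomain{\numModes}$, rather than only formally. I would handle this by first noting that $\destroyOp,\createOp$, and hence all words in them, map $\denseDomain{\numModes}$ into itself. This follows from Proposition~\ref{prop:domainFock} together with Proposition~\ref{prop:powers_ladder}: the coefficients get multiplied by factors that grow at most polynomially in $\vec\imath$, so the Schwartz decay condition is preserved. It follows that the commutation relation, which can be checked directly on Fock vectors from Equation~\ref{eq:ladder_fock} and extended by linearity and continuity to $\denseDomain{\numModes}$, may be applied inside any word. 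In the same way, the cross-mode commutations hold on $\denseDomain{\numModes}$, since elementary tensors of Fock states span a dense subset and both sides preserve $\denseDomain{\numModes}$.
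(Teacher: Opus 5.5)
Your proposal is correct and takes the same route as the paper: substitute $\positionOp_i=(\destroyOp_i+\createOp_i)/\sqrt2$ and $\momentumOp_i=(\destroyOp_i-\createOp_i)/(\im\sqrt2)$, then reorder using the commutation relations $[\destroyOp_i,\createOp_l]=\delta_{il}\identity$ together with commutation between distinct modes. Your lexicographic induction on word length and the number of $\createOp\destroyOp$ inversions, and your check that these relations hold on $\denseDomain{\numModes}$, make explicit the termination and domain details that the paper's one-line proof leaves implicit.
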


\paragraph{Notation.} Let $\allPolyOps$ denote all finite polynomials over $\quadOps$, and $\polyOps = \{\obsOp \in \allPolyOps \mid \obsOp = \obsOp^\dagger \}$ the set of all such polynomials that are self-adjoint. We sometimes denote such polynomials as a function $\poly(\allPosOp, \allMomOp) \in \allPolyOps$ where $\allPosOp = (\positionOp_0, \positionOp_1, \cdots, \positionOp_{\numModes-1})$ is a tuple consisting of all position operators, and similarly, $\allMomOp = (\momentumOp_0, \momentumOp_1, \cdots, \momentumOp_{\numModes-1})$ is the tuple of all momentum operators. Given an operator $\matrixOp$, we denote by $\matrixOp \allPosOp = (\matrixOp\positionOp_0, \matrixOp\positionOp_1, \cdots, \matrixOp\positionOp_{\numModes-1})$ the component-wise multiplication.

\begin{lemma}[Invariance of $\denseDomain{\numModes}$ for polynomials]
Let $\matrixOp \in \allPolyOps$ be a polynomial over canonical operators. The Schwartz space $\denseDomain{\numModes}$ is an invariant subspace for $\matrixOp$~\cite{Becnel15,Gieres00,Parviz24,Carcassi25}, and consequently $\allPolyOps \subseteq \allOps$ and $\polyOps \subseteq \allOpsA$.
\label{lemma:inv-pol}
\end{lemma}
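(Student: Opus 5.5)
The argument reduces the statement to the action of the generators $\positionOp_i$ and $\momentumOp_i$ on $\denseDomain{\numModes}$. Every $\matrixOp\in\allPolyOps$ is a finite linear combination of finite words in the quadratures, and $\denseDomain{\numModes}$ is a vector space. So it suffices to show that each single generator maps $\denseDomain{\numModes}$ into itself. Invariance under any word then follows by induction on its length, and invariance under finite sums and scalar multiples is immediate.

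For the generators we work in the position representation, using the definition of $\denseDomain{\numModes}$ through the seminorms $\sup_{\fvars}|\fvars^{\vvals{m}}\partial^{\vvals{k}}\qstate(\fvars)|$.

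\textbf{Position operators.} $(\positionOp_i\qstate)(\fvars)=x_i\qstate(\fvars)$ is smooth. By the Leibniz rule, $\partial^{\vvals{k}}(x_i\qstate)=x_i\partial^{\vvals{k}}\qstate+k_i\,\partial^{\vvals{k}-e_i}\qstate$, where the second term is absent when $k_i=0$. Hence every seminorm of $\positionOp_i\qstate$ is bounded by a finite combination of seminorms of $\qstate$ with the exponent $\vvals{m}$ raised by at most one.

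\textbf{Momentum operators.} $\momentumOp_i\qstate=-\im\partial_{x_i}\qstate$, and the seminorm indexed by $(\vvals{m},\vvals{k})$ for $\momentumOp_i\qstate$ is exactly the one indexed by $(\vvals{m},\vvals{k}+e_i)$ for $\qstate$, which is finite.

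Finally, Schwartz functions are square integrable, since $|\qstate|\le C(1+|\fvars|^2)^{-\numModes}$. Thus both images lie in $\hs$ and in $\denseDomain{\numModes}$.

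The same result can also be obtained from Proposition~\ref{prop:domainFock}. Via Equation~\ref{eq:id_canon_ops}, $\positionOp_i$ and $\momentumOp_i$ are combinations of $\destroyOp_i$ and $\createOp_i$. By Equation~\ref{eq:ladder_fock}, these shift the Fock coefficients by one index and multiply them by $\sqrt{\imath_i}$ or $\sqrt{\imath_i+1}$. Polynomial decay of the coefficients is preserved, because $\vvals{\imath}^{\vvals{m}}\sqrt{\imath_i+1}$ is dominated by a polynomial of one higher degree. This gives a useful cross-check.

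For the consequences we must also control adjoints. On $\denseDomain{\numModes}$, integration by parts (with boundary terms vanishing because of rapid decay) gives $\innerProd{\positionOp_i\qstate}{\qstate'}=\innerProd{\qstate}{\positionOp_i\qstate'}$ and $\innerProd{\momentumOp_i\qstate}{\qstate'}=\innerProd{\qstate}{\momentumOp_i\qstate'}$. Hence the adjoint of a word $\positionOp_{j_1}\cdots\momentumOp_{j_r}$ acts on $\denseDomain{\numModes}$ as the reversed word. Extending antilinearly, $\matrixOp^\dagger$ restricted to $\denseDomain{\numModes}$ is again an element of $\allPolyOps$, namely the polynomial with conjugated coefficients and reversed words, and it therefore also leaves $\denseDomain{\numModes}$ invariant. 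This gives $\allPolyOps\subseteq\allOps$. Since $\polyOps$ consists of the self-adjoint elements of $\allPolyOps$, the inclusion $\polyOps\subseteq\allOpsA$ follows directly from the definition of $\allOpsA$.

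\textbf{Main obstacle.} The technical difficulty is the adjoint step, not the invariance itself. The paper defines the adjoint via the full domain $\domainOp{\matrixOp}$, whereas the computation above only identifies $\matrixOp^\dagger$ on $\denseDomain{\numModes}$. I would handle this by noting that the membership condition for $\allOps$ only requires $\matrixOp^\dagger\ket{\qstate}\in\denseDomain{\numModes}$ for $\qstate\in\denseDomain{\numModes}$. It then suffices to show that $\denseDomain{\numModes}\subseteq\domainOp{\matrixOp^\dagger}$, with $\matrixOp^\dagger$ given there by the reversed polynomial. To prove this, I would take $\matrixOp$ to be defined on its natural maximal domain as a differential operator with polynomial coefficients. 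The pairing identity then extends from test vectors in $\denseDomain{\numModes}$ to all of $\domainOp{\matrixOp}$, either by distributional integration by parts or by citing the standard treatment in \cite{Becnel15,Reed12}.
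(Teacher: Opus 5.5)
Your proof is correct, and it goes further than the paper. The paper gives no argument for this lemma: it states it with citations~\cite{Becnel15,Gieres00,Parviz24,Carcassi25} and treats $\allPolyOps\subseteq\allOps$ and $\polyOps\subseteq\allOpsA$ as immediate consequences of invariance.

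Your route is self-contained. You reduce to the generators $\positionOp_i,\momentumOp_i$ and check the defining seminorms of $\denseDomain{\numModes}$ directly in the position representation. Your Fock-coefficient cross-check, via Proposition~\ref{prop:domainFock} and the ladder operators, amounts to the Hermite-expansion characterization the paper imports from~\cite{Becnel15}, and it matches the Fock-basis style used later (e.g.\ Proposition~\ref{prop:fock-decay}).

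The more substantive difference is the adjoint step, which the paper's ``consequently'' glosses over. Membership in $\allOps$ also requires $\matrixOp^\dagger$ to map $\denseDomain{\numModes}$ into itself, and this does not follow from invariance under $\matrixOp$ alone. You supply it by showing $\denseDomain{\numModes}\subseteq\domainOp{\matrixOp^\dagger}$, with $\matrixOp^\dagger$ acting there as the conjugated, word-reversed polynomial $\widetilde{\matrixOp}$. For this you take $\domainOp{\matrixOp}$ to be the maximal domain, which is what the paper's definition $\{\qstate\in\hs\mid\matrixOp\qstate\in\hs\}$ implicitly means.

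Choosing the distributional argument at that point is the right call. For $\qstate\in\denseDomain{\numModes}$ and $\qstate''$ in the maximal domain, the identity $\innerProd{\matrixOp\qstate''}{\qstate}=\innerProd{\qstate''}{\widetilde{\matrixOp}\qstate}$ is just the definition of a polynomial-coefficient differential operator acting on tempered distributions. A density or closure argument would fail in general, because $\denseDomain{\numModes}$ need not be a core for the maximal operator. For example, $\positionOp^2\momentumOp+\momentumOp\positionOp^2$, which appears in the paper's cubic-phase table, has nonzero deficiency spaces and so is not essentially self-adjoint on the Schwartz space.

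In short, the paper's citation buys brevity, while your argument gives a complete proof. It also makes explicit the domain convention under which the second half of the lemma is literally true.
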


\subsubsection{Schwartz density operators ($\alldm$).}
An operator $\dm$ acting on $\hs$ is a density operator if it is (i) positive semi-definite, (ii) self-adjoint, and (iii) $\trace(\dm) = 1$. If $\dm$ satisfies (i) and (ii) but $\trace(\dm) \leq 1$ then we say it is a partial density operator. In this work, we refer to either case as a density operator. We are interested in well-behaved density operators that guarantee finite expectation values on polynomials in $\polyOps$. Formally, we define by $\alldm = \{\dm \in \allOpsA \mid \trace(\dm) \leq 1 \text{ and } \dm \geq \zeroMatrix \text{ and } \forall \obsOp \in \polyOps. |\trace(\obsOp\dm)| < \infty \}$ the set of {\em Schwartz density operators}.
The following proposition characterizes these operators (see proof in Appendix, Section~\ref{appendix:prop:fock-decay}).

\begin{proposition}[Fock matrix-element criterion]
\label{prop:fock-decay}
For an operator $\matrixOp$ with $\denseDomain{\numModes}\subseteq\domainOp{\matrixOp}$, write
$\matrixOp_{\vvals{m}\vvals{k}}=\bra{\vvals{m}}\matrixOp\ket{\vvals{k}}$ for $\vvals{m},\vvals{k}\in\mathbb{N}_0^{\numModes}$.
Then the following holds:
\begin{enumerate}
    \item If $\sup_{\vvals{m},\vvals{k}}\big(\vvals{m}^{\vvals{a}}\,\vvals{k}^{\vvals{b}}\,|\matrixOp_{\vvals{m}\vvals{k}}|\big)<\infty$
for all $\vvals{a},\vvals{b}\in\mathbb{N}_0^{\numModes}$, then $\matrixOp$ maps $\denseDomain{\numModes}$ into $\denseDomain{\numModes}$.
\item For all $\dm \in \alldm$, we have $\sup_{\vvals{m},\vvals{k}}\big(\vvals{m}^{\vvals{a}}\,\vvals{k}^{\vvals{b}}\,|\dm_{\vvals{m}\vvals{k}}|\big)<\infty$
for all $\vvals{a},\vvals{b}\in\mathbb{N}_0^{\numModes}$.
\end{enumerate}
\end{proposition}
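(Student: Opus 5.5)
For part (1), I will use Proposition~\ref{prop:domainFock}. Take $\qstate\in\denseDomain{\numModes}$ with Fock coefficients $\cnum_{\vvals{k}}$, so that $\sup_{\vvals{k}}\vvals{k}^{\vvals{b}}|\cnum_{\vvals{k}}|<\infty$ for every $\vvals{b}$. The $\vvals{m}$-th Fock coefficient of $\matrixOp\ket{\qstate}$ is $w_{\vvals{m}}=\sum_{\vvals{k}}\matrixOp_{\vvals{m}\vvals{k}}\cnum_{\vvals{k}}$. This expansion is legitimate because $\ket{\vvals{m}}\in\denseDomain{\numModes}\subseteq\domainOp{\matrixOp}$, so I can write $w_{\vvals{m}}=\innerProd{\vvals{m}}{\matrixOp\qstate}$ and expand $\qstate$ in the Fock basis. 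Justifying the interchange needs a little care; I would go through the adjoint's action on $\ket{\vvals{m}}$, or observe that the series converges absolutely.

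Next, fix $\vvals{a}$. Write $1+\vvals{k}$ componentwise, so that $(1+\vvals{k})^{\vvals{2}}$ denotes $\prod_i(1+k_i)^2$. I bound
\[
\vvals{m}^{\vvals{a}}|w_{\vvals{m}}|\le\sum_{\vvals{k}}\Big(\vvals{m}^{\vvals{a}}|\matrixOp_{\vvals{m}\vvals{k}}|\Big)|\cnum_{\vvals{k}}|\le C_{\vvals{a}}\sum_{\vvals{k}}|\cnum_{\vvals{k}}|,
\]
using the hypothesis with $\vvals{b}=0$. The sum on the right is finite because $|\cnum_{\vvals{k}}|\le C'(1+\vvals{k})^{-\vvals{2}}$, and $\sum_{\vvals{k}}(1+\vvals{k})^{-\vvals{2}}$ is a finite product of convergent series. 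The hypothesis with exponent $\vvals{0}$ already gives $\sup|\matrixOp_{\vvals{m}\vvals{k}}|<\infty$, which also makes $w_{\vvals{m}}$ absolutely convergent. Proposition~\ref{prop:domainFock} then yields $\matrixOp\ket{\qstate}\in\denseDomain{\numModes}$. So part (1) uses the $\vvals{b}$-decay only mildly; the real content is the rapid decay in $\vvals{m}$.

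For part (2), I would exploit that finite traces against all polynomial observables force decay of the diagonal entries, and then transfer this to the off-diagonal entries by positivity. First take the observables $\obsOp=\prod_i(\numOp_i)^{c_i}$. Each lies in $\polyOps$, since $\numOp_i=(\positionOp_i^2+\momentumOp_i^2-1)/2$ and the $\numOp_i$ commute and are self-adjoint, and each is diagonal in the Fock basis. Then $\trace(\obsOp\dm)=\sum_{\vvals{k}}\vvals{k}^{\vvals{c}}\dm_{\vvals{k}\vvals{k}}$ is a sum of nonnegative terms, because $\dm\ge\zeroMatrix$ and $\ket{\vvals{k}}\in\denseDomain{\numModes}$. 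Its finiteness gives $\vvals{k}^{\vvals{c}}\dm_{\vvals{k}\vvals{k}}\le D_{\vvals{c}}$ for all $\vvals{c}$. To make this step rigorous I need the trace computed in the Fock basis to equal $\sum_{\vvals{k}}\bra{\vvals{k}}\obsOp\dm\ket{\vvals{k}}$, which I would argue using $\obsOp\ket{\vvals{k}}=\vvals{k}^{\vvals{c}}\ket{\vvals{k}}$ and self-adjointness: $\bra{\vvals{k}}\obsOp\dm\ket{\vvals{k}}=\vvals{k}^{\vvals{c}}\dm_{\vvals{k}\vvals{k}}$.

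For the off-diagonal entries, positivity of $\dm$ on the two-dimensional span of $\ket{\vvals{m}},\ket{\vvals{k}}$ (a $2\times2$ principal minor) gives the Cauchy--Schwarz bound $|\dm_{\vvals{m}\vvals{k}}|^2\le\dm_{\vvals{m}\vvals{m}}\dm_{\vvals{k}\vvals{k}}$. Hence
\[
\vvals{m}^{\vvals{a}}\vvals{k}^{\vvals{b}}|\dm_{\vvals{m}\vvals{k}}|\le\sqrt{\vvals{m}^{2\vvals{a}}\dm_{\vvals{m}\vvals{m}}}\sqrt{\vvals{k}^{2\vvals{b}}\dm_{\vvals{k}\vvals{k}}}\le\sqrt{D_{2\vvals{a}}D_{2\vvals{b}}},
\]
which is uniformly bounded. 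I expect the main obstacle to be the domain and trace bookkeeping in this part, namely justifying $\trace(\obsOp\dm)$ as that diagonal sum for unbounded $\obsOp$. The positivity and Cauchy--Schwarz step itself is routine.
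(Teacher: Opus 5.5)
Your proposal is correct and follows the paper's proof essentially step for step. In part (1) you bound $\vec m^{\vec a}|(L\qstate)_{\vec m}|\le C_{\vec a}\sum_{\vec k}|\cnum_{\vec k}|$ using only the $\vec b=\vec 0$ case of the hypothesis plus Proposition~\ref{prop:domainFock}. In part (2) you combine the diagonal decay forced by $\trace(\prod_i\numOp_i^{c_i}\rho)<\infty$ with the positivity bound $|\rho_{\vec m\vec k}|\le\sqrt{\rho_{\vec m\vec m}\rho_{\vec k\vec k}}$, exactly as the paper does. You are in fact more careful than the paper about the summability of $|\cnum_{\vec k}|$ and about evaluating the trace in the Fock basis.

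One caution concerns the step the paper uses silently, $(L\qstate)_{\vec m}=\sum_{\vec k}L_{\vec m\vec k}\cnum_{\vec k}$. Absolute convergence of that series does not by itself identify its sum with $\langle \vec m, L\qstate\rangle$, so drop that option and keep your adjoint route. The adjoint route needs $\ket{\vec m}\in\domainOp{L^\dagger}$, which holds automatically for bounded $L$, such as the density operators to which the criterion is applied in Theorem~\ref{thm:prog-invariance}.
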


\paragraph{Characteristic function.}
Every density operator $\dm$ has a unique {\em characteristic function} defined as~\cite{Weedbrook12}
$$
\charactF(\vvals{x}, \vvals{p}) = \trace( \weylOp(\vvals{x}, \vvals{p})\dm),
$$
where $\vvals{x}, \vvals{p} \in \mathbb{R}^{\numModes}$ are real parameters for each canonical observable, and $\weylOp(\vvals{x}, \vvals{p})=  \exp(\im(\vvals{x} \allPosOp^T + \vvals{p}\allMomOp^T))$ is the {\em Weyl operator}. Note that we are guaranteed a finite trace because the Weyl operator is bounded.

\paragraph{Moments.}
A {\em moment} of a quantum state is the expectation value of a monomial over canonical observables. In particular, the set of {\em first moments} of a quantum state $\dm \in \alldm$ is the set $\{\trace(\quadOp\dm) \mid \quadOp \in \quadOps\}$ that consists of the expectation values of all canonical observables. The set of {\em second moments} is instead the set $\{\trace(\quadOp\quadOp'\dm) \mid \quadOp, \quadOp' \in \quadOps\}$ of all expectation values of monomials consisting of two canonical observables.

\subsubsection{Unitary operators.}
An operator $U$ is unitary if $UU^\dagger = U^\dagger U = \identity$. The principles of quantum mechanics state that evolution of quantum states of closed quantum systems is unitary, and therefore if at time 0 the state of a quantum system is $\ket{\qstate_0}$ and at some time $t>0$ the state is $\ket{\qstate_t}$, then these states are related by a unitary operator $\ket{\qstate_t} = U\ket{\qstate_0}$. Alternatively, we can also specify such evolution for given initial density operator $\dm_0$, which evolves into  $\dm_t= U\dm_0U^\dagger$.

Our programming language considers unitaries that preserve the dense domain and whose dual preserves our predicate syntax. Lemma~\ref{lemma:unitary_substitution} is relevant for the two aforementioned purposes (see Appendix Section ~\ref{appendix:lemma:unitary_substitution} for the respective proof).

\begin{lemma}[Backwards unitary evolution via substitution]
For all unitary operators $U\in \allOps$ and all polynomials $\poly(\allPosOp, \allMomOp) \in \polyOps$, we have $U^\dagger \poly(\allPosOp, \allMomOp) U = \poly(U^\dagger\allPosOp U, U^\dagger\allMomOp U)$.
\label{lemma:unitary_substitution}
\end{lemma}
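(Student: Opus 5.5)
The plan is to exploit unitarity through the identity $UU^\dagger = \identity$, inserting it between consecutive factors of each monomial, and then to extend from monomials to polynomials by linearity. Before any algebra, I need every product to be well defined on a common domain. Since $U \in \allOps$, both $U$ and $U^\dagger$ map $\denseDomain{\numModes}$ into $\denseDomain{\numModes}$. By Lemma~\ref{lemma:inv-pol}, every quadrature $\quadOp \in \quadOps$ (and every polynomial in $\allPolyOps$) also leaves $\denseDomain{\numModes}$ invariant. Hence every composite $U^\dagger \quadOp_1 U U^\dagger \quadOp_2 U \cdots$ is defined on all of $\denseDomain{\numModes}$. Equality of operators is understood in the paper's sense, namely agreement of $\innerProd{\matrixOp\ket{\qstate}}{\qstate}$ for all $\qstate\in\denseDomain{\numModes}$. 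It therefore suffices to prove the stronger pointwise identity $U^\dagger \poly(\allPosOp,\allMomOp) U\ket{\qstate} = \poly(U^\dagger\allPosOp U, U^\dagger\allMomOp U)\ket{\qstate}$ for every $\qstate \in \denseDomain{\numModes}$.

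\textbf{Step 1 (monomials).} Take a monomial $\quadOp_1\quadOp_2\cdots\quadOp_k$ with $\quadOp_j\in\quadOps$, where the ordering matters because the factors need not commute. I will prove $U^\dagger \quadOp_1\cdots\quadOp_k U = (U^\dagger\quadOp_1U)(U^\dagger\quadOp_2U)\cdots(U^\dagger\quadOp_kU)$ on $\denseDomain{\numModes}$ by induction on $k$.
\begin{itemize}
\item The case $k=0$ is $U^\dagger U = \identity$.
\item The case $k=1$ is trivial.
\item For the inductive step, insert $UU^\dagger=\identity$ between $\quadOp_1$ and $\quadOp_2\cdots\quadOp_k$. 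Applied to $\ket{\qstate}\in\denseDomain{\numModes}$, the vector $\quadOp_2\cdots\quadOp_kU\ket{\qstate}$ lies in $\denseDomain{\numModes}\subseteq\hs$, and $UU^\dagger$ acts as the identity on all of $\hs$, so the insertion is legitimate. Then apply the induction hypothesis.
\end{itemize}

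\textbf{Step 2 (linear combinations).} A general $\poly\in\polyOps$ is a finite complex linear combination $\sum_j c_j M_j$ of such monomials. Conjugation by $U$ is linear: $U^\dagger(\sum_j c_jM_j)U\ket{\qstate} = \sum_j c_j U^\dagger M_j U\ket{\qstate}$, since $U^\dagger$ is linear on $\hs$. Combining this with Step 1 termwise yields $\poly$ evaluated at the substituted tuples. Here "substitution" means replacing each occurrence of $\positionOp_i$ by $U^\dagger\positionOp_iU$ and each $\momentumOp_i$ by $U^\dagger\momentumOp_iU$ in the same noncommutative word, with the same order of factors. I will state this explicitly, since $\poly$ is a noncommutative polynomial whose presentation is fixed. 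The result is independent of the chosen presentation because the left-hand side is.

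\textbf{Main obstacle.} The only delicate point is the domain bookkeeping in the unbounded setting: justifying that the insertion of $UU^\dagger$ and the regrouping of factors are valid, rather than formal manipulations. I would handle this as described above. Every intermediate vector stays in $\denseDomain{\numModes}$, by $U\in\allOps$ and Lemma~\ref{lemma:inv-pol}, and $UU^\dagger=\identity$ holds as bounded operators on all of $\hs$, so associativity of composition applies vectorwise. If the definition of $\allOps$ is read as only requiring $U\ket{\qstate}\in\denseDomain{\numModes}$, that is exactly what is needed, together with the same property for $U^\dagger$, which $\allOps$ also guarantees.
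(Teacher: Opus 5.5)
Your proposal is correct and follows essentially the same route as the paper. The paper notes that conjugation $\matrixOp \mapsto U^\dagger \matrixOp U$ is linear, fixes scalars, and is multiplicative (by inserting $UU^\dagger = \identity$) on a common dense domain invariant under $U$, $U^\dagger$, and the quadratures, and then applies this term by term to the monomial expansion of $\poly$; this is exactly your induction on word length followed by linearity, with your vectorwise bookkeeping on $\denseDomain{\numModes}$ spelling out the paper's one-line domain remark.
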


The set $\quadOps$ of canonical observables is {\em irreducible}, meaning that there is no nontrivial closed subspace of the Hilbert space that is preserved by every canonical observable~\cite{Ballentine14,Neumann31}.
By Schur's Lemma, this implies that the only bounded operators commuting with every canonical observable are scalar multiples of the identity~\cite{Ballentine14}.
Proposition~\ref{prop:unitary_eq} (proof in Appendix Section~\ref{appendix:prop:unitary_eq}) follows from this irreducibility together with Schur's Lemma.

\begin{proposition}[Unitary equivalence (up to a global phase)] For all unitaries $U_1, U_2 \in \allOps$, there exists a phase $\theta \in [0, 2\pi]$ such that $U_1=e^{\im \theta}U_2$ iff $U^\dagger_1 \obsOp U_1 = U^\dagger_2 \obsOp U_2$ for all canonical observables $\obsOp \in \quadOps$.
\label{prop:unitary_eq}
\end{proposition}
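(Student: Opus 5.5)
The plan is to prove the two directions separately. The forward direction is a direct computation. The converse reduces to showing that $V = U_2 U_1^\dagger$ commutes with every canonical observable, and then Schur's Lemma (via irreducibility of $\quadOps$) forces $V$ to be a scalar of modulus one.

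($\Rightarrow$) Suppose $U_1 = e^{\im\theta}U_2$. Then $U_1^\dagger = e^{-\im\theta}U_2^\dagger$, so for every $\obsOp\in\quadOps$
\[
U_1^\dagger \obsOp U_1 = e^{-\im\theta}e^{\im\theta}\,U_2^\dagger\obsOp U_2 = U_2^\dagger\obsOp U_2 .
\]
The phases cancel because they are scalars. Equality holds in the sense of the paper (matrix elements on $\denseDomain{\numModes}$), since $U_1,U_2\in\allOps$ preserve the Schwartz space and the quadratures act on it by Lemma~\ref{lemma:inv-pol}.

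($\Leftarrow$) Assume $U_1^\dagger \obsOp U_1 = U_2^\dagger \obsOp U_2$ for all $\obsOp\in\quadOps$, and set $V = U_2U_1^\dagger$. This $V$ is unitary, hence bounded, and lies in $\allOps$ because $U_1^\dagger$ and $U_2$ both preserve $\denseDomain{\numModes}$. Multiplying the hypothesis on the left by $U_2$ and on the right by $U_1^\dagger$ gives $V\obsOp = \obsOp V$ on $\denseDomain{\numModes}$, so $[V,\obsOp]=0$ for every canonical observable.

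The main obstacle is making this commutation rigorous for unbounded $\obsOp$. The paper's notion of equality is in terms of expectation values on the Schwartz space, but Schur's Lemma needs $V$ to commute with the closed self-adjoint quadratures, or equivalently with their spectral projections. I would route through the Weyl operators. Since $V\obsOp = \obsOp V$ on the invariant core $\denseDomain{\numModes}$, which is a core of analytic vectors for each quadrature, $V$ commutes with the closures of $\positionOp_i$ and $\momentumOp_i$. Hence $V$ commutes with the bounded unitaries $e^{\im t\positionOp_i}$ and $e^{\im t\momentumOp_i}$, and therefore with every $\weylOp(\vvals{x},\vvals{p})$.

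Irreducibility of the Weyl system (Stone--von Neumann), equivalently irreducibility of $\quadOps$ as stated just before the proposition, together with Schur's Lemma then gives $V = c\,\identity$. Unitarity forces $|c|=1$, so $c=e^{\im\theta}$ with $\theta\in[0,2\pi]$. Finally, $U_2U_1^\dagger = e^{\im\theta}\identity$ gives $U_2 = e^{\im\theta}U_1$, that is, $U_1 = e^{-\im\theta}U_2$. Renaming the phase gives the claim.

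Before invoking Schur, I would also check that equality of sesquilinear forms $\innerProd{\matrixOp\ket{\qstate}}{\qstate}$ on $\denseDomain{\numModes}$ upgrades to equality of operators on $\denseDomain{\numModes}$. This follows by polarization, since the forms determine the operators on a complex space.
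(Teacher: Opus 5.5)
Your proof is correct and follows essentially the same route as the paper. You set $V=U_2U_1^\dagger$, show it commutes with every quadrature in $\quadOps$, and conclude from irreducibility and Schur's Lemma that $V$ is a unimodular scalar; your polarization step and the passage to closures and Weyl operators make rigorous what the paper invokes directly.

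One small correction: not every Schwartz function is an analytic vector for $\positionOp$ or $\momentumOp$. Describe $\denseDomain{\numModes}$ instead as a core that contains the Fock states as a dense set of analytic vectors. The core property is all your closure argument actually uses.
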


\paragraph{Universality.} A unitary $U$ can be written as $U = e^{-\im\obsOp}$ where $\obsOp$ is a self-adjoint operator~\footnote{Our earlier condition $\obsOp = \obsOp^\dagger$ is, in the terminology of~\cite{Reed12}, only symmetry of $\obsOp$ on $\denseDomain{\numModes}$. 
However, the definition of universality requires true self-adjointness of $\obsOp$: i.e., $\obsOp = \obsOp^\dagger$ and $\domainOp{\obsOp} = \domainOp{\obsOp^\dagger}$.}.
In CQC, such self-adjoint operator is expressed as a polynomial over canonical operators; and a set of unitary operators $\mathcal{U}$ is considered {\em universal} if one can approximate to
arbitrary accuracy unitaries $U'=e^{-\im\obsOp'}$, with $\obsOp'$ being a polynomial of arbitrary degree, using a finite product of unitaries in $\mathcal{U}$~\cite{Braunstein05}.

\paragraph{Gaussian states and Gaussian unitaries~\cite{Weedbrook12}.} A quantum state $\dm$ is a (pure) {\em Gaussian state} if there exists a self-adjoint operator $\obsOp$ that can be written as a polynomial over canonical observables of degree at most two such that $\dm = e^{-\im \obsOp}\ket{0}\bra{0}e^{\im\obsOp}$.
These states are called Gaussian because each Gaussian state can be identified as a Gaussian distribution. They can be fully characterized by their first and second moments.
A unitary operator is a {\em Gaussian unitary} if it maps Gaussian states to Gaussian states.

\subsection{Quantum memory}
Our quantum memory consists of $\numModes$ qumodes, indexed by the ordered set $\allqv$ of $\numModes$ quantum variables. 
The subsystem associated with a qumode variable $\qvar \in \allqv$ is denoted by $\hs_\qvar \simeq L^2(\mathbb{R})$, and
the combined system is $\hs_{\allqv} = \bigotimes_{\qvar \in \allqv}\hs_\qvar \simeq L^2(\mathbb{R}^\numModes)$. 
The canonical operators of each qumode variable $\qvar \in \allqv$ are denoted by $\positionOp_\qvar, \momentumOp_\qvar, \destroyOp_\qvar,\createOp_\qvar,$ and denote $\numOp_\qvar = \createOp_\qvar\destroyOp_\qvar$.

The Fock basis for $\hs_{\allqv}$ is the infinite set $\fockBasis_{\allqv} = \{ \ket{\basisF_i}\}_{i \in \mathbb{N}_0}$ of orthonormal basis states $\ket{\basisF_i}$ whose value is defined by a function $\basisF_i: \allqv \rightarrow \mathbb{N}_0$ that assigns a nonnegative integer to every quantum variable. 
Formally, $\ket{\basisF_i} = \bigotimes_{\qvar \in \allqv} \ket{\basisF_i(\qvar)} \in \hs_{\allqv}$.

Let $\qvars = (\qvar_0, \qvar_1, \cdots, \qvar_{m-1})$ be a tuple of $m\leq \numModes$ distinct variables. 
We denote by $\hs_{\qvars} = (\bigotimes_{i=0}^{m-1} \hs_{\qvar_i} \otimes \bigotimes_{\qvar \in (\allqv \setminus \qvars)}\hs_{\qvar})$ the permuted Hilbert space that places the Hilbert spaces of the variables in $\qvars$ as the leftmost factors.
Let $\permuteHs_{\qvars}: \hs_{\allqv} \rightarrow \hs_{\qvars}$  be the canonical (linear) unitary isomorphism that rearranges qumodes according to $\qvars$ and is formally defined by its action on every basis vector $\ket{\basisF} \in \fockBasis_{\allqv}$ as 
$\permuteHs_{\qvars}(\ket{\basisF}) = \bigotimes_{i=0}^{m-1} \ket{\basisF(\qvar_i)} \otimes \bigotimes_{\qvar \in (\allqv\setminus \qvars)} \ket{\basisF(\qvar)}$.
Given an operator $\matrixOp$ acting on $\hs_{\allqv}$ we similarly denote by $\matrixOp_{\qvars} = \permuteHs_{\qvars}\matrixOp \permuteHs_{\qvars}^\dagger $ the permuted operator that acts on $\hs_{\qvars}$.
For every $\qstate \in \denseDomain{\numModes}$, we have  
$\permuteHs_{\qvars}\ket{\qstate}$ and $\ket{\qstate}$ are the same up to variable reordering. Therefore, we have $\permuteHs_{\qvars}\in\allOps$.

\subsection{Atomic instructions}
An {\em atomic instruction} is a tuple $\langle \matrixOp, \qvars \rangle$ that consists of an operator $\matrixOp$ acting on the Hilbert space $\bigotimes_{i=0}^{m-1}\hs_{\qvar_i}\simeq L^2(\mathbb{R}^m)$ of the given tuple $\qvars = (\qvar_0, \qvar_1, \cdots, \qvar_{m-1})$ of $m$ distinct quantum variables. 

\subsubsection{(Forward) semantics}
The forward execution of an atomic instruction $\langle \matrixOp, \qvars \rangle$ is defined as a linear superoperator $\sem{\langle \matrixOp, \qvars \rangle}$ that maps a Schwartz density operator to a successor density operator.
Formally, given a density operator $\dm \in \alldm$, we have
$
    \sem{\langle \matrixOp, \qvars \rangle}(\dm) =  (\permuteHs_{\qvars})^\dagger(\matrixOp\otimes \identity^{\otimes (\numModes - |\qvars|)}) \dm_{\qvars}(\matrixOp\otimes\identity^{\otimes (\numModes - |\qvars|)})^\dagger(\permuteHs_{\qvars}),
$
where  the inverse of the unitary permutator $\permuteHs_{\qvars}$ is used to restore the canonical order of Hilbert spaces, so the semantics return an operator for $\hs_{\allqv}$.
In the next section, we show that $\sem{\langle \matrixOp, \qvars \rangle}(\dm) \in \alldm$ for all $\dm \in \alldm$ in our programming language.

\subsubsection{(Backwards) dual semantics} Intuitively, the {\em dual semantics} of the instruction is instead a superoperator that maps observables to observables.
Formally, given an observable $\obsOp \in \allOpsA$ we define the dual semantics by $\semD{\langle \matrixOp, \qvars \rangle}(\obsOp) = (\permuteHs_{\qvars})^\dagger(\matrixOp\otimes \identity^{\otimes (\numModes - |\qvars|)})^\dagger \obsOp_{\qvars}(\matrixOp\otimes\identity^{\otimes (\numModes - |\qvars|)})(\permuteHs_{\qvars})$.
For the dual semantics, we will instead show that $\semD{\langle \matrixOp, \qvars \rangle}(\obsOp) \in \polyOps$ for all polynomial observables $\obsOp \in \polyOps$ in our programming language.

%% file: sections/_5_pl.tex
\section{Continuous-variable Quantum Programming Language (\myPL)}
\label{sec:pl}
In this section, we introduce $\myPL$, a universal programming language for continuous-variable quantum systems. We first define its syntax and semantics, and then establish several closure properties that show that Schwartz density operators are preserved under program execution.

\subsection{Syntax and semantics}
\paragraph{Syntax} The syntax of $\myPL$ is given in Figure~\ref{fig:pl_syntax}. 
The language is based on a subset of the instructions of the Blackbird language of Strawberry Fields~\cite{Killoran19}. 

$\myPL$ provides atomic programs for measuring a qumode in the Fock basis ($\measFock$), resetting a qumode to the vacuum state, and applying atomic unitary programs (gates) for displacement ($\dispIns$), squeezing ($\squeezeIns$), rotation ($\rotationIns$), the cubic-phase gate ($\cubicPhaseIns$), and the beam splitter ($\beamSplitterIns$).
These gates are the standard building blocks of continuous-variable quantum programs~\cite{Weedbrook12,CVbook,Killoran19,brask22,Kok10}.

We denote by $\progs$ the set of all programs and by $\atomProgs=\{\measFock,\resetIns,\dispIns,\squeezeIns,\rotationIns,\cubicPhaseIns,\beamSplitterIns\}$ the set of {\em atomic programs}. A program $\prog\in\progs$ is called an {\em atomic unitary program} if $\prog\in\atomProgs$ and $\prog$ is neither a reset nor a measurement instruction.

\begin{remark}
We intentionally omit an ideal atomic program to measure position because such an instruction is ill-defined without assuming a finite measurement resolution~\cite{Buck21,Ballentine14}. Experimentally, quantum computers can approximate quadrature measurements using measurements in the Fock basis~\cite{Lloyd99,Braunstein05,Leonhardt95}, as shown in our experiments (Section~\ref{sec:exp_fock}).
\end{remark}

\paragraph{Semantics} The semantics of $\myPL$ is given in Figure~\ref{fig:pl_semantics}.
As for atomic instructions, every program admits both a forward and a dual semantics. The {\em forward semantics} $\sem{\prog}:\alldm\rightarrow\alldm$ are transformers over density operators that correspond to a completely positive trace-preserving map (measurements and reset are non-selective).
In contrast, the dual semantics $\semD{\prog}:\allOpsA\rightarrow\allOpsA$ transforms observables. The forward semantics is defined in Figure~\ref{fig:pl_semantics}, while the dual semantics is obtained by replacing the semantics of each atomic instruction with its dual.

The action of the dual semantics on the canonical observables (Figure~\ref{fig:evol-canon-op}) provides an intuitive picture of how each instruction transforms a quantum state.
For example, a displacement $\dispIns(\qvar, \real_0, \real_1)$ translates the expected position and momentum of a qumode at $\qvar$ by $\sqrt{2}\real_0$ and $\sqrt{2}\real_1$, respectively.
As another example, squeezing a qumode at $\qvar$ with a positive parameter $\real$ reduces the uncertainty of the  position of $\qvar$, but increases uncertainty in momentum.

\paragraph{Universality} The standard universality criterion for continuous-variable quantum computing establishes that a gate set is universal if it generates arbitrary Gaussian gates, and it contains at least a non-Gaussian gate of polynomial degree higher than 2~\cite{Braunstein05}, such as the cubic phase gate $\cubicPhaseIns$~\cite{Weedbrook12,miwa09}.
Since the Gaussian gates of $\myPL$ generate all Gaussian operations~\cite{Weedbrook12,miwa09}, it follows that $\myPL$ is universal.

\begin{figure}[t]
\[
\begin{array}{rcl}
\text{Quantum variable} \quad \qvar \in \allqv& & \\
\text{Real number} \quad r \in \mathbb{R} && \\
\text{Atomic program} \quad \atomProg & ::= &  \measFock(\qvar) \mid \resetIns(\qvar) \mid  \dispIns(\qvar, \real, \real) \mid
    \squeezeIns(\qvar, \real) \mid
    \rotationIns(\qvar, \real) \mid
    \\
    && \cubicPhaseIns(\qvar, \real) \mid \beamSplitterIns(\qvar_0, \qvar_1, \real, \real) \\
\text{Program} \quad \prog & ::= &  
    \atomProg \mid 
    \prog_1;\prog_2
\end{array}
\]
\Description{Syntax of our programming language.}
\caption{Syntax of $\myPL$.}
    \label{fig:pl_syntax}
\end{figure}

\begin{figure}
    \[
    \begin{array}{lll}
    % Fock meas.
    \sem{\measFock(\qvar)}(\dm) 
        = \sum_{i=0}^{\infty} \sem{\langle \ket{i}\bra{i}, \qvar \rangle}(\dm)
    &
    % displacement
    \multicolumn{2}{l}{\sem{\dispIns(\qvar, \real_0, \real_1)}(\dm) 
        = \sem{\langle e^{(-\im\sqrt{2}(\real_0\momentumOp - \real_1\positionOp))}, \qvar \rangle}(\dm)}
    
    \\[2pt]
    % reset instruction
    \sem{\squeezeIns(\qvar, \real)}(\dm) 
        = \sem{\langle e^{(\frac{\real((\destroyOp)^2 - (\createOp)^2)}{2} )}, \qvar \rangle}(\dm)
    &
    % squeeze
    \multicolumn{2}{l}{\sem{\resetIns(\qvar)}(\dm) = \sum_{i=0}^{\infty} \sem{\langle \ket{0}\bra{i}, \qvar \rangle}(\dm)}
    \\[2pt]
    % rotation
    \sem{\rotationIns(\qvar, \real)}(\dm) 
        = \sem{\langle e^{(\im\real\createOp\destroyOp)}, \qvar \rangle}(\dm)
    &
    % cubic phase
    \multicolumn{2}{l}{\sem{\cubicPhaseIns(\qvar, \real)}(\dm) = \sem{\langle e^{(\frac{\im\real\positionOp^3}{3})}, \qvar \rangle}(\dm)}
    \\[2pt]
    % beam splitter
    \multicolumn{2}{l}{\sem{\beamSplitterIns(\qvar_0, \qvar_1, \real_0, \real_1)}(\dm) = \sem{\langle e^{(\real_0 (e^{\im \real_1} (\destroyOp\otimes \createOp)- e^{-\im \real_1}(\createOp\otimes\destroyOp)))}, (\qvar_0,\qvar_1) \rangle}(\dm)}
    & \sem{\prog_1; \prog_2}(\dm) = \sem{\prog_2}(\sem{\prog_1}(\dm))
    \end{array}
    \]
    \Description{Semantics of $\myPL$.}
    \caption{Semantics of $\myPL$ (each program denotes a function on Schwartz density operators).}
    \label{fig:pl_semantics}
\end{figure}

\begin{figure}[ht!]
\small
    \centering
    \[
\begin{array}{lll}
 \semD{\dispIns(\qvar, \real_0, \real_1)}(\positionOp_\qvar) = \positionOp_\qvar + \sqrt{2} \real_0 \identity 
&
\multicolumn{2}{l}{\semD{\dispIns(\qvar, \real_0, \real_1)}(\momentumOp_\qvar) = \momentumOp_\qvar + \sqrt{2} \real_1 \identity}

\\
\semD{\rotationIns(\qvar, \real)}(\positionOp_\qvar) = \positionOp_\qvar\cdot \cos(\real) - \momentumOp_\qvar \cdot \sin(\real)
&
\multicolumn{2}{l}{\semD{\rotationIns(\qvar, \real)}(\momentumOp_\qvar) = \momentumOp_\qvar\cdot \cos(\real) + \positionOp_\qvar \cdot \sin(\real)}

\\
\multicolumn{2}{l}{
\semD{\beamSplitterIns(\qvar_0, \qvar_1, \real_0, \real_1)}(\positionOp_{\qvar_0}) = \positionOp_{\qvar_0}\cos(\real_0) - \sin(\real_0)( \positionOp_{\qvar_1}\cos(\real_1) +  \momentumOp_{\qvar_1}\sin(\real_1))}
&
\semD{\squeezeIns(\qvar, \real)}(\positionOp_\qvar)= e^{-\real}\cdot\positionOp_\qvar
\\
\multicolumn{2}{l}{
\semD{\beamSplitterIns(\qvar_0, \qvar_1, \real_0, \real_1)}(\momentumOp_{\qvar_0}) = \momentumOp_{\qvar_0}\cos(\real_0) - \sin(\real_0)( \momentumOp_{\qvar_1}\cos(\real_1) -  \positionOp_{\qvar_1}\sin(\real_1))}
& \semD{\squeezeIns(\qvar, \real)}(\momentumOp_\qvar) = e^{\real}\cdot\momentumOp_\qvar
\\
\multicolumn{2}{l}{\semD{\beamSplitterIns(\qvar_0, \qvar_1, \real_0, \real_1)}(\positionOp_{\qvar_1}) = \positionOp_{\qvar_1}\cos(\real_0) + \sin(\real_0)( \positionOp_{\qvar_0}\cos(\real_1) - \momentumOp_{\qvar_0}\sin(\real_1))}
& \semD{\cubicPhaseIns(\qvar, \real)}(\positionOp_\qvar) = \positionOp_\qvar
\\
\multicolumn{2}{l}{\semD{\beamSplitterIns(\qvar_0, \qvar_1, \real_0, \real_1)}(\momentumOp_{\qvar_1}) = \momentumOp_{\qvar_1}\cos(\real_0) + \sin(\real_0)( \momentumOp_{\qvar_0}\cos(\real_1) +  \positionOp_{\qvar_0}\sin(\real_1))} & 
\semD{\cubicPhaseIns(\qvar, \real)}(\momentumOp_\qvar)= \momentumOp_\qvar + \real\cdot \positionOp_\qvar^2
\end{array}
\]
    \Description{Backwards evolution of canonical operators.}
    \caption{Semantics of the dual of unitary atomic programs for the canonical observables
    (the dual of each program denotes a function on observables).}
    \label{fig:evol-canon-op}
\end{figure}

\subsection{Closure properties of the program semantics}
To prove that Schwartz density operators are closed under program execution, we first show that the Schwartz space is invariant under every atomic unitary program (Lemma~\ref{lemma:invariance_unitaries}, proof in Appendix Section~\ref{appendix:lemma:invariance_unitaries}).

\begin{lemma}[Invariance of Schwartz space under unitary atomic programs] 
For a unitary atomic program $\prog$, let $U$ be its corresponding unitary operator. Then $U, U^\dagger \in \allOps$.
\label{lemma:invariance_unitaries}
\end{lemma}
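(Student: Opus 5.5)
The plan is to check each atomic unitary program separately, using the Fock-basis characterisation of $\denseDomain{\numModes}$ (Proposition~\ref{prop:domainFock}) together with Lemma~\ref{lemma:inv-pol}. Since $U^\dagger$ for each gate is the same type of gate with negated parameter(s) ($\dispIns(\qvar,-\real_0,-\real_1)$, $\squeezeIns(\qvar,-\real)$, $\rotationIns(\qvar,-\real)$, $\cubicPhaseIns(\qvar,-\real)$, $\beamSplitterIns(\qvar_0,\qvar_1,-\real_0,\real_1)$), it suffices to show $U\denseDomain{\numModes}\subseteq\denseDomain{\numModes}$ for all parameter values. The statement then gives $U^\dagger\denseDomain{\numModes}\subseteq\denseDomain{\numModes}$ as well.

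We also need tensoring with identities and the permutation $\permuteHs_{\qvars}$ to preserve the Schwartz space. This holds because the seminorms in the definition of $\denseDomain{\numModes}$ are stable under permuting coordinates. It also holds for the one- or two-mode factor acting on some variables, because $x$-multiplication and derivatives in the other variables commute with $U$ as it acts on those modes. So we reduce to $L^2(\mathbb{R})$ or $L^2(\mathbb{R}^2)$.

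For the \emph{rotation} $e^{\im\real\numOp}$, the gate is diagonal in the Fock basis with unimodular entries. The coefficient moduli $|\cnum_{\vvals{\imath}}|$ are therefore unchanged, and Proposition~\ref{prop:domainFock} applies immediately. The \emph{beam splitter} commutes with the total photon number $\numOp_{\qvar_0}+\numOp_{\qvar_1}$, so it is block diagonal on the finite-dimensional subspaces of fixed total photon number $N$, with unitary blocks. The $\ell^2$-mass in block $N$ is thus preserved. Polynomial decay in $N$ of the coefficients transfers because $\vvals{\imath}^{\vvals{m}}\le N^{|\vvals{m}|}$ on the block and each coefficient is bounded by the block's $\ell^2$ norm. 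Going from the sup-condition to the block norms costs only a polynomial factor, namely the block dimension $N+1$.

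For \emph{displacement} and the \emph{cubic phase} it is easier to work in the position representation. The cubic phase acts as multiplication by $e^{\im\real x^3/3}$. This factor is smooth, and all its derivatives grow at most polynomially, so by the Leibniz rule the product with a Schwartz function is again Schwartz. The displacement $e^{-\im\sqrt{2}(\real_0\momentumOp-\real_1\positionOp)}$ factors (Baker–Campbell–Hausdorff, since the commutator is central) into a phase, a translation $\qstate(x)\mapsto\qstate(x-\sqrt{2}\real_0)$, and multiplication by $e^{\im\sqrt{2}\real_1 x}$. Translation clearly preserves the Schwartz seminorms up to a binomial expansion of $(x+c)^m$. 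Multiplication by a character is handled by the same Leibniz argument.

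The main obstacle is \emph{squeezing}. Using the disentangling formula for $\mathfrak{su}(1,1)$, or equivalently the dual action $\semD{\squeezeIns(\qvar,\real)}(\positionOp_\qvar)=e^{-\real}\positionOp_\qvar$ from Figure~\ref{fig:evol-canon-op}, I would identify $e^{\real(\destroyOp^2-(\createOp)^2)/2}$ with the dilation $\qstate(x)\mapsto e^{\mp\real/2}\qstate(e^{\mp\real}x)$; the sign is immaterial. A dilation visibly preserves every seminorm $\sup|x^m\partial^k\qstate|$ up to the factor $e^{|\real|(m+k+1/2)}$. The delicate point is justifying this identification rigorously on the dense domain rather than formally. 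I would do it by checking that the dilation group $D_\real$ is strongly continuous and unitary. Its generator on $\denseDomain{\numModes}$ is $-\frac{\im}{2}(\positionOp\momentumOp+\momentumOp\positionOp)=\frac12(\destroyOp^2-(\createOp)^2)$, which is essentially self-adjoint there because $\denseDomain{\numModes}$ is an invariant core (Nelson's analytic vector theorem applied to the Fock vectors). By Stone's theorem, $D_\real$ then coincides with the exponential in Figure~\ref{fig:pl_semantics}. The same core argument retroactively justifies the factorisations used for displacement, the cubic phase and the beam splitter.
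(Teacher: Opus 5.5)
Your proposal is correct, but it takes a different route from the paper's proof. For $\dispIns$, $\squeezeIns$, $\rotationIns$ and $\beamSplitterIns$, the paper only cites the standard fact that Gaussian unitaries (Heisenberg--Weyl and metaplectic operators) and their adjoints preserve the Schwartz space. It argues explicitly only for the cubic phase, using exactly your argument: multiplication by $e^{\im\real x^3/3}$, the Leibniz rule, and $U^\dagger$ obtained by negating $\real$. You instead give self-contained arguments for every gate:
\begin{itemize}
\item rotation and beam splitter via the paper's own Fock characterization (Proposition~\ref{prop:domainFock}), using photon-number conservation (a diagonal unimodular action, resp.\ unitary blocks of dimension $N+1$);
\item displacement via the Weyl factorization into a translation and a modulation;
\item squeezing as a dilation, with Stone's theorem and essential self-adjointness of the quadratic generator (the Fock vectors are analytic vectors) identifying the formal exponential with the concrete operator.
\end{itemize}
The paper's route buys brevity and uniform coverage of multimode Gaussian gates. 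Yours avoids the metaplectic machinery. More importantly, it justifies a step the paper leaves implicit: that the exponentials of unbounded generators written in Figure~\ref{fig:pl_semantics} coincide with the familiar concrete operators.

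There are two small points to fix.

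First, the sign in the squeezing step is off. In fact $\frac12(\destroyOp^2-(\createOp)^2)=+\frac{\im}{2}(\positionOp\momentumOp+\momentumOp\positionOp)=\frac12+x\frac{d}{dx}$. Hence $\squeezeIns(\qvar,\real)$ acts as $\qstate(x)\mapsto e^{\real/2}\qstate(e^{\real}x)$, consistent with $e^{-\real}\positionOp_\qvar$ in Figure~\ref{fig:evol-canon-op}. You already noted that the direction of the dilation is immaterial, so this is harmless.

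Second, your stated justification for reducing to one or two modes is not sufficient on its own. It relies on the seminorms in the other variables commuting with $U$. But knowing that $U$ preserves $\denseDomain{1}$ does not by itself give that $U\otimes\identity$ preserves $\denseDomain{\numModes}$; that would also need continuity of $U$ in the Schwartz topology plus a tensor-product argument. You do not actually need this reduction, though. Each of your concrete arguments works verbatim on $\denseDomain{\numModes}$: translation, modulation, dilation and multiplication in a single coordinate, and the Fock-coefficient bounds with the spectator indices carried along.
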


We also require that polynomial observables remain polynomial under the dual semantics.
For atomic programs, the unitary evolution can be computed via substitution (by Lemma~\ref{lemma:unitary_substitution}), and therefore the dual semantics~\cite{Weedbrook12,CVbook,Killoran19,brask22,Kok10} of Figure~\ref{fig:evol-canon-op} is sufficient to determine the dual semantics of arbitrary polynomials.
It therefore remains to show that measurements and resets also preserve polynomial observables.

\begin{lemma}[Polynomial closure under Fock basis measurements dual semantics]\label{lemma:meas_poly_evol}
    For every polynomial $\matrixOp\in\allPolyOps$ and quantum variable $\qvar \in \allqv$, we have $\semD{\measFock(\qvar)}(\matrixOp) \in \allPolyOps$.
\end{lemma}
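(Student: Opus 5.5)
The plan is to reduce to single-mode normally ordered monomials via Proposition~\ref{prop:normalized-poly}. Then show that the dual of the Fock measurement acts as a ``dephasing'' map: it deletes every monomial that is off-diagonal in the Fock basis of $\qvar$ and leaves the diagonal ones unchanged. The diagonal ones are polynomials in $\numOp_\qvar$.

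Concretely, the dual semantics is
\[
\semD{\measFock(\qvar)}(\matrixOp)=\sum_{i=0}^{\infty}(\ket{i}\bra{i})_\qvar\,\matrixOp\,(\ket{i}\bra{i})_\qvar ,
\]
where $(\ket{i}\bra{i})_\qvar$ denotes the projector acting on $\hs_\qvar$, tensored with the identity elsewhere and permuted by $\permuteHs_\qvar$. By Proposition~\ref{prop:normalized-poly}, write $\matrixOp=\sum_{\vec k,\vec m}\cnum_{\vec k\vec m}\matrixOp_{\vec k\vec m}$ as a finite sum. The map is linear, so it suffices to treat one monomial $\matrixOp_{\vec k\vec m}$. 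Its factors on modes other than $\qvar$ commute with the projectors on $\qvar$, so up to the permutation the monomial factors as $\destroyOp^{k}(\createOp)^{m}\otimes M_{\mathrm{rest}}$ with $k=k_\qvar$, $m=m_\qvar$. Then
\[
\sum_i(\ket{i}\bra{i})_\qvar\,\matrixOp_{\vec k\vec m}\,(\ket{i}\bra{i})_\qvar=\Big(\sum_i \ket{i}\bra{i}\destroyOp^k(\createOp)^m\ket{i}\bra{i}\Big)\otimes M_{\mathrm{rest}} .
\]

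By Proposition~\ref{prop:powers_ladder}, $\destroyOp^k(\createOp)^m\ket{i}$ is a scalar multiple of $\ket{i+m-k}$. Hence $\bra{i}\destroyOp^k(\createOp)^m\ket{i}=0$ whenever $k\neq m$, and such monomials are annihilated. When $k=m$, the same proposition gives $\destroyOp^m(\createOp)^m\ket{i}=\prod_{j=1}^{m}(i+j)\ket{i}$. So $\destroyOp^m(\createOp)^m=\prod_{j=1}^m(\numOp+j\identity)$ is already diagonal in the Fock basis and is fixed by the dephasing. Since $\numOp_\qvar=(\positionOp_\qvar^2+\momentumOp_\qvar^2-\identity)/2$, this is a polynomial over $\quadOps$. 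The result is therefore
\[
\sum_{\vec k,\vec m:\,k_\qvar=m_\qvar}\cnum_{\vec k\vec m}\,\matrixOp_{\vec k\vec m}\in\allPolyOps ,
\]
which gives the lemma.

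The main obstacle is rigor about the infinite sum over $i$: equality of operators here means equality of $\innerProd{\cdot\,\ket{\qstate}}{\qstate}$ for all $\qstate\in\denseDomain{\numModes}$. I would first expand $\qstate$ in the Fock basis, with coefficients $\cnum_{\vvals{\imath}}$. Each monomial maps a Fock vector to a scalar multiple of a single Fock vector, with a scalar growing only polynomially in $\vvals{\imath}$. Proposition~\ref{prop:domainFock} says the $\cnum_{\vvals{\imath}}$ decay faster than any polynomial, so all the resulting double series converge absolutely. They can then be reordered freely, which justifies both the term-by-term evaluation and the finite linear decomposition used above.
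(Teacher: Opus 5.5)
Your proposal is correct and follows essentially the same route as the paper. You reduce via Proposition~\ref{prop:normalized-poly} to monomials $\prod_{\qvar'}(\destroyOp_{\qvar'})^{k_{\qvar'}}(\createOp_{\qvar'})^{m_{\qvar'}}$, then use Proposition~\ref{prop:powers_ladder} to see that the $\qvar$-factor is Fock-diagonal exactly when $k_\qvar=m_\qvar$, so the dual measurement map annihilates off-diagonal monomials, fixes diagonal ones, and the result follows by linearity. Your identity $\destroyOp^m(\createOp)^m=\prod_{j=1}^m(\numOp+j\identity)$ and your absolute-convergence justification of the infinite sum over $i$ (via Proposition~\ref{prop:domainFock}) add rigor the paper leaves implicit, but do not change the argument.
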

\begin{proof}
    By Proposition~\ref{prop:normalized-poly}, every polynomial is a finite linear combination of normally ordered monomials. It therefore suffices to consider a single monomial.
    Consider the monomial
    $\matrixOp_{\vec k\vec m}=\prod_{\qvar' \in \allqv}(\destroyOp_{\qvar'})^{k_{\qvar'}}(\createOp_{\qvar'})^{m_{\qvar'}}$ of $\matrixOp$, where $k_{\qvar'}, m_{\qvar'} \in \mathbb{N}_0$ are the exponents of ladder operators for a given quantum variable $\qvar' \in \allqv$.
    Using the identities from Proposition~\ref{prop:powers_ladder}, we have $(\destroyOp)^{k}(\createOp)^{m}$ equals the following (omitting coefficients):
    \begin{equation}
    \begin{aligned}
(\sum_{i=0}^{\infty}\ket{i}\bra{i+k})\cdot (\sum_{i=m}^{\infty} \ket{i}\bra{i-m}) &= \sum_{i=\max(0, m-k)}^{\infty} \ket{i}\bra{i+k}\cdot\ket{i+k}\bra{i+k-m} \\
        &= \sum_{i=\max(0, m-k)}^{\infty}\ket{i}\bra{i+k-m}.
    \end{aligned}
\label{eq:ladder_powers_mul}
    \end{equation}
    Thus, $(\destroyOp_\qvar)^{k_\qvar}(\createOp_\qvar)^{m_\qvar}$ is diagonal iff $k_\qvar=m_\qvar$. 
    Therefore, $\semD{\measFock(\qvar)}(\matrixOp_{\vec k\vec m}) = \zeroMatrix$ if $k_\qvar\neq m_\qvar$. 
    Otherwise, $\semD{\measFock(\qvar)}(\matrixOp_{\vec k\vec m}) = \matrixOp_{\vec k\vec m}$.
    Linearity then implies that the resulting operator $\matrixOp' = \semD{\measFock(\qvar)}(\matrixOp)$
    can be written as a finite linear combination of monomials, hence $\matrixOp'\in\allPolyOps$.
\end{proof}

\begin{lemma}[Polynomial closure under reset dual semantics] For every polynomial $\matrixOp\in\allPolyOps$ and quantum variable $\qvar \in \allqv$, we have $\semD{\resetIns(\qvar)}(\matrixOp) \in \allPolyOps$.
\label{lemma:reset_poly_evol}
\end{lemma}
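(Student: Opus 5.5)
The plan mirrors the proof of Lemma~\ref{lemma:meas_poly_evol}: reduce to a single normally ordered monomial, compute the dual of reset on it explicitly, and conclude by linearity.

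By Proposition~\ref{prop:normalized-poly}, every $\matrixOp\in\allPolyOps$ is a finite linear combination of monomials
$\matrixOp_{\vec k\vec m}=\prod_{\qvar'\in\allqv}(\destroyOp_{\qvar'})^{k_{\qvar'}}(\createOp_{\qvar'})^{m_{\qvar'}}$.
The dual semantics is linear, so it suffices to show that $\semD{\resetIns(\qvar)}(\matrixOp_{\vec k\vec m})\in\allPolyOps$ for each such monomial. After permuting by $\permuteHs_{\qvar}$, ladder operators on distinct modes act on distinct tensor factors. Hence the monomial factorizes as $M_\qvar\otimes M_{\mathrm{rest}}$, where $M_\qvar=(\destroyOp)^{k_\qvar}(\createOp)^{m_\qvar}$ acts on $\hs_\qvar$ and $M_{\mathrm{rest}}$ is a monomial in the remaining modes.

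The reset uses the Kraus operators $\ket{0}\bra{i}$ on $\qvar$. Its dual is therefore
\[
\semD{\resetIns(\qvar)}(M_\qvar\otimes M_{\mathrm{rest}})=\sum_{i=0}^{\infty}\big(\ket{i}\bra{0}M_\qvar\ket{0}\bra{i}\big)\otimes M_{\mathrm{rest}}
=\bra{0}M_\qvar\ket{0}\Big(\sum_{i=0}^\infty\ket{i}\bra{i}\Big)\otimes M_{\mathrm{rest}}
=\bra{0}M_\qvar\ket{0}\,\identity\otimes M_{\mathrm{rest}}.
\]
I evaluate the scalar with Equation~\eqref{eq:ladder_powers_mul} and Proposition~\ref{prop:powers_ladder}. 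The operator $(\destroyOp)^{k}(\createOp)^{m}$ maps $\ket{0}$ to a multiple of $\ket{m-k}$ when $m\ge k$, and the coefficient is nonzero only along $\ket{i}\bra{i+k-m}$. It follows that $\bra{0}M_\qvar\ket{0}=\delta_{km}\,k!$. Thus the image of a monomial is either $\zeroMatrix$ or $k_\qvar!$ times the same monomial with the $\qvar$-factor deleted. The latter is again a product of ladder operators on the other modes, and by Equation~\eqref{eq:id_canon_ops} it is a polynomial over $\quadOps$. By linearity, $\semD{\resetIns(\qvar)}(\matrixOp)$ is a finite linear combination of such monomials, hence lies in $\allPolyOps$. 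Undoing the permutation with $\permuteHs_\qvar^\dagger$ only relabels modes, so it preserves polynomiality.

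The main obstacle is justifying the infinite-sum manipulation, since $M_\qvar$ is unbounded. The equality $\sum_i\ket{i}\bra{i}=\identity$ must be understood in the paper's weak sense, via matrix elements on $\denseDomain{\numModes}$. My first attempt is to check the identity matrix element by matrix element against Fock basis vectors. For $\ket{\qstate}\in\denseDomain{\numModes}$, the expansion $\sum_i \bra{\qstate}(\ket{i}\bra{i}\otimes M_{\mathrm{rest}})\ket{\qstate}$ converges absolutely by the rapid decay of Fock coefficients (Proposition~\ref{prop:domainFock}). Moreover, $\ket{0}$ lies in the Schwartz space, so $\bra{0}M_\qvar\ket{0}$ is a well-defined finite scalar. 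Together these give the equality of quadratic forms required by the paper's definition of operator equality.
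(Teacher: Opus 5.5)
Your proposal is correct and follows the paper's proof. You reduce to a single normally ordered monomial, show that the reset dual sends it to the scalar $\bra{0}(\destroyOp_\qvar)^{k_\qvar}(\createOp_\qvar)^{m_\qvar}\ket{0}$ times the monomial on the remaining modes, and conclude by linearity. The explicit value $\delta_{k_\qvar m_\qvar}\,k_\qvar!$ and your quadratic-form justification of the infinite sum are extra detail the paper omits, but they do not change the route.
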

\begin{proof} Consider again the monomial 
$\matrixOp_{\vec k\vec m}$ of $\matrixOp$ described in the proof of Lemma~\ref{lemma:meas_poly_evol}.
We have $$\semD{\resetIns(\qvar)}(\matrixOp_{\vec k\vec m}) = \sum_{i=0}^{\infty} \sem{\langle \ket{i}\bra{0}, \qvar \rangle}(\matrixOp_{\vec k\vec m}) = (\bra{0}(\destroyOp_\qvar)^{k_\qvar}(\createOp_\qvar)^{m_\qvar}\ket{0})\cdot \prod_{\qvar' \in (\allqv\setminus \{\qvar\})}(\destroyOp_{\qvar'})^{k_{\qvar'}}(\createOp_{\qvar'})^{m_{\qvar'}}.$$
By linearity, we have $\semD{\resetIns(\qvar)}(\matrixOp) \in \allPolyOps$.
\end{proof}

Thus, since all atomic programs map polynomials to polynomials, and the dual of completely positive trace-preserving maps preserves self-adjointness, the following corollary follows by structural induction.
\begin{corollary}[Polynomial observable closure under dual semantics] For all polynomial observables $\obsOp \in \polyOps$, and $\myPL$ programs $\prog\in \progs$, we have $\semD{\prog}(\obsOp) \in \polyOps$.
\label{corol:poly-obs-closure}
\end{corollary}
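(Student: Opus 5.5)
The plan is a structural induction on the program $\prog\in\progs$. The invariant is that $\semD{\prog}$ maps $\polyOps$ into $\polyOps$. Two facts have to be shown at every step: the result is again a finite polynomial over $\quadOps$, and it is self-adjoint on $\denseDomain{\numModes}$.

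For the base case I split the atomic programs into unitary ones and the two non-unitary ones.

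\emph{Unitary atomic programs.} Let $\prog$ be unitary atomic with unitary $U$. By Lemma~\ref{lemma:invariance_unitaries}, $U,U^\dagger\in\allOps$.
\begin{itemize}
\item Lemma~\ref{lemma:unitary_substitution} (transported through $\permuteHs_{\qvars}$) gives $\semD{\prog}(\poly(\allPosOp,\allMomOp))=\poly(U^\dagger\allPosOp U,U^\dagger\allMomOp U)$.
\item Figure~\ref{fig:evol-canon-op} shows that each $U^\dagger\quadOp U$, for $\quadOp\in\quadOps$, is a polynomial of degree at most two. Canonical observables on qumodes not touched by $\prog$ are fixed, since $U$ acts as the identity on them.
\item Substituting polynomials into a polynomial yields a polynomial, so $\semD{\prog}(\obsOp)\in\allPolyOps$.
\item Self-adjointness holds because $(U^\dagger\obsOp U)^\dagger=U^\dagger\obsOp^\dagger U=U^\dagger\obsOp U$ on $\denseDomain{\numModes}$. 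This step uses that $U$ and $U^\dagger$ preserve $\denseDomain{\numModes}$, so the inner products $\innerProd{U^\dagger\obsOp U\qstate}{\qstate}=\innerProd{\obsOp U\qstate}{U\qstate}$ are real.
\end{itemize}

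\emph{Measurement and reset.} For $\measFock(\qvar)$ and $\resetIns(\qvar)$, Lemmas~\ref{lemma:meas_poly_evol} and~\ref{lemma:reset_poly_evol} give membership in $\allPolyOps$. For self-adjointness I use the Kraus form $\semD{\prog}(\obsOp)=\sum_i K_i^\dagger\obsOp K_i$. Each term is symmetric on $\denseDomain{\numModes}$, and for $\qstate\in\denseDomain{\numModes}$ the value $\bra{\qstate}\semD{\prog}(\obsOp)\ket{\qstate}=\sum_i\bra{K_i\qstate}\obsOp\ket{K_i\qstate}$ is a sum of real numbers.

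The main obstacle is justifying convergence of this series and its agreement with the finite polynomial computed in those lemmas. I would handle it directly from the explicit formulas.
\begin{itemize}
\item For the measurement, the lemma shows the result keeps exactly the diagonal monomials. These are products of $\numOp_\qvar$-type terms $(\destroyOp_\qvar)^k(\createOp_\qvar)^k$ with the other modes' monomials. A normally ordered polynomial $\sum \cnum_{\vec k\vec m}\matrixOp_{\vec k\vec m}$ that is self-adjoint has coefficients satisfying $\cnum_{\vec k\vec m}=\cnum_{\vec m\vec k}^*$ after re-ordering. Removing the off-diagonal pairs in mode $\qvar$ preserves this pairing, because the adjoint swaps $k_\qvar$ and $m_\qvar$.
\item For reset, the coefficient $\bra{0}(\destroyOp_\qvar)^{k}(\createOp_\qvar)^{m}\ket{0}$ equals $\delta_{km}k!$, which is real. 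The conjugate-pair structure in the remaining modes is therefore again preserved.
\end{itemize}
Either of these arguments establishes $\semD{\prog}(\obsOp)=\semD{\prog}(\obsOp)^\dagger$, using Lemma~\ref{lemma:inv-pol} for domain invariance.

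\emph{Inductive step.} For $\prog_1;\prog_2$, the dual semantics composes contravariantly: $\semD{\prog_1;\prog_2}=\semD{\prog_1}\circ\semD{\prog_2}$. The induction hypothesis gives $\semD{\prog_2}(\obsOp)\in\polyOps$, and applying it again gives $\semD{\prog_1}(\semD{\prog_2}(\obsOp))\in\polyOps$, which concludes the induction.
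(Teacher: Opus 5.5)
Your proof is correct and follows the paper's route. The paper also argues by structural induction: the unitary atomic cases come from Lemma~\ref{lemma:unitary_substitution} together with Figure~\ref{fig:evol-canon-op}, the measurement and reset cases from Lemmas~\ref{lemma:meas_poly_evol} and~\ref{lemma:reset_poly_evol}, and sequencing from composing the duals. Your coefficient-pairing check (the condition $\cnum_{\vec k\vec m}=\cnum^*_{\vec m\vec k}$ survives dropping the terms with $k_\qvar\neq m_\qvar$, and the reset weights $\delta_{km}k!$ are real) simply makes rigorous the self-adjointness step, which the paper dispatches in one phrase by appealing to duals of completely positive trace-preserving maps.
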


We can now prove that Schwartz density operators are preserved by every $\myPL$ program.
\begin{theorem}[Invariance of Schwartz density operators under program execution] 
    For all Schwartz density operators $\dm \in \alldm$ and $\myPL$ quantum programs $\prog \in \progs$, 
    we have $\sem{\prog}(\dm) \in \alldm$.
    \label{thm:prog-invariance}
\end{theorem}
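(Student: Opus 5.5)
We argue by structural induction on $\prog$. The sequential case $\prog_1;\prog_2$ is immediate: by the induction hypothesis $\sem{\prog_1}(\dm)\in\alldm$, and then $\sem{\prog_2}(\sem{\prog_1}(\dm))\in\alldm$. So everything reduces to atomic programs. For each atomic $\prog$ we must check the four defining conditions of $\alldm$: $\sem{\prog}(\dm)$ is self-adjoint, positive semidefinite, has trace at most $1$, and has finite expectation $|\trace(\obsOp\,\sem{\prog}(\dm))|<\infty$ for every $\obsOp\in\polyOps$.

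\textbf{Positivity, self-adjointness, trace.} For a unitary atomic program with operator $U$, the map is $\dm\mapsto \permuteHs_{\qvars}^\dagger (U\otimes\identity)\dm_{\qvars}(U\otimes\identity)^\dagger\permuteHs_{\qvars}$. Lemma~\ref{lemma:invariance_unitaries} and the fact that $\permuteHs_{\qvars}\in\allOps$ show that this conjugating operator and its adjoint preserve $\denseDomain{\numModes}$. Hence for $\qstate\in\denseDomain{\numModes}$ we may write $\bra{\qstate}\sem{\prog}(\dm)\ket{\qstate}=\bra{\qstate'}\dm\ket{\qstate'}\geq 0$ with $\qstate'=V^\dagger\qstate\in\denseDomain{\numModes}$, where $V$ is the conjugating operator. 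Self-adjointness on $\denseDomain{\numModes}$ follows the same way. The trace is preserved by the cyclic property of Proposition~\ref{prop:properties_traces}, since $V$ is bounded and $\dm$ is trace class.

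For $\measFock(\qvar)$ and $\resetIns(\qvar)$, each summand $\sem{\langle K_i,\qvar\rangle}(\dm)$ is positive and self-adjoint, because the Kraus operators $K_i=\ket{i}\bra{i}$ or $\ket{0}\bra{i}$ are bounded and map Schwartz vectors to Schwartz vectors (they have finite rank with Fock-vector range). The partial sums are increasing in the Löwner order, and their traces are bounded by $\sum_i \trace(\ket{i}\bra{i}_{\qvar}\dm)=\trace(\dm)\leq 1$, using completeness $\sum_i\ket{i}\bra{i}=\identity$ and additivity. So the series converges in trace norm to a positive self-adjoint operator with trace at most $1$.

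\textbf{Finite polynomial expectations.} This is the main point, and we obtain it through the dual semantics. The key identity is the Heisenberg–Schrödinger duality $\trace(\obsOp\,\sem{\prog}(\dm))=\trace(\semD{\prog}(\obsOp)\,\dm)$ for $\obsOp\in\polyOps$. Once it holds, Corollary~\ref{corol:poly-obs-closure} gives $\semD{\prog}(\obsOp)\in\polyOps$, and the right-hand side is finite because $\dm\in\alldm$.

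For unitaries, the identity is cyclicity of the trace, but the cyclic property stated in Proposition~\ref{prop:properties_traces} requires a bounded factor, and $\obsOp$ is unbounded. We therefore plan to compute the trace in the Fock basis instead. By Proposition~\ref{prop:fock-decay}(2), the matrix elements $\dm_{\vvals{m}\vvals{k}}$ decay faster than any polynomial. By Propositions~\ref{prop:normalized-poly} and~\ref{prop:powers_ladder}, the matrix elements of a polynomial grow only polynomially and are banded. Together these make the double series $\sum_{\vvals{m},\vvals{k}}\obsOp_{\vvals{k}\vvals{m}}\dm_{\vvals{m}\vvals{k}}$ absolutely convergent, which justifies reordering. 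Concretely, we write $\trace(\obsOp V\dm V^\dagger)=\sum_{j}\bra{j}\obsOp V\dm V^\dagger\ket{j}$, use that $V^\dagger\ket{j}$ and $V^\dagger\obsOp\ket{j}$ lie in $\denseDomain{\numModes}$, and insert resolutions of the identity; absolute convergence then allows the exchange that yields $\trace(V^\dagger\obsOp V\dm)$.

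For measurement and reset, the identity becomes $\sum_i\trace(\obsOp K_i\dm K_i^\dagger)=\trace\big((\sum_i K_i^\dagger\obsOp K_i)\dm\big)$. Each term is a finite-rank computation, and the exchange of the sum over $i$ with the trace is again justified by the rapid Fock decay of $\dm$ against the polynomial growth of the coefficients. The explicit forms from Lemmas~\ref{lemma:meas_poly_evol} and~\ref{lemma:reset_poly_evol} identify the limit as a polynomial.

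The step we expect to be hardest is making these interchange arguments rigorous with unbounded $\obsOp$. If the direct Fock-basis bookkeeping for $V$ becomes unwieldy, the fallback is to first show that $V\dm V^\dagger$ has rapidly decaying Fock matrix elements, via Proposition~\ref{prop:fock-decay} and Lemma~\ref{lemma:invariance_unitaries}, and then conclude finiteness of $\trace(\obsOp\,V\dm V^\dagger)$ directly from that decay.
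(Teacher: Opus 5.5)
Your positivity, trace, and duality arguments are sound, but your list of ``four defining conditions'' is incomplete. $\alldm$ is carved out of $\allOpsA\subseteq\allOps$, so $\sem{\prog}(\dm)\in\alldm$ also requires that $\sem{\prog}(\dm)$ map $\denseDomain{\numModes}$ into $\denseDomain{\numModes}$. Your proposal never proves this, whereas the paper devotes half of every atomic case (``Schwartz space invariance'') to it.

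For unitary instructions the repair is one line you almost state already: $V^\dagger$, $\dm$ and $V$ each preserve $\denseDomain{\numModes}$ (Lemma~\ref{lemma:invariance_unitaries} and $\dm\in\allOps$), hence so does $V\dm V^\dagger$.

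For $\measFock(\qvar)$ and $\resetIns(\qvar)$, however, the fact that each Kraus summand preserves $\denseDomain{\numModes}$ does not pass to the infinite sum. Trace-norm convergence only gives $L^2$-limits of Schwartz vectors, which need not be Schwartz. The missing ingredient is Proposition~\ref{prop:fock-decay}:
\begin{itemize}
\item After measurement, the Fock matrix elements of $\dm'$ are those of $\dm$ with the entries off-diagonal in the $\qvar$-index set to zero. So $|\dm'_{\vvals{m}\vvals{k}}|\le|\dm_{\vvals{m}\vvals{k}}|$, and the decay from part~(2) yields invariance via part~(1).
\item After reset, $\dm'=\ket{0}\bra{0}\otimes\sigma$, where the partial trace $\sigma_{\vvals{\imath}\vvals{\jmath}}=\sum_i\dm_{(i\vvals{\imath})(i\vvals{\jmath})}$ inherits the decay, because the bounds on $\dm$ also decay in the summed index $i$.
\end{itemize}

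On the expectation step your route genuinely differs from the paper's. The paper simply invokes the cyclic property, through $\sqrt{\dm}$ for unitaries and through the bounded projectors $\proj_i$ for measurements. Yet Proposition~\ref{prop:properties_traces} is stated only with a bounded factor, and $\obsOp$ is unbounded. Your absolute-convergence argument in the Fock basis is the more careful justification. When you carry it out, note that bandedness of $\obsOp$ is not the estimate you need, since the cubic-phase unitary is not banded in the Fock basis. What works is Cauchy--Schwarz over the inserted index:
$\sum_{\vvals{l}}|\bra{\vvals{k}}V^\dagger\ket{\vvals{l}}|\,|\bra{\vvals{l}}\obsOp V\ket{\vvals{m}}|\le\norm{\obsOp V\ket{\vvals{m}}}=\bigl(\bra{\vvals{m}}V^\dagger\obsOp^2V\ket{\vvals{m}}\bigr)^{1/2}$.
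This grows polynomially in $\vvals{m}$ because $V^\dagger\obsOp^2V$ is again a polynomial (Lemma~\ref{lemma:unitary_substitution}). It then pairs with the rapid decay of $\dm_{\vvals{m}\vvals{k}}$ to make the triple sum absolutely convergent.
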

\begin{proof}
    Let $\dm \in \alldm$ be Schwartz density operator, $\prog \in \progs$ a $\myPL$ program, and define
    $\dm' = \sem{\prog}(\dm)$.
    The proof is by induction on syntactic cases of a $\myPL$ program. 
    Recall that atomic programs correspond to completely positive trace-preserving maps, which means positive semi-definiteness and traces are  preserved.
    Therefore, for each atomic program, it remains to show that $\dm'$ has finite expectation for every polynomial observable, and that the Schwartz space is invariant in $\dm'$.
    \begin{itemize}
        \item \textit{Measurements.}  Let $\prog := \measFock(\qvar)$ and let $\proj_i = (\permuteHs_{(\qvar)})^\dagger (\ket{i}\bra{i}\otimes\identity)(\permuteHs_{(\qvar)}) \in \allOpsA$ be the operator acting on $\hs_{\allqv}$ that projects the subsystem at $\qvar$ onto the Fock basis state $\ket{i}$. 
        Then $
        \dm' = \sem{\prog}(\dm)  
        = \sum_{i=0}^{\infty}\proj_i \dm \proj_i
        $.
        \begin{enumerate}
            \item \textit{Finite trace.} 
                Let $\obsOp \in \polyOps$ be an arbitrary polynomial observable.
                By Lemma~\ref{lemma:meas_poly_evol}, we have  $\sum_{i=0}^{\infty}\proj_i\obsOp\proj_i = \semD{\prog}(\obsOp) \in \polyOps$.
                Hence, by assumption on $\dm$, the quantity $\trace(\semD{\prog}(\obsOp)\dm)$ is finite.
                Since each $\proj_i$ is bounded and each $\obsOp\proj_i \dm$ is trace class, the cyclic property of traces yields $\trace(\proj_i\obsOp\proj_i\dm)=\trace(\obsOp\proj_i\dm\proj_i)$.
                Therefore, $|\trace(\semD{\prog}(\obsOp)\dm)| = |\trace(\obsOp\proj_0\dm\proj_0) + \trace(\obsOp\proj_1\dm\proj_1) + \ldots| = 
                |\trace(\obsOp\sum_{i=0}^{\infty}\proj_i\dm\proj_i)| = |\trace(\obsOp\sem{\prog}(\dm))| < \infty.
                $
        \item \textit{Schwartz space invariance.}
Without loss of generality $\qvar$ is the leftmost qumode, so full indices split as
$\vvals{m}=(i,\vvals{\imath})$ with $i\in\mathbb{N}_0$, $\vvals{\imath}\in\mathbb{N}_0^{\numModes-1}$.
The Fock matrix elements are
$\dm'_{(i\vvals{\imath})(j\vvals{\jmath})}
=\sum_{k}\bra{i\vvals{\imath}}\proj_k\dm\proj_k\ket{j\vvals{\jmath}}
=\dm_{(i\vvals{\imath})(j\vvals{\jmath})}
$ if $i=j$, and otherwise $\dm'_{(i\vvals{\imath})(j\vvals{\jmath})} = 0$.
Hence, $|\dm'_{\vvals{m}\vvals{k}}|\le|\dm_{\vvals{m}\vvals{k}}|$ for all $\vvals{m},\vvals{k}$, so by
Proposition~\ref{prop:fock-decay}, $\dm'$ maps $\denseDomain{\numModes}$ into $\denseDomain{\numModes}$.
        \end{enumerate}
        \item \textit{Resets.} Let $\prog := \resetIns(\qvar)$ and $\dm' = \sem{\resetIns(\qvar)}(\dm)$.
        \begin{itemize}
            \item \textit{Finite trace.} The proof is analogous as for the measurements case.
            \item \textit{Schwartz space invariance.} 
            Suppose again, without loss of generality $\qvar$ is the leftmost qumode, so $\dm'=\ket{0}\bra{0}\otimes\sigma$, where
$\sigma=\sum_i(\bra{i}\otimes \identity^{\otimes \numModes-1})\dm(\ket{i}\otimes \identity^{\otimes \numModes-1})$ is the partial trace over $\qvar$, with
$\sigma_{\vvals{\imath}~\vvals{\jmath}}=\sum_i\dm_{(i\vvals{\imath})(i\vvals{\jmath})}$.
Since the partial trace inherits the same decay bounds, and both tensor factors preserve the Schwartz space, then $((\ket{0}\bra{0})\otimes\sigma)\ket{\qstate}\in\denseDomain{\numModes}$  for all 
$\ket{\qstate}\in\denseDomain{\numModes}$. Thus
$\dm'\in\alldm$ follows by Proposition~\ref{prop:fock-decay}.
        \end{itemize}
        \item \textit{Atomic unitaries.}  Let $\prog$ be an atomic unitary program, with corresponding unitary operator $U$. We again prove two claims: first, that $\dm' = U\dm U^\dagger$ has finite expectation value for every polynomial observable; and second, that $\denseDomain{\numModes}$ is invariant under $\dm'$.
        \begin{enumerate}
            \item \textit{Finite trace.} 
                For an arbitrary polynomial $\obsOp = \poly(\allPosOp, \allMomOp) \in \polyOps$, we have $U^\dagger \obsOp U \dm$ is trace class because $U^\dagger \obsOp U = \poly(U^\dagger \allPosOp U, U^\dagger\allMomOp U)$ is again a polynomial: 
                each quadrature operator in $\poly$ is substituted by its evolutions (Lemma~\ref{lemma:unitary_substitution}), and the evolution of each quadrature operator is also a polynomial (see Figure~\ref{fig:evol-canon-op}). 
                By our premise $\dm$ has finite expectation value for any polynomial.
                Also,
                 $
                 \trace(U^\dagger \obsOp U \dm) = \trace(U^\dagger \obsOp U \sqrt{\dm} \sqrt{\dm}), 
                 $
                where we are guaranteed that $\sqrt{\dm}$ exists because $\dm$ is positive semi-definite and trace class.
                By the cyclic property of traces we have $|\trace( \obsOp \dm')|< \infty$ because
                \begin{equation*}
                    \begin{aligned}
    \trace((U^\dagger \obsOp U \sqrt{\dm}) \sqrt{\dm}) =
                         \trace((\sqrt{\dm} U^\dagger) (\obsOp U \sqrt{\dm})) = 
                         \trace(\obsOp U \sqrt{\dm}\sqrt{\dm} U^\dagger) = \trace( \obsOp U \dm U^\dagger)
                         = \trace( \obsOp \dm').
                    \end{aligned}
                \end{equation*}
            \item \textit{Schwartz space invariance.} By Lemma~\ref{lemma:invariance_unitaries}, we have $U \in \allOps$. Since $\denseDomain{\numModes}$ is invariant under $U, \dm,$ and $U^\dagger$, it follows that $\dm' = U\dm U^\dagger \in \alldm$.
        \end{enumerate}
        \item Suppose $\prog = \prog_1;\prog_2$. By the induction hypothesis, $\sem{\prog_1}(\dm) \in \alldm$, and applying the induction hypothesis again to $\prog_2$ yields $\sem{\prog_2}(\sem{\prog_1}(\dm)) \in \alldm$. Therefore, $\sem{\prog}(\dm) \in \alldm$.
    \end{itemize}
\end{proof}

Corollary~\ref{corol:id_dual} follows from Theorem~\ref{thm:prog-invariance}, where we explicitly  showed that each atomic program $\prog \in \atomProgs$ satisfies $\trace(\semD{\prog}(\obsOp)\dm) = \trace(\obsOp\sem{\prog}(\dm))$ for arbitrary polynomial observable $\obsOp \in \polyOps$ and density operator $\dm \in \alldm$.
The result for arbitrary programs then follows by structural induction.

\begin{corollary}[Duality of program semantics] For all $\myPL$ programs $\prog \in \progs$, Schwartz density operators $\dm \in \alldm$, and polynomial observable $\obsOp\in \polyOps$, we have $\trace(\semD{\prog}(\obsOp)\dm) = \trace(\obsOp\sem{\prog}(\dm))$.
\label{corol:id_dual}
\end{corollary}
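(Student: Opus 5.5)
We argue by structural induction on the program $\prog$, proving simultaneously that $\semD{\prog}(\obsOp)\in\polyOps$ (Corollary~\ref{corol:poly-obs-closure}) and that $\sem{\prog}(\dm)\in\alldm$ (Theorem~\ref{thm:prog-invariance}). These two facts make both sides of the claimed identity finite and well defined at every stage of the induction. The atomic cases have in effect already been handled inside the proof of Theorem~\ref{thm:prog-invariance}, and we only need to extract them.

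\textbf{Atomic unitary programs.} Let $U$ be the unitary of $\prog$. The chain of equalities in the finite-trace part of the unitary case gives
\[
\trace(U^\dagger\obsOp U\dm)=\trace(\obsOp U\dm U^\dagger)=\trace(\obsOp\sem{\prog}(\dm)).
\]
This uses the factorisation $\dm=\sqrt{\dm}\sqrt{\dm}$ and the cyclic property of Proposition~\ref{prop:properties_traces}, with $U$ and $U^\dagger$ bounded. Since $\semD{\prog}(\obsOp)=U^\dagger\obsOp U$ (up to the permutation $\permuteHs_{\qvars}$, which is unitary and is absorbed the same way), the identity holds.

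\textbf{Measurements.} We write $\semD{\measFock(\qvar)}(\obsOp)=\sum_i\proj_i\obsOp\proj_i$, apply cyclicity to each term to get $\trace(\proj_i\obsOp\proj_i\dm)=\trace(\obsOp\proj_i\dm\proj_i)$, and then exchange the infinite sum with the trace. We expect this exchange to be the main obstacle. To justify it we would work in the Fock basis. By Lemma~\ref{lemma:meas_poly_evol} and Proposition~\ref{prop:normalized-poly}, $\obsOp$ is a finite sum of normally ordered monomials, and each monomial has only finitely many nonzero off-diagonal bands, with entries growing polynomially in the Fock indices (Proposition~\ref{prop:powers_ladder}). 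By Proposition~\ref{prop:fock-decay}(2) the matrix entries of $\dm$ decay faster than any polynomial. The double series $\sum_{\vvals{m},\vvals{k}}\obsOp_{\vvals{m}\vvals{k}}\dm_{\vvals{k}\vvals{m}}$ therefore converges absolutely, so Fubini lets us regroup it by the Fock index at $\qvar$. This regrouping yields exactly $\sum_i\trace(\obsOp\proj_i\dm\proj_i)$ on one side and $\trace(\semD{\prog}(\obsOp)\dm)$ on the other.

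\textbf{Resets.} The same absolute-convergence argument applies with the Kraus operators $\ket{0}\bra{i}$ at $\qvar$. Both sides reduce to the same absolutely convergent sum $\sum_i\sum\bra{0}\cdots\ket{0}\,\dm_{(i\cdot)(i\cdot)}$, matching the formula in the proof of Lemma~\ref{lemma:reset_poly_evol}.

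\textbf{Sequential composition.} For $\prog=\prog_1;\prog_2$ we have $\semD{\prog}=\semD{\prog_1}\circ\semD{\prog_2}$, and therefore
\[
\trace(\semD{\prog_1}(\semD{\prog_2}(\obsOp))\dm)=\trace(\semD{\prog_2}(\obsOp)\sem{\prog_1}(\dm))=\trace(\obsOp\sem{\prog_2}(\sem{\prog_1}(\dm))).
\]
The first equality is the induction hypothesis for $\prog_1$, which applies because $\semD{\prog_2}(\obsOp)\in\polyOps$ by Corollary~\ref{corol:poly-obs-closure}. The second is the induction hypothesis for $\prog_2$, which applies because $\sem{\prog_1}(\dm)\in\alldm$ by Theorem~\ref{thm:prog-invariance}. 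Having these two closure results available is exactly what lets the induction go through.
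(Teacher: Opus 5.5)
Your proof is correct and follows the paper's route: the atomic cases are read off from the finite-trace parts of the proof of Theorem~\ref{thm:prog-invariance}, and sequential composition is handled by structural induction, with Corollary~\ref{corol:poly-obs-closure} and Theorem~\ref{thm:prog-invariance} ensuring that both induction hypotheses apply. Your Fock-basis absolute-convergence argument for measurements and resets additionally justifies exchanging the infinite Kraus sum with the trace, a step the paper uses without comment.
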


In the next section, we develop a Hoare logic for $\myPL$. 
As in finite-dimensional quantum Hoare logic~\cite{sun24Q,hondt06,ying24}, weakest preconditions are computed using the dual semantics of programs. 
The soundness of both the weakest precondition transformer and the proof system relies on Corollaries ~\ref{corol:poly-obs-closure} and ~\ref{corol:id_dual}.

%% file: sections/_6_hoare.tex
\section{Continuous-variable Quantum Hoare logic ($\myHL$)}
\label{sec:hoare}
In this section, we introduce the syntax and semantics of a quantum Hoare logic for continuous-variable quantum computing, together with a sound proof system that is relatively complete with respect to an oracle deciding implication. We also define syntactic substitutions that compute weakest preconditions for assertions over polynomials in the canonical observables.

\subsection{Syntax and semantics of assertions}
\begin{figure}[tbh]
    \centering
    \begin{tabular}{c c c c c}
     $\qvar \in \allqv$\;\; quantum variable & &
     $\cnum \in \mathbb{C}$ \;\; complex number
     & & $\relOp \in \{=, <, >, \leq, \geq\}$\\
     \hline & &  & &
    \end{tabular}
    \begin{tabular}{rl}
    $\QTerm ::=$& 
        $X_\qvar \mid P_\qvar \mid A_\qvar \mid A^\dagger_\qvar \mid \cnum \mid \QTerm + \QTerm \mid \QTerm\cdot\QTerm$ \\
    $\Assertion ::=$& $\QTerm~\relOp~\QTerm \mid \neg \Assertion \mid \Assertion \land \Assertion \mid \Assertion \lor \Assertion$
    \end{tabular}
    \Description{Syntax of quantum terms ($\QTerm$) and assertions ($\Assertion$).}
    \caption{Syntax of quantum terms and assertions.}
    \label{fig:pred_syntax}
\end{figure}

\begin{figure}[tbh]
    \centering
    \begin{tabular}{llll}
    $\sem{X_\qvar} = \positionOp_\qvar$ &
    $\sem{P_\qvar} = \momentumOp_\qvar$ &
    $\sem{A_\qvar} = \destroyOp_\qvar$ &
    $\sem{A_\qvar^\dagger} = \createOp_\qvar$ \\
    $\sem{\cnum} = \cnum\cdot\identity$ &
    $\sem{\QTerm_1 + \QTerm_2} = \sem{\QTerm_1} + \sem{\QTerm_2}$ &
    \multicolumn{2}{l}{}\\
    $\sem{\QTerm_1 \cdot \QTerm_2} = \sem{\QTerm_1} \cdot \sem{\QTerm_2}$ &
    \multicolumn{2}{l}{$\sem{\QTerm_1~\relOp~\QTerm_2} = \{\dm \in \alldm \mid \trace(\sem{\QTerm_1}\dm)~\relOp~\trace(\sem{\QTerm_2}\dm)\}$}\\
$\sem{\Assertion_1~\land~\Assertion_2} = \sem{\Assertion_1} \cap \sem{\Assertion_2}$ & $\sem{\Assertion_1~\lor~\Assertion_2} = \sem{\Assertion_1} \cup \sem{\Assertion_2}$ &
    \multicolumn{2}{l}{$\sem{\neg \Assertion} = \alldm \setminus \sem{\Assertion}$}
    \end{tabular}
    \Description{Semantics of terms and assertions.}
    \caption{Semantics of quantum terms and assertions, where $\relOp \in \{=, <, >, \leq, \geq\}$ and each quantum term denotes an observable, while each assertion denotes a set of Schwartz density operators.}
    \label{fig:pred_semantics}
\end{figure}

The syntax of quantum terms and assertions is shown in Figures~\ref{fig:pred_syntax} and~\ref{fig:pred_semantics}. A {\em quantum term} $\qterm \in \QTerm$ denotes an observable expressible as a polynomial over the canonical operators. 
Although the syntax implicitly permits only complex constants with a finite symbolic representation, it also includes commonly used transcendental constants in quantum computing, such as $\pi$ and $e$.
We sometimes write $N_\qvar$ instead of $A^\dagger_{\qvar}A_{\qvar}$.

\begin{remark} 
A quantum term is not necessarily self-adjoint, and therefore it need not define an observable in the strict sense. For simplicity, we assume that every arithmetic comparison $\qterm_1 \relOp \qterm_2$ is interpreted only when both $\sem{\qterm_1}$ and $\sem{\qterm_2}$ are self-adjoint operators in $\allOpsA$. 
Any polynomial can be symmetrized~\cite{Leonhardt95} to obtain a self-adjoint representative.
\end{remark}

An {\em assertion} $\pre \in \Assertion$ is a Boolean combination of arithmetic comparisons $\qterm_1 ~\relOp~\qterm_2$ denoting the relation that the expectation values of two quantum terms $\qterm_1$ and $\qterm_2$ must satisfy. 
Thus, an assertion 
denotes the set of density operators whose expectation values satisfy
the stated (in)equalities.
Given a density operator $\dm \in \alldm$ and an assertion $\pre \in \Assertion$, we write $\dm \vDash \pre$ for $\dm \in \sem{\pre}$.

Since program semantics preserve $\alldm$, the expectation value of every quantum term appearing in an assertion is well-defined. 
Hence every $\pre \in \Assertion$ is semantically well-defined.

\begin{definition}[Validity] For two assertions $\pre, \post\in\Assertion$ and all $\myPL$  programs $\prog\in\progs$,  
the triple $\{\pre\}~\prog~\{\post\}$ is valid if for all Schwartz density operators $\dm\in\alldm$, if $\dm \vDash \pre$ then $\sem{\prog}(\dm) \vDash \post$.
\end{definition}

Instead of using the Löwner order for implication, we work with semantic entailment on sets of Schwartz density operators. 
Formally, we define entailment as follows.
\begin{definition}[Entailment] For two assertions $\pre, \pre' \in \Assertion$, we write $\pre \Rightarrow \pre'$ if for all Schwartz density operators $\dm \in \alldm$, if $\dm \vDash \pre$ then $\dm \vDash \pre'$.
\end{definition}
This notion of entailment is semantic rather than syntactic, but it can be expressed syntactically as the validity of $\neg \pre \lor \pre'$.

\subsection{Syntactic substitutions}
By Corollary~\ref{corol:id_dual}, it remains to define syntactic substitutions that compute the effect of the dual semantics on assertions. We do so via a transformer
$\dualPred : \progs \times \Assertion \rightarrow \Assertion$
defined in Equation~\ref{eq:dual}.
\begin{equation}
    \dualPred(\prog, \post) = \begin{dcases*}
        \dualPred(\prog_1, \dualPred(\prog_2, \post)) & if $\prog := \prog_1;\prog_2$ \\
        \neg \dualPred(\prog, \post_1) & if $\post:= \neg \post_1$ \\
        \dualPred(\prog, \post_1) \land \dualPred(\prog, \post_2) & if $\post:= \post_1 \land \post_2$ \\ 
         \dualPred(\prog, \post_1) \lor \dualPred(\prog, \post_2) & if $\post:= \post_1 \lor \post_2$ \\ 
         \dualTerm(\prog, \qterm_1)~\relOp~\dualTerm(\prog, \qterm_2) & if $\post:= \qterm_1~\relOp~\qterm_2$.
    \end{dcases*}
    \label{eq:dual}
\end{equation}

Here, $\dualTerm : \atomProgs \times \QTerm \rightarrow \QTerm$ is a transformer that maps an atomic program and a quantum term to a quantum term. For conjunctions and disjunctions, $\dualPred$ distributes homomorphically over the connectives, and for negation it commutes with complementation. Thus, it remains only to define $\dualTerm$ on atomic programs and on products of quantum terms.

\paragraph{Substitutions for unitary atomic programs.}
Let $\qterm, \qterm_0, \qterm_1 \in \QTerm$ be quantum terms. We write $\qterm[\qterm_0 \rightarrow \qterm_1]$ for the quantum term obtained by replacing all occurrences of $\qterm_0$ in $\qterm$ by $\qterm_1$.
Consecutive replacements are interpreted left to right. 

Using the identities from Equation~\ref{eq:id_canon_ops}, we define $\toQAss(\qterm)$ as the equivalent quantum term that contains no ladder operators:
\begin{equation}
   \toQAss(\qterm) = \begin{dcases*}
       \frac{(X_\qvar + \im P_\qvar)}{\sqrt{2}} & if $\qterm = A_\qvar$ \\ 
       \frac{(X_\qvar - \im P_\qvar)}{\sqrt{2}} & if $\qterm = A_\qvar^\dagger$ \\
       \toQAss(\qterm_1) + \toQAss(\qterm_2) & if $\qterm = \qterm_1 + \qterm_2$\\
       \toQAss(\qterm_1) \cdot \toQAss(\qterm_2) & if $\qterm = \qterm_1 \cdot \qterm_2$\\
       \qterm & otherwise.
   \end{dcases*} 
   \quad\quad
\end{equation}
By Lemma~\ref{lemma:unitary_substitution}, backwards unitary evolution of a quantum term is obtained by substituting the canonical observables according to Figure~\ref{fig:evol-canon-op}.

Let $\qterm' = \toQAss(\qterm)$ be the corresponding quantum term without ladder operators. For each unitary atomic program, we define the substitution in Equation~\ref{eq:unitary_subs}. These substitutions occasionally use temporary symbols such as $X_{\qvar}'$, which are eliminated in the final rewritten term.

\begin{equation}
\begin{array}{l}
    \dualTerm(\dispIns (\qvar, \real_0, \real_1), \qterm) = 
    \qterm'[X_\qvar \rightarrow (X_\qvar + \sqrt{2}(\real_0))]
    [P_\qvar \rightarrow (P_\qvar + \sqrt{2}(\real_1))] \\
\dualTerm(\squeezeIns(\qvar,\real), \qterm) = \qterm'[X_\qvar \rightarrow (e^{-\real}\cdot X_\qvar)][P_\qvar \rightarrow (e^{\real}\cdot P_\qvar)] \\
    \makecell[l]{
    \begin{aligned}
    \dualTerm(\rotationIns(\qvar, \real), \qterm) =  
    \qterm'&[X_\qvar \rightarrow (X_\qvar\cdot\cos(\real) - P_\qvar'\cdot \sin(\real))] \\
    & [P_\qvar \rightarrow (P_\qvar \cdot\cos(\real) + X_\qvar\cdot\sin(\real))][P_\qvar' \rightarrow P_\qvar]
    \end{aligned}
    } \\
    \dualTerm(\cubicPhaseIns (\qvar, \real), \qterm) = \qterm'[P_\qvar \rightarrow (P_\qvar + \real X_\qvar\cdot X_\qvar)] \\
    \makecell[l]{
    \begin{aligned}
\dualTerm(\beamSplitterIns(\qvar_0, \qvar_1, \real_0, \real_1), \qterm) =  \qterm'&[X_{\qvar_0} \rightarrow (X_{\qvar_0}\cos(\real_0) - \sin(\real_0)(X_{\qvar_1}' \cos(\real_1) + P_{\qvar_1}'\sin(\real_1)))]\\
        &[P_{\qvar_0} \rightarrow (P_{\qvar_0}\cos(\real_0) - \sin(\real_0)(P_{\qvar_1}'\cos(\real_1) - X_{\qvar_1}'\sin(\real_1)))] \\
        &[X_{\qvar_1} \rightarrow (X_{\qvar_1}\cos(\real_0) + \sin(\real_0)(X_{\qvar_0} \cos(\real_1) - P_{\qvar_0}\sin(\real_1)))] \\
        &[P_{\qvar_1} \rightarrow (P_{\qvar_1}\cos(\real_0) + \sin(\real_0)(P_{\qvar_0}\cos(\real_1) + X_{\qvar_0}\sin(\real_1)))] \\
        &[X_{\qvar_1}' \rightarrow X_{\qvar_1}][P_{\qvar_1}' \rightarrow P_{\qvar_1}]
    \end{aligned}}
    \end{array}
    \label{eq:unitary_subs}
\end{equation}

\begin{figure}[tbh]
    \centering
    \begin{tabular}{rl}
    $\ladderOp ::=$ & $ A_\qvar \mid A^\dagger_\qvar \mid \cnum$\\
    $\prodG ::=$ & 
        $\ladderOp \mid \prodG\cdot\prodG$ \\
    $\normalG ::=$ & 
        $\prodG \mid \normalG + \prodG \mid \normalG - \prodG$ 
    \end{tabular}
    \Description{the syntax of expanded quantum terms over ladder operators.}
    \caption{Syntax of expanded quantum terms over ladder operators.}
    \label{fig:norm_grammar}
\end{figure}

\paragraph{Substitutions for measurements.}
To define backwards evolution for measurement instructions, we first express each polynomial in terms of ladder operators and then eliminate monomials in which the number of annihilation and creation operators associated with a given quantum variable differ. By Proposition~\ref{prop:normalized-poly}, every polynomial can be written in the grammar of Figure~\ref{fig:norm_grammar}.

Thus, given a monomial $\qterm \in \prodG$ and a ladder term $T \in \ladderOp$, the function $\countLadder$ in Equation~\ref{eq:countLadder} counts the occurrences of $T$ in $\qterm$.
\begin{equation}
    \countLadder(T, \qterm) = \begin{dcases*}
        1 & $\qterm = T$ \\
        \countLadder(T, \qterm_1) + \countLadder(T, \qterm_2) & $\qterm = \qterm_1 \cdot \qterm_2$ \\
        0 & otherwise.
    \end{dcases*}
    \label{eq:countLadder}
\end{equation}

Given a quantum term $\qterm$, let $\toNormalG(\qterm) \in \normalG$ denote its expanded form in terms of ladder operators. The substitution for measurements is then defined as follows.
\begingroup
\small
\begin{equation*}
    \dualTerm(\measFock(\qvar), \qterm) = \begin{dcases*}
        0 & if $\toNormalG(\qterm) \in \prodG$ and $\countLadder(A_\qvar, \toNormalG(\qterm)) \neq \countLadder(A_\qvar^\dagger, \toNormalG(\qterm))$ \\ 
        \qterm & if $\toNormalG(\qterm) \in \prodG$ and $\countLadder(A_\qvar, \toNormalG(\qterm)) = \countLadder(A_\qvar^\dagger, \toNormalG(\qterm))$ \\ 
        \dualTerm(\measFock(\qvar), \qterm_1) + \dualTerm(\measFock(\qvar), \qterm_2) &  if $\toNormalG(\qterm) = \qterm_1 + \qterm_2$ 
    \end{dcases*}
\end{equation*}
\endgroup

\paragraph{Substitutions for resets.} 
The substitution for resets considers only diagonal monomials and replaces every ladder operator acting on $\qvar$ with the identity. Each monomial is then multiplied by the vacuum expectation value of that monomial on $\qvar$. Given a monomial $\qterm \in \prodG$, the reset substitution is formally defined as follows.
\begingroup
\small 
\begin{equation*}
    \dualTerm(\resetIns(\qvar), \qterm) = \begin{dcases*}
\getResetCoeff(\toNormalG(\qterm), \qvar, 0)\cdot \toNormalG(\qterm)[A_\qvar\rightarrow 1][A_\qvar^\dagger \rightarrow 1] & if $\toNormalG(\qterm) \in \prodG$ and $\RBasis(\toNormalG(\qterm), \qvar, 0) = 0$ \\ 
     0 & if $\toNormalG(\qterm) \in \prodG$ and $\RBasis(\toNormalG(\qterm), \qvar, 0) \neq 0$ \\    
    \dualTerm(\resetIns(\qvar), \qterm_1) + \dualTerm(\resetIns(\qvar), \qterm_2) &  if $\toNormalG(\qterm) = \qterm_1 + \qterm_2$ 
    \end{dcases*}
\end{equation*}
\endgroup
where $\getResetCoeff$ and $\RBasis$ are helper functions that determine the vacuum expectation value on $\qvar$. They are defined as follows.
\begin{equation*}
\getResetCoeff(\qterm, \qvar, i) = \begin{dcases*}
    0 & $i < 0 $ or ($\qterm = A_\qvar^\dagger$ and $i = 0$) \\
    \sqrt{i + 1} & $\qterm = A_\qvar$ \\
    \sqrt{i} & $\qterm = A_\qvar^\dagger$ and $i>0$ \\
    \getResetCoeff(\qterm_1, \qvar, i)\cdot \getResetCoeff(\qterm_2, \qvar, \RBasis(\qterm_1, \qvar, i)) & $\qterm = \qterm_1 \cdot \qterm_2$ \\
    1 & otherwise,
\end{dcases*}
\end{equation*}
\begin{equation*}
\RBasis(\qterm, \qvar, i) = \begin{dcases*}
    i + 1 & $\qterm = A_\qvar$ \\
    i - 1 & $\qterm = A_\qvar^\dagger$ \\
    \RBasis(\qterm_2, \qvar, \RBasis(\qterm_1, \qvar, i)) & $\qterm = \qterm_1 \cdot \qterm_2 $\\
    i & otherwise.
\end{dcases*}
\end{equation*}
Intuitively, these functions compute the scalar coefficient and the resulting Fock basis index obtained by evaluating the monomial on the vacuum state of $\qvar$. The expectation value of measuring the vacuum on $\qvar$ is then nonzero only when the resulting basis index is $0$.

Corollary~\ref{corol:subst-dual} follows from Lemmas~\ref{lemma:unitary_substitution}, \ref{lemma:meas_poly_evol}, and \ref{lemma:reset_poly_evol}.
\begin{corollary}[Syntactic substitution computes the dual map]\label{corol:subst-dual}
For every atomic program $\prog \in \atomProgs$ and quantum term $\qterm \in \QTerm$, we have $\sem{\dualTerm(\prog, \qterm)} = \semD{\prog}(\sem{\qterm})$.
\end{corollary}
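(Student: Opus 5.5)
The proof proceeds case by case on the atomic program $\prog \in \atomProgs$. In each case we first reduce to a canonical form of the term $\qterm$. Since $\sem{\cdot}$ is a homomorphism from the term syntax into $\allPolyOps$ (it maps $+$ and $\cdot$ to operator sum and product), the identities of Equation~\ref{eq:id_canon_ops} give $\sem{\toQAss(\qterm)} = \sem{\qterm}$ and $\sem{\toNormalG(\qterm)} = \sem{\qterm}$. The latter uses Proposition~\ref{prop:normalized-poly} together with the commutation relations. All these identities hold as operators on $\denseDomain{\numModes}$, and $\denseDomain{\numModes}$ is invariant by Lemma~\ref{lemma:inv-pol}. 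We may therefore replace $\qterm$ by $\qterm' = \toQAss(\qterm)$ in the unitary cases and by $\toNormalG(\qterm)$ in the measurement and reset cases. Linearity of $\semD{\prog}$ then lets us handle sums summand by summand, which matches the recursive clause of $\dualTerm$ on $\qterm_1 + \qterm_2$.

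\emph{Unitary atomic programs.} Let $U$ be the unitary of $\prog$; by Lemma~\ref{lemma:invariance_unitaries}, $U, U^\dagger \in \allOps$. We write $\sem{\qterm'} = \poly(\allPosOp, \allMomOp)$ after the permutation $\permuteHs_{\qvars}$, and note that $U^\dagger \positionOp_{\qvar''} U = \positionOp_{\qvar''}$ (and likewise for momentum) for every variable $\qvar''$ not acted on by $\prog$. Lemma~\ref{lemma:unitary_substitution} then gives $\semD{\prog}(\sem{\qterm'}) = \poly(U^\dagger\allPosOp U, U^\dagger\allMomOp U)$, and Figure~\ref{fig:evol-canon-op} supplies the explicit images of the quadratures.

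It remains to check that the sequential substitutions in Equation~\ref{eq:unitary_subs} implement this \emph{simultaneous} substitution. For displacement, squeeze and cubic phase, the later replacement never touches symbols introduced by the earlier one. One exception needs a word: the cubic phase image $P_\qvar + \real X_\qvar^2$ contains $X_\qvar$, but there is no subsequent $X_\qvar$ replacement. For displacement, the $X_\qvar$ replacement introduces no $P_\qvar$. For rotation and the beam splitter, the primed temporary symbols block the later replacements from re-substituting freshly introduced occurrences, and renaming them back at the end yields exactly the simultaneous substitution.

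We should also note that Lemma~\ref{lemma:unitary_substitution} is stated for self-adjoint polynomials. Its proof, which is conjugation of each factor using $UU^\dagger = \identity$, works verbatim for all of $\allPolyOps$, so the restriction is harmless.

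\emph{Measurement.} Take a monomial $\toNormalG(\qterm) \in \prodG$. As in the proof of Lemma~\ref{lemma:meas_poly_evol}, we reorder it, using only commutations between different modes (which are exact), so that its $\qvar$-part is a word in $\destroyOp_\qvar, \createOp_\qvar$. By Proposition~\ref{prop:powers_ladder} and Equation~\ref{eq:ladder_powers_mul}, such a word maps $\ket{i}$ to a multiple of $\ket{i + m - k}$, where $k$ counts the annihilators and $m$ the creators. Applying the dual $\sum_i \proj_i(\cdot)\proj_i$ therefore kills the monomial exactly when $\countLadder(A_\qvar, \cdot) \neq \countLadder(A_\qvar^\dagger, \cdot)$, and fixes it otherwise. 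This reproduces the two cases of $\dualTerm(\measFock(\qvar), \cdot)$. The observation is word-level, so it does not even require normal ordering.

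\emph{Reset.} Take a monomial and factor it as (the $\qvar$-word) $\otimes$ (the remainder). The dual map is $\sum_i (\ket{i}\bra{0} \otimes \identity)(\cdot)(\ket{0}\bra{i} \otimes \identity)$, which equals $\bra{0} w \ket{0} \cdot \identity_\qvar \otimes (\text{remainder})$ for the $\qvar$-word $w$. We then show by induction on the word that $\getResetCoeff$ and $\RBasis$ compute the scalar $c$ and index $j$ with $w\ket{0} = c\ket{j}$, reading factors in the order of application to the ket. The base cases match the matrix elements of Equation~\ref{eq:ladder_fock}, including the vanishing case $i < 0$.

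It follows that $\bra{0} w \ket{0} = c$ if $j = 0$ and $0$ otherwise. Replacing $A_\qvar, A_\qvar^\dagger$ by $1$ deletes the $\qvar$-part, which gives exactly $\dualTerm(\resetIns(\qvar), \cdot)$.

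\textbf{Main obstacle.} We expect the hardest step to be this reset case. Two things must be checked carefully. First, the product clause of $\getResetCoeff$ and $\RBasis$ must thread the basis index in the same order in which the operators act on $\ket{0}$, that is right to left; a possible fix is a convention on how $\qterm_1 \cdot \qterm_2$ is parsed. Second, the coefficients must be the correct ones: $\destroyOp\ket{i} = \sqrt{i}\,\ket{i-1}$ and $\createOp\ket{i} = \sqrt{i+1}\,\ket{i+1}$. We anticipate needing to reconcile these with the helper definitions as stated.
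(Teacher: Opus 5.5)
Your decomposition matches the paper's, which derives the corollary directly from Lemma~\ref{lemma:unitary_substitution} (with Figure~\ref{fig:evol-canon-op}), Lemma~\ref{lemma:meas_poly_evol} and Lemma~\ref{lemma:reset_poly_evol}. Your unitary and measurement cases are correct, and more careful than the paper on two points: you check that the primed temporaries make the sequential substitutions in Equation~\ref{eq:unitary_subs} simultaneous, and you note that Lemma~\ref{lemma:unitary_substitution} holds on all of $\allPolyOps$.

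The reset case, however, contains a step that fails as written. You propose to prove by induction that $c=\getResetCoeff(w,\qvar,0)$ and $j=\RBasis(w,\qvar,0)$ satisfy $w\ket{0}=c\ket{j}$. This is false for the paper's definitions. For $w=A_\qvar$ they give $c=1$ and $j=1$, whereas $\destroyOp\ket{0}=0$. The base clauses are simply not the ket matrix elements of Equation~\ref{eq:ladder_fock}.

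Your suggested repairs do not help. The product clause threads the index from $\qterm_1$ into $\qterm_2$ and is associative, so no bracketing convention changes the order. Actually reversing the order with the given base clauses computes $\bra{0}w^{\mathrm{rev}}\ket{0}$, which is wrong. For $w=A_\qvar A^\dagger_\qvar$ it yields $0$, because the rightmost $A^\dagger_\qvar$ at index $0$ has coefficient $0$, instead of $\bra{0}\destroyOp\createOp\ket{0}=1$. Changing the coefficients instead would prove the corollary for a different $\dualTerm$.

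The missing observation is that the helpers are correct as stated: they compute the action on the \emph{bra}, reading the word left to right. From Equation~\ref{eq:ladder_fock} one gets $\bra{i}\destroyOp=\sqrt{i+1}\,\bra{i+1}$ and $\bra{i}\createOp=\sqrt{i}\,\bra{i-1}$, which is $0$ for $i=0$. These are exactly the base clauses for $A_\qvar$ and $A^\dagger_\qvar$, while scalars and other-mode factors act trivially.

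So the invariant to prove by induction is $\bra{i}\,w=\getResetCoeff(w,\qvar,i)\,\bra{\RBasis(w,\qvar,i)}$, with the convention $\bra{j}:=0$ for $j<0$ (matching the clause that returns $0$ when $i<0$). The product clause is then literally $\bra{i}w_1w_2=(\bra{i}w_1)\,w_2$. It follows that $\bra{0}w\ket{0}=c$ if $j=0$ and $0$ otherwise, which is exactly the case split of $\dualTerm(\resetIns(\qvar),\cdot)$. As a check, $A_\qvar A^\dagger_\qvar$ gives $c=1$, $j=0$, and $A^\dagger_\qvar A_\qvar$ gives $c=0$.

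With this invariant in place of yours, the rest of your reset argument goes through unchanged, and no reconciliation of the definitions is needed.
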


\subsection{Weakest preconditions}
Weakest preconditions are computed in the Heisenberg picture, in the same spirit as traditional quantum Hoare logics for discrete systems~\cite{sun24Q,Ying12,ying24,zhou19,liu19,hondt06}. We formally define weakest preconditions as follows.

\begin{definition}[Weakest precondition] For all postconditions $\post \in \Assertion$ and $\myPL$ programs $\prog \in \progs$, 
the weakest precondition of $\prog$ with respect to $\post$ is an assertion
$\weakp{\prog}{\post} \in \Assertion$ such that (i)~the triple $\{ \weakp{\prog}{\post} \}~\prog~\{\post\}$ is valid and (ii)~for all assertions $\pre \in \Assertion$, if $\{ \pre \}~\prog~\{\post\}$ is valid, then $\pre \Rightarrow \weakp{\prog}{\post}$.
\label{def:wp}
\end{definition}

\begin{lemma}[Exactness]\label{lemma:exactness}
For every $\myPL$ program $\prog\in\progs$, assertion $\post\in\Assertion$, we have
$\sem{\dualPred(\prog, \post)}=\{\dm\in\alldm \mid \sem{\prog}(\dm)\in\sem{\post}\}$.
\end{lemma}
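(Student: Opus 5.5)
We argue by structural induction on $\prog$, and for each program by an inner structural induction on the postcondition $\post$. The identity to establish is $\dm\vDash\dualPred(\prog,\post)$ iff $\sem{\prog}(\dm)\vDash\post$ for every $\dm\in\alldm$. By Theorem~\ref{thm:prog-invariance}, $\sem{\prog}(\dm)\in\alldm$ whenever $\dm\in\alldm$. Hence both sides of the claimed equation are subsets of $\alldm$, and the semantics of assertions (Figure~\ref{fig:pred_semantics}) applies meaningfully to $\sem{\prog}(\dm)$.

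\emph{Atomic programs, comparison case.} Let $\prog\in\atomProgs$ and $\post=\qterm_1\relOp\qterm_2$. By Equation~\ref{eq:dual}, $\dualPred(\prog,\post)=\dualTerm(\prog,\qterm_1)\relOp\dualTerm(\prog,\qterm_2)$. By Corollary~\ref{corol:subst-dual}, $\sem{\dualTerm(\prog,\qterm_j)}=\semD{\prog}(\sem{\qterm_j})$. Each $\sem{\qterm_j}$ is a self-adjoint polynomial in $\polyOps$; this is the standing interpretation convention for comparisons. Corollary~\ref{corol:poly-obs-closure} then gives $\semD{\prog}(\sem{\qterm_j})\in\polyOps$, so the comparison on the left is well formed. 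Corollary~\ref{corol:id_dual} yields
$\trace(\semD{\prog}(\sem{\qterm_j})\dm)=\trace(\sem{\qterm_j}\sem{\prog}(\dm))$
for $j=1,2$. Both expectation values are therefore literally equal, and $\dm\vDash\dualPred(\prog,\post)$ iff $\sem{\prog}(\dm)\vDash\post$.

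\emph{Boolean connectives.} Negation: $\sem{\neg\dualPred(\prog,\post_1)}=\alldm\setminus\sem{\dualPred(\prog,\post_1)}$, which by the inner hypothesis equals $\alldm\setminus\{\dm\mid\sem{\prog}(\dm)\in\sem{\post_1}\}$. Because $\sem{\prog}(\dm)\in\alldm$, this is exactly $\{\dm\mid\sem{\prog}(\dm)\in\alldm\setminus\sem{\post_1}\}$. Here invariance is essential: without it, states mapped outside $\alldm$ would break the complement. Conjunction and disjunction commute directly with preimages, since $\sem{\cdot}$ maps them to intersection and union.

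\emph{Sequential composition.} Let $\prog=\prog_1;\prog_2$. Apply the outer hypothesis for $\prog_1$ to the assertion $\dualPred(\prog_2,\post)$, then the hypothesis for $\prog_2$ to $\post$, using $\sem{\prog_1}(\dm)\in\alldm$:
$\dm\vDash\dualPred(\prog_1,\dualPred(\prog_2,\post))$ iff $\sem{\prog_1}(\dm)\vDash\dualPred(\prog_2,\post)$ iff $\sem{\prog_2}(\sem{\prog_1}(\dm))\vDash\post$.

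\emph{Main obstacle.} The substantive step is the comparison case, which rests on Corollaries~\ref{corol:subst-dual} and \ref{corol:id_dual}. One must check that $\dualTerm$ of a self-adjoint term stays self-adjoint, so that the rewritten comparison is interpreted. For measurement and reset this holds because the duals of CPTP maps preserve adjoints, and the rewritten term is then equal as an operator on $\denseDomain{\numModes}$. A second subtlety is the mismatch that $\dualPred$ recurses on the structure of $\post$ before splitting $\prog$. This is harmless, because the sequence clause can be applied first in every case: the inner induction on $\post$ is carried out only for atomic $\prog$, and composition is handled separately as above.
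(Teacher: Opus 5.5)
Your proof is correct and follows the paper's argument. Both use induction on $\prog$ with a sub-induction on $\post$ for atomic programs, discharge the comparison case via Corollaries~\ref{corol:subst-dual}, \ref{corol:id_dual} and~\ref{corol:poly-obs-closure}, and handle sequencing by applying the hypothesis twice; your explicit use of Theorem~\ref{thm:prog-invariance} in the negation and sequencing cases spells out a step the paper leaves implicit.
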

\begin{proof}
The proof is by induction on $\prog$.
\begin{itemize}
    \item \textit{Atomic programs.} By sub-induction on $\post$, given an atom $\post=(\qterm_1\relOp\qterm_2)$ and $\dm\in\alldm$, we have 
        $\dm\vDash\dualPred(\prog, \post)$ iff $ \trace(\sem{\dualPred(\prog, \qterm_1)}\dm)\relOp\trace(\sem{\dualPred(\prog, \qterm_2)}\dm).
        $
       By Corollary~\ref{corol:subst-dual}, this is equivalent to
    $\trace(\semD{\prog}(\sem{\qterm_1})\,\dm)\relOp\trace(\semD{\prog}(\sem{\qterm_2})\,\dm).$
        By Corollary~\ref{corol:id_dual} we obtain
    $\trace(\sem{\qterm_1}\,\sem{\prog}(\dm))\relOp\trace(\sem{\qterm_2}\,\sem{\prog}(\dm)),$ where well-formed real-valued traces are guaranteed by Corollary~\ref{corol:poly-obs-closure}.
    Hence $\sem{\prog}(\dm)\vDash\post$.
    It follows immediately for Boolean connectives from the fact that $\dualPred$ distributes over them.
    \item \textit{Sequencing.} Let $\prog=\prog_1;\prog_2$. Since $\dualPred(\prog_1;\prog_2,\post)=\dualPred(\prog_1, \dualPred(\prog_2, \post))$, we have  $\dm\vDash\dualPred(\prog_1, \dualPred(\prog_2, \post))$ iff $\sem{\prog_1}(\dm)\vDash \dualPred(\prog_2, \post)$. By the induction hypothesis, this is equivalent to 
    $\sem{\prog_2}(\sem{\prog_1}(\dm)) = \sem{\prog_1;\prog_2}(\dm)\vDash\post$.
\end{itemize}
\end{proof}

\begin{corollary}[Weakest precondition] For every $\myPL$ program $\prog\in\progs$ and assertion $\post\in\Assertion$, we have $\dm \vDash \weakp{\prog}{\post}$ iff $\dm \vDash \dualPred(\prog, \post)$ for all Schwartz density operators $\dm \in \alldm$.
\label{cor:wp}
\end{corollary}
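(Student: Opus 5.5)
The plan is to show that $\dualPred(\prog,\post)$ itself satisfies the two clauses of Definition~\ref{def:wp}, with Lemma~\ref{lemma:exactness} supplying the key semantic fact. Because entailment and validity are defined extensionally, as statements about sets of Schwartz density operators, any assertion that meets Definition~\ref{def:wp} must denote exactly the same subset of $\alldm$ as $\dualPred(\prog,\post)$. The stated equivalence $\dm\vDash\weakp{\prog}{\post}$ iff $\dm\vDash\dualPred(\prog,\post)$ then follows. In particular, we may take $\weakp{\prog}{\post}:=\dualPred(\prog,\post)$, which also shows that the weakest precondition exists within $\Assertion$.

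\emph{Clause (i), validity of $\{\dualPred(\prog,\post)\}\,\prog\,\{\post\}$.} Take any $\dm\in\alldm$ with $\dm\vDash\dualPred(\prog,\post)$. Lemma~\ref{lemma:exactness} gives $\sem{\prog}(\dm)\in\sem{\post}$, which is exactly $\sem{\prog}(\dm)\vDash\post$. Note that $\sem{\prog}(\dm)\in\alldm$ holds by Theorem~\ref{thm:prog-invariance}, so membership in $\sem{\post}\subseteq\alldm$ is meaningful.

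\emph{Clause (ii), weakness.} Suppose $\{\pre\}\,\prog\,\{\post\}$ is valid and $\dm\vDash\pre$. Validity gives $\sem{\prog}(\dm)\vDash\post$. The reverse inclusion of Lemma~\ref{lemma:exactness} then yields $\dm\vDash\dualPred(\prog,\post)$, so $\pre\Rightarrow\dualPred(\prog,\post)$.

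\emph{Uniqueness.} Let $w$ be any assertion satisfying (i) and (ii). By (ii) applied to $w$, together with the fact that $\dualPred(\prog,\post)$ is a valid precondition, we get $\dualPred(\prog,\post)\Rightarrow w$. Conversely, (i) for $w$ gives $\sem{w}\subseteq\{\dm\mid\sem{\prog}(\dm)\in\sem{\post}\}=\sem{\dualPred(\prog,\post)}$. Hence the two assertions denote the same set.

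I expect no real obstacle, since the substantive work lies in Lemma~\ref{lemma:exactness}. The only care needed is that the lemma is an exact set equality, not merely an inclusion, and that both directions are used: the forward inclusion for (i) and the reverse inclusion for (ii).
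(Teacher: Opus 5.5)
Your proposal is correct and is essentially the paper's argument. The paper states this corollary with no separate proof, as an immediate consequence of Lemma~\ref{lemma:exactness}. You unpack it explicitly: you check both clauses of Definition~\ref{def:wp} for $\dualPred(\prog,\post)$, using the two inclusions of the exact set equality, and you note that any two weakest preconditions denote the same subset of Schwartz density operators.
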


\begin{figure}[thb]
    \centering
    \scriptsize
    \[
    \begin{array}{lcr}
     \inferrule*[right=Atom]{
        \prog \in \atomProgs
     }{
     \{ \dualPred(\prog, \post) \}\; \prog \; \{\post\}} &
        \inferrule*[right=Seq]{\{\post_1\}~\prog_1~\{\post_2\} \quad \{\post_2\}~\prog_2~\{\post_3\}}{\{\post_1\}\; \prog_1;\prog_2 \;\{\post_3\}} &
        \inferrule*[right=conseq]{
        \pre \Rightarrow \pre'  \quad\quad \{\pre'\}~\prog~\{\post'\}\quad\quad \post' \Rightarrow \post
        }{\{\pre\}\;\prog\;\{ \post\}} 
    \end{array}
    \]
    \Description{Proof system}
    \caption{Proof system for $\myHL$.}
    \label{fig:proof_system}
\end{figure}

\begin{theorem} Let $\vdash$ denote syntactic derivability using the proof system of Figure~\ref{fig:proof_system}. 
For all $\myPL$ quantum programs $\prog \in \progs$ and assertions 
$\pre, \post \in \Assertion$, the following hold:
\begin{itemize}
    \item \textbf{Soundness.} If $\vdash \{\pre\}~\prog~\{\post\}$, then $\{\pre\}~\prog~\{\post\}$ is valid.
    \item \textbf{Weakest precondition derivability.} $\vdash \{\weakp{\prog}{\post}\}~\prog~\{\post\}$.
    \item \textbf{Relative completeness.} If $\{\pre\}~\prog~\{\post\}$ is valid, then $\vdash \{\pre\}~\prog~\{\post\}$ relative to an oracle deciding entailment.
\end{itemize}
\end{theorem}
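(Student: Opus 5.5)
We prove the three claims in order, relying on Lemma~\ref{lemma:exactness} (Exactness) and Corollary~\ref{corol:wp} throughout. Throughout we use the fact that $\dualPred(\prog,\post)$ is itself an assertion in $\Assertion$: by Corollary~\ref{corol:poly-obs-closure} and Corollary~\ref{corol:subst-dual} the transformed terms are again polynomial quantum terms, self-adjoint whenever the original ones are. Consequently $\dualPred(\prog,\post)$ is a legitimate candidate for $\weakp{\prog}{\post}$.

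\textbf{Soundness} is by induction on the derivation of $\vdash \{\pre\}~\prog~\{\post\}$.
\begin{itemize}
\item For rule \textsc{Atom}, take $\dm\in\alldm$ with $\dm\vDash\dualPred(\prog,\post)$. Lemma~\ref{lemma:exactness} applied to the atomic program $\prog$ gives $\sem{\prog}(\dm)\in\sem{\post}$ directly. Theorem~\ref{thm:prog-invariance} guarantees $\sem{\prog}(\dm)\in\alldm$, so satisfaction of $\post$ is meaningful.
\item For \textsc{Seq}, if $\dm\vDash\post_1$ then the induction hypothesis gives $\sem{\prog_1}(\dm)\vDash\post_2$. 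This state lies in $\alldm$ by Theorem~\ref{thm:prog-invariance}, so the second hypothesis yields $\sem{\prog_2}(\sem{\prog_1}(\dm))\vDash\post_3$, which equals $\sem{\prog_1;\prog_2}(\dm)$ by Figure~\ref{fig:pl_semantics}.
\item For \textsc{conseq}, we simply chain the two entailments (subset inclusions in $\alldm$) around the valid middle triple.
\end{itemize}

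\textbf{Weakest precondition derivability} has two parts. First, we show by induction on $\prog$ that $\vdash\{\dualPred(\prog,\post)\}~\prog~\{\post\}$.
\begin{itemize}
\item In the atomic case this is exactly rule \textsc{Atom}.
\item For $\prog=\prog_1;\prog_2$, the induction hypothesis gives $\vdash\{\dualPred(\prog_2,\post)\}~\prog_2~\{\post\}$ and $\vdash\{\dualPred(\prog_1,\dualPred(\prog_2,\post))\}~\prog_1~\{\dualPred(\prog_2,\post)\}$. Rule \textsc{Seq} combines them, and the resulting precondition is by definition $\dualPred(\prog_1;\prog_2,\post)$.
\end{itemize}
Second, by Corollary~\ref{cor:wp}, $\weakp{\prog}{\post}$ and $\dualPred(\prog,\post)$ are satisfied by the same Schwartz density operators, so they entail each other. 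One application of \textsc{conseq} with the oracle entailment $\weakp{\prog}{\post}\Rightarrow\dualPred(\prog,\post)$ and the trivial $\post\Rightarrow\post$ then gives $\vdash\{\weakp{\prog}{\post}\}~\prog~\{\post\}$.

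\textbf{Relative completeness.} Suppose $\{\pre\}~\prog~\{\post\}$ is valid. For any $\dm\vDash\pre$ we have $\sem{\prog}(\dm)\in\sem{\post}$, so by Lemma~\ref{lemma:exactness} $\dm\vDash\dualPred(\prog,\post)$, that is, $\pre\Rightarrow\dualPred(\prog,\post)$. Equivalently, condition (ii) of Definition~\ref{def:wp} gives $\pre\Rightarrow\weakp{\prog}{\post}$. Supplying this entailment from the oracle, together with the derivation from the previous part and $\post\Rightarrow\post$, rule \textsc{conseq} yields $\vdash\{\pre\}~\prog~\{\post\}$.

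The step needing most care is the well-definedness behind Exactness in the sequencing case. The intermediate state $\sem{\prog_1}(\dm)$ must stay in $\alldm$, and the substituted terms must remain self-adjoint polynomials, so that the trace comparisons in $\dualPred(\prog_2,\post)$ are real and finite. We discharge this by citing Theorem~\ref{thm:prog-invariance} and Corollaries~\ref{corol:poly-obs-closure} and~\ref{corol:id_dual}.
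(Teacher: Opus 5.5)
Your proof is correct and follows the paper's argument essentially step for step. Soundness goes by induction on derivations, derivability by induction on $\prog$ using \textsc{Atom} and \textsc{Seq}, and relative completeness by \textsc{Conseq} with the entailment $\pre \Rightarrow \weakp{\prog}{\post}$. The one refinement is that you first derive $\{\dualPred(\prog,\post)\}~\prog~\{\post\}$ and then pass to an arbitrary $\weakp{\prog}{\post}$ via \textsc{Conseq}, which makes explicit the identification $\weakp{\prog}{\post} = \dualPred(\prog,\post)$ that the paper uses silently. That extra step, and its oracle call, disappear if you simply take $\dualPred(\prog,\post)$ as the weakest precondition.
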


%: implication in the proof of relative completenes . need oracle for entailment

\begin{proof}
    {\bf Soundness.} The proof is by induction on the derivation rules of our proof system. We only show soundness of the rules \textsc{seq} and \textsc{conseq}, since \textsc{atom} follows from  Corollary~\ref{cor:wp}. 
    Let $\dm \in \alldm$ be a Schwartz density operator. 
    \begin{itemize}
        \item Rule [\textsc{seq}].
        Suppose that $\{\post_1\}~\prog_1~\{\post_2\}$ and $\{\post_2\}~\prog_2~\{\post_3\}$ are valid, and that $\dm \vDash \post_1$ for some arbitrary assertions $\post_1, \post_2, \post_3 \in \Assertion$ and quantum programs $\prog_1,\prog_2 \in \progs$. 
        Then, $\sem{\prog_1}(\dm) \vDash \post_2$ and therefore, $\sem{\prog_2}(\sem{\prog_1}(\dm)) = \sem{\prog_1;\prog_2}(\dm) \vDash \post_3$.
        \item Rule [\textsc{conseq}].  
        Suppose that $\pre \Rightarrow \pre'$, $\post' \Rightarrow \post$, and 
    $\{\pre'\}~\prog~\{\post'\}$ is valid. 
    Let $\dm \in \alldm$ such that $\dm \vDash \pre$.
    Since $\pre \Rightarrow \pre'$, then $\dm \vDash \pre'$. 
    By validity of the premise, $\sem{\prog}(\dm) \vDash \post'$; 
    and since $\sem{\prog}(\dm) \in \alldm$ (Theorem~\ref{thm:prog-invariance}), $\post' \Rightarrow \post$ gives $\sem{\prog}(\dm) \vDash \post$. 
    Hence, $\{\pre\}~\prog~\{\post\}$ is valid.
    \end{itemize}
    {\bf Weakest precondition derivability.} We show $\vdash \{\weakp{\prog}{\post}\}~\prog~\{\post\}$ by induction on the structure of $\prog$.
    \begin{itemize}
        \item \textit{Atomic programs.} The rule \textsc{Atom} explicitly derives the weakest precondition.
        \item \textit{Sequencing.} Let $\prog = \prog_1;\prog_2$. By Equation~\ref{eq:dual} and Corollary~\ref{cor:wp}, we have 
        $\weakp{\prog_1;\prog_2}{\post} = \weakp{\prog_1}{\weakp{\prog_2}{\post}}$.
        By the induction hypothesis,
        applying \textsc{Seq} with intermediate assertion $\weakp{\prog_2}{\post}$ derives
        $\vdash \{\weakp{\prog_1}{\weakp{\prog_2}{\post}}\}~\prog_1;\prog_2~\{\post\}$,
        that is, $\vdash \{\weakp{\prog_1;\prog_2}{\post}\}~\prog_1;\prog_2~\{\post\}$.
    \end{itemize}
    {\bf Relative completeness.} Assume that $\{\pre\}~\prog~\{\post\}$ is valid. By the definition of weakest precondition, we have $\pre \Rightarrow \weakp{\prog}{\post}$. Since the weakest precondition is derivable, i.e. $\vdash \{\weakp{\prog}{\post}\}~\prog~\{\post\}$, the rule \textsc{Conseq} with $\pre' := \weakp{\prog}{\post}$ derives $\vdash \{\pre\}~\prog~\{\post\}$, provided that an oracle decides the entailment $\pre \Rightarrow \pre'$. Equivalently, derivability is relative to an oracle for the validity of the triple $\{\pre\}~\skipIns~\{\pre'\}$, or, in syntactic form, the tautology of the implication $\pre \rightarrow \pre'$ expressed in the assertion language as $\neg \pre \lor \pre'$.
\end{proof}

%% file: sections/_7_experiments.tex
% ---- colors (tweak to taste) ----
\definecolor{hoareBg}{HTML}{ffffff}     % box background
\definecolor{hoareFrame}{HTML}{C3CCD6}  % frame & separator rules
\definecolor{hoareKw}{HTML}{1A57A8}     % keywords
\definecolor{hoareNum}{HTML}{9AA4B0}    % line numbers
\definecolor{commentColor}{HTML}{546e7a}

% ---- notation ----
\newcommand{\assert}[1]{\left\{\,#1\,\right\}}
\newcommand{\kw}[1]{\textcolor{hoareKw}{\mathtt{#1}}}
\newcommand{\codeComment}[1]{\textcolor{commentColor}{\footnotesize \# #1}}
\newcommand{\asgn}{\mathbin{:=}}
\newcommand{\lnum}[1]{\textcolor{hoareNum}{\footnotesize #1}}   % text-mode now
\newcommand{\alghead}[2]{\textbf{\textsf{#1}}\,($#2$)} % name + params
\section{Experiments}
\label{sec:experiments}
We implemented a prototype verifier in Python 3.10.13 that derives weakest preconditions for continuous-variable quantum programs. The implementation uses SymPy 1.14.0 to manipulate polynomials over quadrature operators symbolically.

We evaluate the verifier in three representative use cases.
First, we verify standard textbook algorithms from the literature against their specifications and show that weakest preconditions recover known correctness results while also deriving symbolic formulas that characterize the inherent noise of practical implementations.
Second, we prove the equivalence between unitary gates and their decompositions used in Strawberry Fields. Third, 
we use weakest preconditions to estimate the Fock-space truncation required for classical simulation.

\subsection{Program verification}
\subsubsection{Homodyne measurements via Fock measurements~\cite{Lloyd99,Braunstein05,Leonhardt95}}
\label{sec:exp_fock}
Homodyne measurements describe a projective measurement onto the unphysical eigenstates of the position operator~\footnote{A prior rotation allows momentum to be measured instead.}.

The program of Figure~\ref{fig:HomodyneMeas} measures the position of $\qvar_0$: an ancilla $\qvar_1$,
displaced by a symbolic constant $\real \in \mathbb{R}$, is coupled to the target $\qvar_0$, and the symbolic position $x \in \mathbb{R}$ of $\qvar_0$ is known to be proportional to the difference of the two photon-number
outcomes: i.e.,
$$
\expectedV{\positionOp_{\qvar_0}}
=
    (\expectedV{\numOp_{\qvar_1}} - \expectedV{\numOp_{\qvar_0}})
/
    \real
 = x. 
$$

Existing descriptions of this measurement scheme recommend choosing a sufficiently large displacement amplitude $\real$, but do not quantify how large it should be or how it affects the measurement accuracy~\cite{Lloyd99,Braunstein05,Leonhardt95}. 
The weakest precondition derived in Figure~\ref{fig:HomodyneMeas} formally proves that the photon-number difference is an unbiased estimator of the position expectation: the expected measurement outcome is exactly $\expectedV{\positionOp_{\qvar_0}}$, independently of the displacement value $\real$.

More importantly, our derivation quantifies the effect of a finite displacement on the measurement variance. While the displacement value shows no relevance for the expected mean value, 
the weakest precondition for the quadratic observable
 $\left(N_{q1} - N_{q0}\right)^2/{r^2}$ reveals an additional single-shot noise term 
 $\sem{N_{q0}/r^{2}}$ which is added to the true position variance 
 $\expectedV{\positionOp^2_{\qvar_0}} - \expectedV{\positionOp_{\qvar_0}}^2$.
Since $\sem{N_{q0}/r^{2}}$ is a positive semi-definite observable, the single-shot noise is always nonnegative (vanishing only when $\qvar_0$ is in the vacuum state), and decreases as $|r|$ increases.
Consequently, the derived weakest precondition provides a quantitative criterion for selecting the displacement amplitude in practice, explicitly characterizing the trade-off between finite resources and measurement accuracy.

\input{img/verification/t_HomodyneMeas.tex}

\subsubsection{Deutsch-Jozsa algorithm~\cite{Jozsa92, Braunstein03}}
The Deutsch-Jozsa algorithm decides whether a given oracle function is
constant or balanced. 
In the continuous-variable setting the oracle is an additive oracle
$
  U_f:\ \ket{x}\ket{y}\ \mapsto\ \ket{x}\,\ket{y + f(x)}
$, where $x,y\in\mathbb{R}$,
acting on a query mode $q_0$ and an ancilla mode $q_1$, where $f:\mathbb{R}\to\mathbb{R}$.
We consider the linear oracle
$f(x)=s\,x$ that is constant if $s=0$ and balanced otherwise.

For this linear oracle the 
derivation shown in Figure~\ref{fig:dj_tel}(a) yields an expected position of  $c = -(\pi s)/2$ for $\qvar_0$, which is zero precisely when the oracle is constant. Thus, the derived expectation reproduces the correctness criterion of the Deutsch-Jozsa algorithm.
However, finite squeezing introduces an additional noise term proportional to $(s^2+1)e^{-2r}$ in the position variance.
This contribution vanishes exponentially as $r\rightarrow \infty$, recovering the ideal behavior.
Finally, the derived expectation immediately yields the slope of the linear oracle,
$
s=-\frac{2}{\pi}\expectedV{\positionOp_{\qvar_0}},
$
which corresponds to the continuous-variable analogue of the Bernstein-Vazirani algorithm.

\begin{figure}[tbh]
    \centering
    \input{img/verification/t_DeutschJozsa}
    \hfill
    \input{img/verification/t_StateTeleportation}
    \caption{(a)~Hoare triple for the continuous-variable Deutsch--Jozsa program
$\progName{DeutschJozsa}\,($r,\,s$)$, with squeezing parameter $r$, oracle slope $s$,
query mode $q_0$, and ancilla $q_1$. 
The derived weakest precondition proves that the expected position of the query mode is $-\pi s/2$, distinguishing constant ($s=0$) from balanced ($s\neq0$) linear oracles. The quadratic observable additionally quantifies the finite-squeezing noise, which decreases exponentially with the squeezing parameter $r$. (b)~Hoare triple for the state teleportation program
$\progName{StateTeleportation}$ ($q_0$: input, $q_1$: Alice, $q_2$: Bob).
The derived weakest precondition certifies 
that the input quadratures are teleported exactly onto Bob's mode $q_2$ in the limit of infinite squeezing  $r\to\infty$.}
\label{fig:dj_tel}
\end{figure}

\subsubsection{Superdense coding~\cite{Ban99,Braunstein00,Braunstein05}}

Superdense coding enables Alice to communicate two real numbers $(x,p)$ to Bob by transmitting a single qumode, provided they initially share an entangled resource. Figure~\ref{fig:SuperdenseCoding} implements the continuous-variable protocol. Lines~2-6 prepare a two-mode squeezed state with squeezing parameter $r$, after which Alice receives $\qvar_0$, Bob receives $\qvar_1$, and Alice encodes her message by applying a displacement to $\qvar_0$ (line~9). Alice then sends her qumode to Bob, who decodes the message using a beam splitter (line~12) followed by homodyne measurements.

The protocol relies on a shared entangled resource that approaches an ideal Einstein--Podolsky--Rosen (EPR) state only in the limit $r\rightarrow\infty$. Consequently, perfect communication is achieved only via infinite-squeezing. Our derivation proves that Bob recovers the encoded message by measuring the position quadrature of $\qvar_0$ and the momentum quadrature of $\qvar_1$, yielding the expected values
$\expectedV{\positionOp_{\qvar_0}} = x$ and $
\expectedV{\momentumOp_{\qvar_1}} = p$.

The weakest precondition derived for the quadratic terms shows that finite-squeezing introduces error terms
$
    \expectedV{\positionOp_{\qvar_0}^2} - \expectedV{\positionOp_{\qvar_0}}^2 = \expectedV{\momentumOp_{\qvar_1}^2} - \expectedV{\momentumOp_{\qvar_1}}^2 = e^{- 2 r}/2
$ in the variance. 
Thus, our verifier not only establishes the correctness of the protocol but also quantifies the degradation caused by finite squeezing, providing an explicit relationship between the squeezing parameter $\real$ and the communication accuracy.

\input{img/verification/t_SuperdenseCoding}

\subsubsection{Quantum-state teleportation~\cite{Bennett93,Furusawa11,Braunstein98}}
Continuous-variable quantum teleportation allows a sender (Alice) to transmit the quantum state of a qumode to a receiver (Bob) using a shared entangled resource and classical communication. Ideally, Bob's final state is identical to Alice's input state. Figure~\ref{fig:dj_tel}(b) shows an equivalent purely unitary realization of this protocol.

The program first prepares a two-mode squeezed (EPR-type) state shared between Alice ($\qvar_1$) and Bob ($\qvar_2$) using the squeezing parameter $r$ (lines~2-4). Alice then couples the input mode $\qvar_0$ to her half of the entangled resource and transfers the measurement outcomes to Bob through controlled operations (lines~6-10). As in the previous examples, the shared resource approaches an ideal EPR state only in the limit $r\rightarrow\infty$, and therefore perfect teleportation is achieved only in the infinite-squeezing limit.

The derived triple quantifies the effect of finite squeezing. The weakest precondition shows that Bob's quadrature operators 
reveal additive noise terms proportional to $e^{-r}$ in the first moments of his final quantum state. 
Consequently, Bob's expected position and momentum converge to those of Alice's input state as $r\rightarrow\infty$, while the teleportation error decreases exponentially with the squeezing parameter.

The characteristic function $\charactF_{in}(x, p)$ for an arbitrary input state $\dm$ on $\qvar_0$ may be written as $\charactF_{in}(x, p) = \trace( \exp(\im(x \sem{X_{\qvar_0}} + p \sem{P_{\qvar_0}}))\dm)$. 
 Whereas, the characteristic function $\charactF_b(x, p)$ of Bob's state at the end of the program is  
 \[
 \begin{aligned}
     \charactF_b(x, p) &= \trace( 
        \exp(
            \im(x \sem{X_{\qvar_2}} + p \sem{P_{\qvar_2}})
        )
        \sem{\prog}(\dm)
        ) 
      = \trace( 
        \exp(
            \im(
                x \semD{\prog}(\sem{X_{\qvar_2}}) + p \semD{\prog}(\sem{P_{\qvar_2}})
            )
        )
        \dm
        ) \\
     &= \trace( 
            \exp(
                \im(x 
                    (\sem{X_{q0} + \sqrt{2}e^{-r}X_{\qvar_2}}) + 
                    p (\sem{P_{q0} + \sqrt{2}e^{-r}P_{\qvar_1}})
                )
            )\dm
        ).
 \end{aligned}
\]
Since the characteristic function uniquely determines a continuous-variable quantum state, Bob's final state coincides with Alice's input state in the infinite-squeezing limit.

An alternative proof can be obtained via the Choi--Jamiołkowski isomorphism~\cite{Choi75}, by exploiting the fact that the shared resource is a Gaussian state completely characterized by its first and second moments~\cite{Weedbrook12}. In contrast, our proof follows directly from the derived Heisenberg evolution of the canonical observables and therefore does not rely on Gaussian-state techniques.

\subsection{Program equivalence}
In this subsection, we use our proof system to prove the equivalence of a unitary gate and its decomposition provided by Strawberry Fields~\cite{Killoran19}.
The documentation of Strawberry Fields provides both the unitary gate decomposition, and the transformation of canonical observables under the dual map of each unitary; we can verify the correctness of such decompositions within our framework (by Proposition~\ref{prop:unitary_eq}).

Table~\ref{tab:main_decomp} shows the target unitaries, their decompositions using instruction of our programming language, along with the corresponding preconditions and postconditions. The decompositions correspond to single- and two-qumode gates. Accordingly, we verify two and four triples respectively.
Some of these gates are parametric, i.e., they rely on real-valued parameters, such as the $\textsc{X}(x)$, whose parameter $x \in \mathbb{R}$ corresponds to the amount by which a qumode is displaced in the position basis. 
Also, some decompositions use parameters like $\phi$ or $\theta$ that depend on the input parameters of the target unitary.
For example, the $\textsc{P}(s)$ decomposition uses $\theta = \tan^{-1}(\frac{s}{2})$.
Names on the first column of Table~\ref{tab:main_decomp} and parameters follow the same convention as the Strawberry Fields documentation~\cite{Killoran19}, and  the formulas for each parameter can be found there.

\input{img/decompositions/summary_decomps.tex}

\subsection{Resource estimation for classical simulation}
Classical simulation of continuous-variable quantum programs requires truncating the infinite-dimensional Fock space. Given a truncation dimension $\hsDim$, only the basis states
$
\{\ket{0},\ket{1},\ldots,\ket{\hsDim-1}\}
$
are retained, while the remaining probability mass
$\Pr(n\geq\hsDim)$ is discarded. A natural question is therefore how large $\hsDim$ must be to guarantee a prescribed simulation accuracy.
Our verifier answers this question by deriving weakest preconditions for the number operator. For $\hsDim>\expectedV{\numOp}$, the one-sided Chebyshev--Cantelli inequality~\cite{Boucheron13} gives the equation shown in Figure~\ref{tab:number_op}.
Consequently, determining a truncation dimension that satisfies a desired error tolerance requires only the expectation and variance of the number operator, that is, the expectations of $\numOp$ and $\numOp^2$.

The table on Figure~\ref{tab:number_op} lists the weakest precondition our tool derives for
$\numOp$ under each single-qumode instruction applied to a quantum variable
$\qvar_0$, with postcondition $\{N_{\qvar_0}=c\}$, where $N_{\qvar_0}$ is a shorthand notation for $\createOp_0\destroyOp_0$. 
We do not tabulate $\numOp^2$ separately: for every instruction in our set the weakest precondition of $\numOp^2$ is the \emph{square} of the weakest precondition of $\numOp$. 
For a unitary $U$ this holds because conjugation is multiplicative,
$U^\dagger\numOp^2 U=(U^\dagger\numOp U)^2$; while for resets and Fock measurements it follows because $\numOp^2$ is diagonal in the Fock basis. 
Thus, squaring the quantum terms of Figure~\ref{tab:number_op} yields the weakest precondition for $N^2_{\qvar0}$.

The derived weakest preconditions provide immediate insight into the classical simulation cost of each instruction. Measurements and rotations leave the number operator unchanged and therefore preserve both its expectation and variance, so they do not increase the required truncation dimension. Displacements modify the number operator by $\sqrt2\,x\,X_{q0}+\sqrt2\,p\,P_{q0}+x^2+p^2$. 
Depending on the initial state, the linear terms may partially compensate for the quadratic offset, so a displacement can either increase or decrease the required truncation.
In contrast, the cubic-phase gate introduces the additional terms
 $\tfrac{r}{2}\{P_{q0},X_{q0}^2\}+\tfrac{r^2}{2}X_{q0}^4$, whose dominant contribution, $\tfrac{r^2}{2}\expectedV{X_{q0}^4}\geq 0$, is always nonnegative. 
 Consequently, cubic-phase operations generally require substantially larger truncation dimensions, with the precise value depending on the fourth moment of the input state.

\paragraph{Example.}
Consider the vacuum state $\ket{0}\bra{0}$ and the program $\prog := \kw{D}(q_0,x,0)$  for some $x\in \mathbb{R}$. 
From Figure~\ref{tab:number_op}, the derived weakest precondition  is $N_{q_0}+\sqrt2\,x\,X_{q0}+x^2 = c$.
Evaluating the weakest precondition on the vacuum yields 
$\expectedV{\numOp} = x^2$, and $\expectedV{\numOp^2}   = x^4 + x^2$.
Substituting these moments into the Chebyshev-Cantelli inequality,  gives the smallest truncation dimension satisfying a desired error tolerance.
For example, if $x=2$ and the discarded probability mass must not exceed $10\%$, then the bound yields $\hsDim=10$. Therefore, simulating $\kw{D}(q_0,2,0)$ from $\ket{0}\bra{0}$ using the first ten Fock basis states guarantees that at least $90\%$ of the probability mass is retained.

\begin{figure}[tbh]
    \centering
    % Left side: The Formula
    \begin{minipage}[c]{0.6\textwidth}
        \centering
        \input{img/number_op.tex}
    \end{minipage}
    \hfill % Pushes the two blocks apart
    % Right side: The Table
    \begin{minipage}[c]{0.38\textwidth}
        \centering
        Chebyshev-Cantelli inequality 
        \begingroup
        \small
        \begin{equation*}
            \Pr(n\geq\hsDim) \leq
  \frac{\expectedV{\numOp^2}-\expectedV{\numOp}^2}
       {\expectedV{\numOp^2}-\expectedV{\numOp}^2+(\hsDim-\expectedV{\numOp})^2}
       \label{eq:cantelli}
        \end{equation*}
        \endgroup
    \end{minipage}
    \Description{Number operator table.}
    \caption{On the left, we have the weakest precondition derived to determine how the expectation value of the number operator $\numOp=\createOp\destroyOp$ varies on all of our single-qumode instructions.
    We denote by $N_{\qvar0} = A^\dagger_{\qvar0}A_{\qvar0}$.
    The weakest preconditions are derived for the postcondition $\{N_{\qvar0}~ = c\}$ where $c$ is a symbolic real variable. The Chebyshev-Cantelli inequality, on the right, can be used to bound the accuracy of classical simulations.}
    \label{tab:number_op}
\end{figure}

% verification of correct preparation of the GBS resource state,

%% file: img/verification/t_HomodyneMeas.tex
\begin{figure}[tbh]
\small
    \centering
    \begin{tcolorbox}[
        hbox,                       % <-- width fits the content
        colback=hoareBg,
        colframe=hoareFrame,
        boxrule=0.6pt, arc=3pt,
        boxsep=0pt,
        left=9pt, right=9pt, top=7pt, bottom=7pt,
    ]
    \renewcommand{\arraystretch}{1.15}%
    \arrayrulecolor{hoareFrame}%
        \begin{tabular}{@{}r@{\;\;}l@{}}
            \multicolumn{2}{@{}l@{}}{\textbf{\textsf{HomodyneMeas}}\,($r$)} \\   % signature region
            \noalign{\vskip3pt}\hline\noalign{\vskip3pt}
            \multicolumn{2}{@{}l@{}}{$\assert{\, \left(X_{q0} = x\right) \land \left(X_{q0}^2 + N_{q0}/r^2 = v\right)  \,}$}\\                         % precondition
            \noalign{\vskip3pt}\hline\noalign{\vskip3pt}
            
            % program instructions
			\lnum{1} & $\kw{reset}(\mathtt{\text{q1}}); $ \\
			\lnum{2} & \\
			\lnum{3} & $\kw{D}(\mathtt{\text{q1}}, r/\sqrt{2}, 0); $ \codeComment{displace ancilla by $r$}\\
			\lnum{4} & $\kw{BS}(\mathtt{\text{q0}}, \mathtt{\text{q1}}, \pi/4, 0); $ \\
			\lnum{5} & \\
			\lnum{6} & \codeComment{Fock measurements}\\
			\lnum{7} & $\kw{Meas}(\mathtt{\text{q0}});~\kw{Meas}(\mathtt{\text{q1}}); $ \\
            
            % end program instructions
            \noalign{\vskip3pt}\hline\noalign{\vskip3pt}
            \multicolumn{2}{@{}l@{}}{$\assert{\,\left(\frac{N_{q1} - N_{q0}}{r} = x\right) \land \left(\frac{\left(N_{q1} - N_{q0}\right)^2}{r^2} = v\right) \,}$}\\                         % postcondition
        \end{tabular}
    \end{tcolorbox}
    \Description{ Hoare triple for the program $\progName{HomodyneMeas}(r)$.}
    \caption{ Hoare triple for the program $\progName{HomodyneMeas}(r)$ that measures the position of $\qvar_0$ via Fock measurements using an ancilla $\qvar_1$. 
    The weakest precondition characterizes both the expected position and its variance.
    While the expectation is exact for every displacement $\real$, the quadratic observable reveals an additional single-shot noise term that decreases as $1/\real^2$.}
    \label{fig:HomodyneMeas}
\end{figure}

%% file: img/verification/t_DeutschJozsa.tex
\begin{subfigure}[b]{0.48\textwidth}
\small
    \centering
    \begin{tcolorbox}[
        hbox,                       % <-- width fits the content
        colback=hoareBg,
        colframe=hoareFrame,
        boxrule=0.6pt, arc=3pt,
        boxsep=0pt,
        left=9pt, right=9pt, top=7pt, bottom=7pt,
    ]
    \renewcommand{\arraystretch}{1.15}%
    \arrayrulecolor{hoareFrame}%
        \begin{tabular}{@{}r@{\;\;}l@{}}
            \multicolumn{2}{@{}l@{}}{\textbf{\textsf{DeutschJozsa}}\,($r,\,s$)} \\   % signature region
            \noalign{\vskip3pt}\hline\noalign{\vskip3pt}
            \multicolumn{2}{@{}l@{}}{$\assert{\, \left(-\frac{\pi s}{2} = x\right) \land \left(\frac{(\pi s)^2}{4} + \frac{(s^2+1)}{2e^{2r}} = v\right) \,}$}\\                         % precondition
            \noalign{\vskip3pt}\hline\noalign{\vskip3pt}
            
            % program instructions
			\lnum{1} & \codeComment{prepare initial states}\\
			\lnum{2} & $\kw{reset}(\mathtt{\text{q0}}); $ $\kw{reset}(\mathtt{\text{q1}}); $ \\
			\lnum{3} & $\kw{S}(\mathtt{\text{q0}}, r); $ $\kw{S}(\mathtt{\text{q1}}, r); $ \\
			\lnum{4} & $\kw{D}(\mathtt{\text{q1}}, \frac{\sqrt{2} \pi}{4}, 0); $ \\
			\lnum{5} & $\kw{R}(\mathtt{\text{q0}}, \frac{\pi}{2}); $ $\kw{R}(\mathtt{\text{q1}}, \frac{\pi}{2}); $ \\
			\lnum{6} & \\
			\lnum{7} & \codeComment{linear oracle function}\\
            \lnum{8} & \codeComment{(constant if $s=0$, otherwise balanced)}\\
			\lnum{9} & $\kw{CX}(\mathtt{\text{q0}}, \mathtt{\text{q1}}, s); $ \\
			\lnum{10} & \\
			\lnum{11} & $\kw{R}(\mathtt{\text{q0}}, - \frac{\pi}{2}); $ \\
			\lnum{12} & \codeComment{$\expectedV{\positionOp_{q0}} = 0$ if $s=0$, otherwise $\expectedV{\positionOp_{q0}} \neq 0$}\\

            % end program instructions
            \noalign{\vskip3pt}\hline\noalign{\vskip3pt}
            \multicolumn{2}{@{}l@{}}{$\assert{\,\left(X_{q0} = x \right) \land \left(X_{q0}^2 = v\right) \,}$}\\                         % postcondition
        \end{tabular}
    \end{tcolorbox}
    \Description{Hoare triple for the continuous-variable Deutsch--Jozsa program.}
    \label{fig:DeutschJozsa}
\end{subfigure}

%% file: img/verification/t_StateTeleportation.tex
\begin{subfigure}[b]{0.48\textwidth}
    \centering
    \small
    \begin{tcolorbox}[
        hbox,                       % <-- width fits the content
        colback=hoareBg,
        colframe=hoareFrame,
        boxrule=0.6pt, arc=3pt,
        boxsep=0pt,
        left=9pt, right=9pt, top=7pt, bottom=7pt,
    ]
    \renewcommand{\arraystretch}{1.15}%
    \arrayrulecolor{hoareFrame}%
        \begin{tabular}{@{}r@{\;\;}l@{}}
            \multicolumn{2}{@{}l@{}}{\textbf{\textsf{StateTeleportation}}\,($r$)} \\   % signature region
            \noalign{\vskip3pt}\hline\noalign{\vskip3pt}
            \multicolumn{2}{@{}l@{}}{$\assert{\,\left( X_{q0} + \frac{\sqrt{2} X_{q2}}{e^{r}} = x\right) \land \left(P_{q0} + \frac{\sqrt{2} P_{q1}}{e^{r} } = p\right) \,}$}\\                         % precondition
            \noalign{\vskip3pt}\hline\noalign{\vskip3pt}
            
            % program instructions
			\lnum{1} & \codeComment{create entangled resource}\\
			\lnum{2} & $\kw{S}(\mathtt{\text{q1}}, - r); $ \codeComment{squeeze Alice's momentum}\\
			\lnum{3} & $\kw{S}(\mathtt{\text{q2}}, r); $ \codeComment{squeeze Bob's position}\\
			\lnum{4} & $\kw{BS}(\mathtt{\text{q1}}, \mathtt{\text{q2}}, \frac{\pi}{4}, 0); $ \\
			\lnum{5} & \\
			\lnum{6} & $\kw{BS}(\mathtt{\text{q0}}, \mathtt{\text{q1}}, \frac{\pi}{4}, 0); $ \codeComment{Alice mixes the input}\\
			\lnum{7} & $\kw{CX}(\mathtt{\text{q0}}, \mathtt{\text{q2}}, \sqrt{2}); $ \codeComment{transmit  position}\\
			\lnum{8} & $\kw{R}(\mathtt{\text{q1}}, - \frac{\pi}{2}); $ \codeComment{change of basis}\\
			\lnum{9} & $\kw{CZ}(\mathtt{\text{q1}}, \mathtt{\text{q2}}, \sqrt{2}); $ \codeComment{transmit momentum}\\
			\lnum{10} & $\kw{R}(\mathtt{\text{q1}}, \frac{\pi}{2}); $ \codeComment{restore basis}\\

            % end program instructions
            \noalign{\vskip3pt}\hline\noalign{\vskip3pt}
            \multicolumn{2}{@{}l@{}}{$\assert{\,\left(X_{q2} = x\right) \land \left(P_{q2} = p\right) \,}$}\\                         % postcondition
        \end{tabular}
    \end{tcolorbox}
    \Description{Hoare triple for the state teleportation program.}
    \label{fig:StateTeleportation}
\end{subfigure}

%% file: img/verification/t_SuperdenseCoding.tex
\begin{figure}[tbh]
    \centering
    \small
    \begin{tcolorbox}[
        hbox,                       % <-- width fits the content
        colback=hoareBg,
        colframe=hoareFrame,
        boxrule=0.6pt, arc=3pt,
        boxsep=0pt,
        left=9pt, right=9pt, top=7pt, bottom=7pt,
    ]
    \renewcommand{\arraystretch}{1.15}%
    \arrayrulecolor{hoareFrame}%
        \begin{tabular}{@{}r@{\;\;}l@{}}
            \multicolumn{2}{@{}l@{}}{\textbf{\textsf{SuperdenseCoding}}\,($r,\,x,\,p$)} \\   % signature region
            \noalign{\vskip3pt}\hline\noalign{\vskip3pt}
            \multicolumn{2}{@{}l@{}}{$\assert{\,\left( x = x\right) \land \left(p = p\right) \land \left(x^{2} + \frac{e^{- 2 r}}{2} = x^2\right) \land \left(p^{2} + \frac{e^{- 2 r}}{2} = p^2\right) \,}$}\\
            \noalign{\vskip3pt}\hline\noalign{\vskip3pt}
            
            % program instructions
			\lnum{1} & \codeComment{create entangled state}\\
			\lnum{2} & $\kw{reset}(\mathtt{\text{q0}}); $ \\
			\lnum{3} & $\kw{reset}(\mathtt{\text{q1}}); $ \\
			\lnum{4} & $\kw{S}(\mathtt{\text{q0}}, - r); $ \\
			\lnum{5} & $\kw{S}(\mathtt{\text{q1}}, r); $ \\
			\lnum{6} & $\kw{BS}(\mathtt{\text{q0}}, \mathtt{\text{q1}}, \frac{\pi}{4}, 0); $ \\
			\lnum{7} & \\
			\lnum{8} & \codeComment{encode Alice message (x, p) and send $q_0$ to Bob}\\
			\lnum{9} & $\kw{D}(\mathtt{\text{q0}}, x, p); $ \\
			\lnum{10} & \\
			\lnum{11} & \codeComment{Bob decodes (he now holds both modes)}\\
			\lnum{12} & $\kw{BS}(\mathtt{\text{q0}}, \mathtt{\text{q1}}, \frac{\pi}{4}, 0); $ \\
			\lnum{13} & \codeComment{$\expectedV{\positionOp_{q0}} = x$ and  $\expectedV{\momentumOp_{q1}} = p$}\\
            % end program instructions
            \noalign{\vskip3pt}\hline\noalign{\vskip3pt}
            \multicolumn{2}{@{}l@{}}{$\assert{\,\left(X_{q0} = x\right) \land \left(P_{q1} = p\right) \land \left(X_{q0}^2 = x^2\right) \land \left(P_{q1}^2 = p^2\right)\,}$}\\                         % postcondition
        \end{tabular}
    \end{tcolorbox}
    \Description{ Hoare triple for the 
        $\textsf{SuperdenseCoding}(r, x, p)$ program, with squeezing parameter $r$, and 
        two real numbers (x, p) that correspond to the message that Alice ($\qvar_0$) wants to send to Bob ($\qvar_1$). }
    \caption{ Hoare triple for the 
        $\progName{SuperdenseCoding}(r, x, p)$ program, with squeezing parameter $r$, and 
        two real numbers (x, p) that correspond to the message that Alice ($\qvar_0$) wants to send to Bob ($\qvar_1$). The derived weakest precondition proves that Bob recovers the encoded position and momentum values while quantifying the finite-squeezing variance, which decreases as $e^{-2r}/2$.}
    \label{fig:SuperdenseCoding}
\end{figure}

%% file: img/decompositions/summary_decomps.tex
\begin{table}[tbh]
\small
\centering
	\begin{tabular}{r|c|l|c|}
		 target unitary & precondition & \multicolumn{1}{c|}{decomposition} & postcondition\\
		\hline
		\hline
		 \multirow{2}{*}{X($x$)} & $\assert{\, x + X_{q0} = x \,}$ & \multirow{2}{*}{\makecell[l]{$\kw{D}(\mathtt{\text{q0}}, \sqrt{2} x/2, 0); $}} & $\assert{\, X_{q0} = x \,}$  \\
		\cline{2-2}\cline{4-4}
		 & $\assert{\, P_{q0} = p \,}$ & & $\assert{\, P_{q0} = p \,}$ \\
		\hline
		 \multirow{2}{*}{Z($p$)} & $\assert{\, X_{q0} = x \,}$ & \multirow{2}{*}{\makecell[l]{$\kw{D}(\mathtt{\text{q0}}, 0, \sqrt{2} p/2); $}} & $\assert{\, X_{q0} = x \,}$  \\
		\cline{2-2}\cline{4-4}
		 & $\assert{\, p + P_{q0} = p \,}$ & & $\assert{\, P_{q0} = p \,}$ \\
		\hline
		 \multirow{2}{*}{Fourier()} & $\assert{\, - P_{q0} = p \,}$ & \multirow{2}{*}{\makecell[l]{$\kw{R}(\mathtt{\text{q0}}, \pi/2); $}} & $\assert{\, X_{q0} = x \,}$  \\
		\cline{2-2}\cline{4-4}
		 & $\assert{\, X_{q0} = x \,}$ & & $\assert{\, P_{q0} = p \,}$ \\
		\hline
		 \multirow{2}{*}{P($s$)} & 
         $\assert{\, X_{q0} = x \,}$ & 
         \multirow{2}{*}{\makecell[l]{$\kw{R}(\mathtt{\text{q0}}, - \phi/2);~\kw{S}(\mathtt{\text{q0}}, r); $\\ $\kw{R}(\mathtt{\text{q0}}, \phi/2);~\kw{R}(\mathtt{\text{q0}}, \theta); $}} & $\assert{\, X_{q0} = x \,}$  \\
		\cline{2-2}\cline{4-4}
		 & $\assert{\, s X_{q0} + P_{q0} = p \,}$ & & $\assert{\, P_{q0} = p \,}$ \\
		\hline
		 \multirow{4}{*}{CX($s$)} & $\assert{\, X_{q0} = x_0 \,}$ & \multirow{4}{*}{\makecell[l]{$\kw{BS}(\mathtt{\text{q0}}, \mathtt{\text{q1}}, \theta, 0); $\\ $\kw{S}(\mathtt{\text{q0}}, r);~\kw{S}(\mathtt{\text{q1}}, - r); $\\ $\kw{BS}(\mathtt{\text{q0}}, \mathtt{\text{q1}}, \theta + \pi/2, 0); $}} & $\assert{\, X_{q0} = x_0 \,}$  \\
		\cline{2-2}\cline{4-4}
		 & $\assert{\, - s P_{q1} + P_{q0} = p_0 \,}$ & & $\assert{\, P_{q0} = p_0 \,}$ \\
		\cline{2-2}\cline{4-4}
		 & $\assert{\, s X_{q0} + X_{q1} = x_1 \,}$ & & $\assert{\, X_{q1} = x_1 \,}$ \\
		\cline{2-2}\cline{4-4}
		 & $\assert{\, P_{q1} = p_1 \,}$ & & $\assert{\, P_{q1} = p_1 \,}$ \\
		\hline
		 \multirow{4}{*}{CZ($s$)} & $\assert{\, X_{q0} = x_0 \,}$ & \multirow{4}{*}{\makecell[l]{
            $\kw{R}(\mathtt{\text{q1}}, - \pi/2);~ 
            \kw{BS}(\mathtt{\text{q0}}, \mathtt{\text{q1}}, \theta, 0); $\\ $\kw{S}(\mathtt{\text{q0}}, r);~\kw{S}(\mathtt{\text{q1}}, - r); $\\ $\kw{BS}(\mathtt{\text{q0}}, \mathtt{\text{q1}}, \theta + \pi/2, 0); $\\ $\kw{R}(\mathtt{\text{q1}}, \pi/2); $}} & $\assert{\, X_{q0} = x_0 \,}$  \\
		\cline{2-2}\cline{4-4}
		 & $\assert{\, s X_{q1} + P_{q0} = p_0 \,}$ & & $\assert{\, P_{q0} = p_0 \,}$ \\
		\cline{2-2}\cline{4-4}
		 & $\assert{\, X_{q1} = x_1 \,}$ & & $\assert{\, X_{q1} = x_1 \,}$ \\
		\cline{2-2}\cline{4-4}
		 & $\assert{\, s X_{q0} + P_{q1} = p_1 \,}$ & & $\assert{\, P_{q1} = p_1 \,}$ \\
		\hline
		 \multirow{4}{*}{MZ($\phi_{in}$, $\phi_{ex}$)} & $\assert{\, \ldots \,}$ & \multirow{4}{*}{\makecell[l]{$\kw{R}(\mathtt{\text{q0}}, \phi_{ex}); $\\ $\kw{BS}(\mathtt{\text{q0}}, \mathtt{\text{q1}}, \pi/4, \pi/2); $\\ $\kw{R}(\mathtt{\text{q0}}, \phi_{in}); $\\ $\kw{BS}(\mathtt{\text{q0}}, \mathtt{\text{q1}}, \pi/4, \pi/2); $}} & $\assert{\, X_{q0} = x_0 \,}$  \\
		\cline{2-2}\cline{4-4}
		 & $\assert{\, \ldots \,}$ & & $\assert{\, P_{q0} = p_0 \,}$ \\
		\cline{2-2}\cline{4-4}
		 & $\assert{\, \ldots \,}$ & & $\assert{\, X_{q1} = x_1 \,}$ \\
		\cline{2-2}\cline{4-4}
		 & $\assert{\, \ldots \,}$ & & $\assert{\, P_{q1} = p_1 \,}$ \\
		\hline
		 \multirow{4}{*}{S2($r$, $\phi$)} & $\assert{\, \ldots \,}$ & \multirow{4}{*}{\makecell[l]{$\kw{BS}(\mathtt{\text{q0}}, \mathtt{\text{q1}}, \pi/4, 0); 
         ~\kw{R}(\mathtt{\text{q0}}, - \phi/2);$\\
         $\kw{S}(\mathtt{\text{q0}}, r); ~\kw{R}(\mathtt{\text{q0}}, \phi/2);$ \\ 
         $\kw{R}(\mathtt{\text{q1}}, - \phi/2); ~\kw{S}(\mathtt{\text{q1}}, - r);$\\ $\kw{R}(\mathtt{\text{q1}}, \phi/2);~\kw{BS}(\mathtt{\text{q0}}, \mathtt{\text{q1}}, - \pi/4, 0); $}} & $\assert{\, X_{q0} = x \,}$  \\
		\cline{2-2}\cline{4-4}
		 & $\assert{\, \ldots \,}$ & & $\assert{\, P_{q0} = p \,}$ \\
		\cline{2-2}\cline{4-4}
		 & $\assert{\, \ldots \,}$ & & $\assert{\, X_{q1} = x_1 \,}$ \\
		\cline{2-2}\cline{4-4}
		 & $\assert{\, \ldots \,}$ & & $\assert{\, P_{q1} = p_1 \,}$ \\
		\hline
	\end{tabular}
	\caption{Verified gate decompositions.}
	\label{tab:main_decomp}
\end{table}

%% file: img/number_op.tex
\begin{table}[H]
\small
\centering
	\begin{tabular}{c|c}
		 $\weakp{\prog}{\{~ N_{\qvar0} = c ~\}}$ & $\prog$ \\
		\hline
		\hline
		$N_{q0} = c$ & $\kw{Meas}(\mathtt{\text{q0}}); $ \\
		\hline
		$N_{q0} + p^{2} + \sqrt{2} p P_{q0} + x^{2} + \sqrt{2} x X_{q0} = c$ & $\kw{D}(\mathtt{\text{q0}}, x, p); $ \\
		\hline
		$N_{q0} + \frac{e^{2 r} P_{q0}^{2} - P_{q0}^{2} - X_{q0}^{2} + e^{- 2 r} X_{q0}^{2}}{2} = c$ & $\kw{S}(\mathtt{\text{q0}}, r); $ \\
		\hline
		$N_{q0} = c$ & $\kw{R}(\mathtt{\text{q0}}, r); $ \\
		\hline
		$N_{q0} + \frac{r \left(r X_{q0}^{4} + P_{q0} X_{q0}^{2} + X_{q0}^{2} P_{q0}\right)}{2} = c$ & $\kw{V}(\mathtt{\text{q0}}, r); $ \\
		\hline
        $0 = c$ & $\kw{\resetIns}(\mathtt{\text{q0}}); $ \\
		\hline
	\end{tabular}
\end{table}

%% file: sections/conclusion.tex
\section{Conclusion}
\label{sec:conclusion}
Continuous-variable quantum computing requires logical foundations for program verification. 
We introduced the first unary Hoare logic for continuous-variable quantum programs, comprising a specification language that captures physically meaningful program properties and a relatively complete proof system.

Our case studies show that the framework is useful beyond verifying functional correctness. 
The derived weakest preconditions compute symbolic constraints for physically relevant quantities---such as the finite-squeezing noise of quantum state teleportation, Deutsch-Jozsa, and superdense coding---which can guide parameter selection in practical realizations.
We further applied the logic to check program equivalence for quantum-gate decomposition, and compute symbolic moments of the number operator for atomic programs which provide rigorous probabilistic bounds for the classical simulation of continuous-variable systems.

Our results demonstrate that Hoare-style reasoning remains possible for continuous-variable quantum computing despite the underlying challenges imposed by infinite-dimensional Hilbert spaces and unbounded measurement values, providing a foundation for future continuous-variable programming languages, compilers, and program-analysis methods.
Weakest precondition reasoning can address a variety of central practical problems in continuous-variable quantum programming, ranging from functional verification and parameter analysis to unitary equivalence checking and resource estimation for classical simulation.

%% file: sections/appendix.tex
\newtheorem*{lemma*}{Lemma}
\newtheorem*{theorem*}{Theorem}
\newtheorem*{proposition*}{Proposition}

\section{Appendix}

\subsection{Appendix: proofs of propositions}
\subsubsection{Appendix: Polynomial ladder operators normal-ordering (Proposition~\ref{prop:normalized-poly})}
\label{appendix:prop:normalized-poly}
\begin{proposition*}[Polynomial ladder operators normal-ordering (Proposition~\ref{prop:normalized-poly})]
Every $\matrixOp$ in $\allPolyOps$ can be written as
$
\matrixOp=\sum_{\vec k,\vec m\in\mathbb{N}_0^\numModes}\cnum_{\vec k\vec m}\,
\matrixOp_{\vec k\vec m}$, 
where $\cnum_{\vec k\vec m} \in \mathbb{C}$ and
$
        \matrixOp_{\vec k\vec m}:=\prod_{i=0}^{\numModes-1}(\destroyOp_i)^{k_i}(\createOp_i)^{m_i}
$.
\end{proposition*}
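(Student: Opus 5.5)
The plan is to rewrite each quadrature monomial in ladder operators, expand, and then reorder every resulting word into the form $\prod_{i}(\destroyOp_i)^{k_i}(\createOp_i)^{m_i}$ by repeatedly using the canonical commutation relations. The ordering is antinormal: on each mode, annihilation operators stand to the left of creation operators.

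\textbf{Step 1: reduce to ladder words.} By definition, $\matrixOp\in\allPolyOps$ is a finite $\mathbb{C}$-linear combination of finite products of elements of $\quadOps$. Each factor $\positionOp_i$ or $\momentumOp_i$ can be replaced by a linear combination of $\destroyOp_i$ and $\createOp_i$ using Equation~\ref{eq:id_canon_ops}, lifted to $\numModes$ modes through Equation~\ref{eq:modes_canon_ops}. Expanding by distributivity writes $\matrixOp$ as a finite linear combination of words $w=\ell_1\ell_2\cdots\ell_d$, where each letter is some $\destroyOp_i$ or $\createOp_i$. All of these manipulations are legitimate identities of operators on $\denseDomain{\numModes}$, since Lemma~\ref{lemma:inv-pol} says $\denseDomain{\numModes}$ is invariant under every factor. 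By linearity, it then suffices to show that every single word is a finite linear combination of the $\matrixOp_{\vec k\vec m}$.

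\textbf{Step 2: separate the modes.} Operators on different modes act on different tensor factors, so they commute: $[\destroyOp_i,\destroyOp_j]=[\destroyOp_i,\createOp_j]=[\createOp_i,\createOp_j]=0$ for $i\neq j$. These relations follow from the tensor structure in Equation~\ref{eq:modes_canon_ops}, or equivalently from relation (ii) combined with Equation~\ref{eq:id_canon_ops}. They let us permute letters freely across modes and write any word as $w=\prod_{i}w_i$, where $w_i$ is a word in $\destroyOp_i,\createOp_i$ alone. Because products over distinct modes commute, it is enough to treat a single mode and then multiply the resulting finite expansions together, which again gives a finite combination of the $\matrixOp_{\vec k\vec m}$.

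\textbf{Step 3: single-mode reordering, the main step.} Here I show by induction that every word in $\destroyOp,\createOp$ of length $d$ is a finite combination of $\destroyOp^{k}(\createOp)^{m}$ with $k+m\le d$. The key identity is $\createOp\destroyOp=\destroyOp\createOp-\identity$, which is the relation $[\destroyOp,\createOp]=\identity$. It can be verified directly on Fock vectors from Equation~\ref{eq:ladder_fock}: $\destroyOp\createOp\ket{i}=(i+1)\ket{i}$ and $\createOp\destroyOp\ket{i}=i\ket{i}$.

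The induction runs on the pair (length, number of inversions), ordered lexicographically, where an inversion is a pair of positions with a $\createOp$ to the left of a $\destroyOp$. If a word has no inversions, it already has the form $\destroyOp^k(\createOp)^m$. Otherwise it contains an adjacent pair $\createOp\destroyOp$, and applying the identity splits the word into two terms:
\begin{itemize}
\item the same word with $\createOp\destroyOp$ replaced by $\destroyOp\createOp$, which has the same length and exactly one fewer inversion;
\item the same word with the pair deleted and multiplied by $-1$, which is strictly shorter.
\end{itemize}
Both terms are smaller in the lexicographic order, so the process terminates and yields a finite expansion.

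I expect Step 3, and specifically making the termination measure precise, to be the main obstacle. The rest of the argument is bookkeeping, together with checking that the identities hold on $\denseDomain{\numModes}$ (where equality of operators is defined); that follows from the invariance in Lemma~\ref{lemma:inv-pol}.
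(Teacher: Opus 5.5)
Your proposal is correct and follows the same route as the paper's proof: rewrite $\positionOp_i,\momentumOp_i$ via Equation~\ref{eq:id_canon_ops}, expand into words in the ladder operators, and reorder them using the commutation relations. You also make explicit two things the paper leaves implicit: commutativity across modes, and a well-founded termination measure (word length, then number of $\createOp\destroyOp$ inversions) for moving every $\destroyOp$ to the left via $\createOp\destroyOp=\destroyOp\createOp-\identity$.
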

\begin{proof} The substitution $\destroyOp_i=(\positionOp_i+\im\momentumOp_i)/\sqrt2$,
    $\createOp_i=(\positionOp_i-\im\momentumOp_i)/\sqrt2$ is linear, invertible, and degree-preserving,
    so $\allPolyOps$ equals the set of finite $\mathbb{C}$-linear combinations of monomials
    in $\{\destroyOp_i,\createOp_i\}_i$ that can be ordered used the commutation relations $[\destroyOp_i, \createOp_l]$ for all $0 \leq i,l < \numModes$.
\end{proof}

\subsubsection{Appendix: Powers of ladder operators (Proposition~\ref{prop:powers_ladder})}
\label{appendix:prop:powers_ladder}
\begin{proposition*}[Powers of ladder operators (Proposition~\ref{prop:powers_ladder}).] For all nonnegative integers $m \in \mathbb{N}_0$, we have that 
    \begin{equation}
        \destroyOp^m = \sum_{i=0}^{\infty} \left(\prod_{j = 1}^m\sqrt{(i+j)}\right) \ket{i}\bra{i+m}, \quad\quad \text{ and } \quad\quad (\createOp)^m = \sum_{i=m}^{\infty} \left(\prod_{j = 0}^{m-1}\sqrt{i-j}\right) \ket{i}\bra{i-m}.
    \end{equation}
\end{proposition*}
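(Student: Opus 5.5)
We prove both identities by induction on $m$, using only the Fock-basis definitions of the ladder operators in Equation~\eqref{eq:ladder_fock} and orthonormality of the Fock basis, $\braket{j|l}=\delta_{jl}$. Throughout, the products are understood as operators acting on vectors whose Fock coefficients decay rapidly (the Schwartz space, by Proposition~\ref{prop:domainFock}), so the formal rearrangements of infinite sums below are justified termwise.

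\emph{The annihilation operator.} For $m=0$ the formula reads $\sum_{i}\ket{i}\bra{i}=\identity$ with an empty product equal to $1$. For $m=1$ it is exactly Equation~\eqref{eq:ladder_fock}. For the inductive step, assume the formula for $\destroyOp^m$ and write $\destroyOp^{m+1}=\destroyOp\,\destroyOp^m$. Expanding the product gives
\[
\sum_{l=0}^{\infty}\sum_{i=0}^{\infty}\sqrt{l+1}\Big(\prod_{j=1}^m\sqrt{i+j}\Big)\ket{l}\braket{l+1|i}\bra{i+m}.
\]
Orthonormality forces $i=l+1$. Since $l\ge 0$ ranges over all nonnegative integers, $i=l+1\ge1$ is always admissible, so the double sum collapses to a single sum over $l$. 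The coefficient becomes
\[
\sqrt{l+1}\prod_{j=1}^m\sqrt{l+1+j}=\prod_{j=1}^{m+1}\sqrt{l+j},
\]
and the outer product becomes $\ket{l}\bra{l+m+1}$. This is the claimed formula for $m+1$.

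\emph{The creation operator.} The same scheme applies to $(\createOp)^{m+1}=\createOp(\createOp)^m$. The expansion is
\[
\sum_{l}\sum_{i\ge m}\sqrt{l+1}\Big(\prod_{j=0}^{m-1}\sqrt{i-j}\Big)\ket{l+1}\braket{l|i}\bra{i-m},
\]
and orthonormality forces $l=i$. Reindexing with $i'=i+1$, so that $i'$ ranges over $i'\ge m+1$, the coefficient becomes $\sqrt{i'}\prod_{j=0}^{m-1}\sqrt{i'-1-j}$. Substituting $j\mapsto j-1$ rewrites this as $\prod_{j=0}^{m}\sqrt{i'-j}$, and the outer product is $\ket{i'}\bra{i'-(m+1)}$. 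This matches the claimed formula for $m+1$.

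The only delicate point is this index bookkeeping: checking the lower summation limits and confirming that no boundary terms are lost. In the annihilation case no term drops, because the condition $i=l+1$ is always satisfiable. In the creation case the lower bound shifts from $m$ to $m+1$. One could alternatively prove the creation identity by taking adjoints of the annihilation identity, since $(\createOp)^m=(\destroyOp^m)^\dagger$. Reindexing $i\mapsto i-m$ in that expression gives exactly $\prod_{j=0}^{m-1}\sqrt{i-j}$ over $i\ge m$, which serves as a consistency check.
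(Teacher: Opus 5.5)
Your proof is correct and takes the same route as the paper: induction on $m$, expanding the product of Fock-basis sums and collapsing the double sum via orthonormality. The only cosmetic difference is that you multiply the new factor on the left ($\destroyOp\,\destroyOp^m$, $\createOp(\createOp)^m$) whereas the paper multiplies on the right, which spares you the zero-coefficient boundary term at $i=m$ (the $\sqrt{i-m}$ factor) that the paper has to drop in the creation case.
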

\begin{proof} The proof follows by induction.
\begin{itemize}
    \item For the destroy operator ($\destroyOp$).
    \begin{itemize}
        \item \textit{Base case: $m=0$.} Gives $\sum_{i=0}^\infty \ket{i}\bra{i}$.
        \item \textit{Inductive case.} 
        Since the Fock basis forms an orthonormal basis the formula for $m+1$ gives
        \begin{equation*}
            \begin{aligned}
                \destroyOp^{m+1} &=  \destroyOp^{m}\destroyOp= \left (\sum_{i=0}^{\infty} (\prod_{j = 1}^m\sqrt{(i+j)}) \ket{i}\bra{i+m}\right)\left( \sum_{i=0}^{\infty}\sqrt{i+1} \ket{i}\bra{i+1} \right) \\
                & = \sum_{i=0}^\infty (\prod_{j = 1}^m\sqrt{(i+j)}) (\sqrt{i + m + 1}) \ket{i}\bra{i + m + 1} = \sum_{i=0}^\infty (\prod_{j = 1}^{m+1}\sqrt{(i+j)})\ket{i}\bra{i + m + 1}.
            \end{aligned}
        \end{equation*}
    \end{itemize}
    \item For the create operator ($\createOp$). The proof is similar as above.
    \begin{itemize}
        \item \textit{Base case: $m=0$.} We get again the identity matrix.
        \item \textit{Inductive case.} We show that it also holds for the inductive case $m+1$. Since the Fock basis forms an orthonormal basis we get
        \begin{equation*}
            \begin{aligned}
                (\createOp)^{m+1} &=  (\createOp)^{m}\createOp= \left (\sum_{i=m}^{\infty} (\prod_{j = 0}^{m-1}\sqrt{(i-j)}) \ket{i}\bra{i-m}\right)\left( \sum_{i=0}^{\infty}\sqrt{i+1} \ket{i+1}\bra{i} \right) \\
                & = \sum_{i=m}^\infty (\prod_{j = 0}^{m-1}\sqrt{(i-j)}) (\sqrt{i - m}) \ket{i}\bra{i - (m+1)} = \sum_{i=m+1}^\infty (\prod_{j = 0}^{m}\sqrt{(i-j)})\ket{i}\bra{i - (m+1)}.
            \end{aligned}
        \end{equation*}
    \end{itemize}
\end{itemize}
\end{proof}

\subsubsection{Appendix: Fock matrix-element criterion (Proposition~\ref{prop:fock-decay})}
\label{appendix:prop:fock-decay}
\begin{proposition*}[Fock matrix-element criterion]
For an operator $\matrixOp$ with $\denseDomain{\numModes}\subseteq\domainOp{\matrixOp}$, write
$\matrixOp_{\vvals{m}\vvals{k}}=\bra{\vvals{m}}\matrixOp\ket{\vvals{k}}$ for $\vvals{m},\vvals{k}\in\mathbb{N}_0^{\numModes}$.
Then then following holds:
\begin{enumerate}
    \item If $\sup_{\vvals{m},\vvals{k}}\big(\vvals{m}^{\vvals{a}}\,\vvals{k}^{\vvals{b}}\,|\matrixOp_{\vvals{m}\vvals{k}}|\big)<\infty$
for all $\vvals{a},\vvals{b}\in\mathbb{N}_0^{\numModes}$, then $\matrixOp$ maps $\denseDomain{\numModes}$ into $\denseDomain{\numModes}$.
\item For all $\dm \in \alldm$, we have $\sup_{\vvals{m},\vvals{k}}\big(\vvals{m}^{\vvals{a}}\,\vvals{k}^{\vvals{b}}\,|\dm_{\vvals{m}\vvals{k}}|\big)<\infty$
for all $\vvals{a},\vvals{b}\in\mathbb{N}_0^{\numModes}$.
\end{enumerate}
\end{proposition*}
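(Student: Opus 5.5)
Both parts reduce, through Proposition~\ref{prop:domainFock}, to weighted $\ell^2$/$\ell^\infty$ estimates on Fock coefficients. Throughout we use the weight $w(\vvals{\imath})=\prod_j(1+\imath_j)$. Monomials $\vvals{\imath}^{\vvals{m}}$ are bounded by powers of $w$, and conversely. So the criterion of Proposition~\ref{prop:domainFock} is equivalent to $\sup_{\vvals{\imath}} w(\vvals{\imath})^s|\cnum_{\vvals{\imath}}|<\infty$ for all $s\in\mathbb{N}_0$. We also use that $\sum_{\vvals{\imath}} w(\vvals{\imath})^{-2}<\infty$, since it factors as $(\sum_i (1+i)^{-2})^{\numModes}$.

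\emph{Part (1).} Let $\qstate\in\denseDomain{\numModes}$ with coefficients $\cnum_{\vvals{k}}$, and put $\qstate'=\matrixOp\qstate$. Because $\qstate\in\domainOp{\matrixOp}$, we can expand
\[
\bra{\vvals{m}}\matrixOp\ket{\qstate}=\sum_{\vvals{k}}\matrixOp_{\vvals{m}\vvals{k}}\cnum_{\vvals{k}}.
\]
This step needs justification. The partial sums $\qstate_N$ converge to $\qstate$ in a topology in which $\matrixOp$ is continuous. Alternatively, one can compute $\bra{\vvals{m}}\matrixOp$ via the adjoint pairing. We will argue that the series converges absolutely by the decay estimate below, and we take this as the working definition of the matrix action on $\denseDomain{\numModes}$. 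Then for every $s$,
\[
w(\vvals{m})^s|\bra{\vvals{m}}\qstate'\rangle|\le\sum_{\vvals{k}} w(\vvals{m})^s w(\vvals{k})^2|\matrixOp_{\vvals{m}\vvals{k}}|\cdot w(\vvals{k})^{-2}|\cnum_{\vvals{k}}|.
\]
By hypothesis, with $\vvals{a},\vvals{b}$ chosen so that $\vvals{m}^{\vvals{a}}\vvals{k}^{\vvals{b}}$ dominates $w(\vvals{m})^sw(\vvals{k})^2$ up to lower-order terms, the first factor is uniformly bounded. Since $|\cnum_{\vvals{k}}|\le\norm{\qstate}$, the remaining sum is at most $C\norm{\qstate}\sum_{\vvals{k}} w(\vvals{k})^{-2}<\infty$. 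Proposition~\ref{prop:domainFock} then gives $\qstate'\in\denseDomain{\numModes}$.

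\emph{Part (2).} Let $\dm\in\alldm$. The aim is to turn finiteness of $\trace(\obsOp\dm)$ for polynomial observables into entrywise decay. Take $\obsOp_s=\prod_j(\identity+\numOp_j)^{s}$. It is a self-adjoint polynomial over the quadratures, since $\numOp_j=(\positionOp_j^2+\momentumOp_j^2-\identity)/2$, and it is diagonal in the Fock basis with eigenvalue $w(\vvals{\imath})^{s}$. Evaluating the trace in the Fock basis gives $\trace(\obsOp_s\dm)=\sum_{\vvals{\imath}}w(\vvals{\imath})^s\dm_{\vvals{\imath}\vvals{\imath}}$, a sum of nonnegative terms because $\dm\ge\zeroMatrix$. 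By membership in $\alldm$ this sum is finite, say $M_s$, for every $s$. Positivity of $\dm$ yields the $2\times2$ minor inequality
\[
|\dm_{\vvals{m}\vvals{k}}|^2\le\dm_{\vvals{m}\vvals{m}}\,\dm_{\vvals{k}\vvals{k}}.
\]
Hence $w(\vvals{m})^{s}w(\vvals{k})^{s}|\dm_{\vvals{m}\vvals{k}}|\le\sqrt{w(\vvals{m})^{2s}\dm_{\vvals{m}\vvals{m}}}\,\sqrt{w(\vvals{k})^{2s}\dm_{\vvals{k}\vvals{k}}}\le M_{2s}$. This bound is uniform in $\vvals{m},\vvals{k}$, and choosing $s$ large enough dominates any prescribed $\vvals{m}^{\vvals{a}}\vvals{k}^{\vvals{b}}$.

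The main obstacle is making rigorous the identification of $\trace(\obsOp_s\dm)$ with the Fock-diagonal sum. The definition of trace class in the paper is basis-dependent, and $\obsOp_s$ is unbounded. We plan to use only the Fock basis, which is admissible since it lies in $\denseDomain{\numModes}$. Then $\bra{\vvals{\imath}}\obsOp_s\dm\ket{\vvals{\imath}}=\langle\obsOp_s\vvals{\imath},\dm\vvals{\imath}\rangle=w(\vvals{\imath})^s\dm_{\vvals{\imath}\vvals{\imath}}$, using self-adjointness of $\obsOp_s$ on $\denseDomain{\numModes}$ and the fact that $\dm$ preserves $\denseDomain{\numModes}$. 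This step is exactly where the invariance built into $\alldm\subseteq\allOpsA\subseteq\allOps$ is needed. The interchange issue in Part~(1) is secondary and is handled by the absolute convergence already shown.
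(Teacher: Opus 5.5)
Your argument is essentially the paper's. Part~(2) is the same proof: finite Fock-diagonal moments from a diagonal polynomial observable, then $|\dm_{\vvals{m}\vvals{k}}|^2\le\dm_{\vvals{m}\vvals{m}}\dm_{\vvals{k}\vvals{k}}$. You use $\identity+\numOp_j$ in place of $\numOp_j$, and your justification for evaluating $\trace(\obsOp_s\dm)$ in the Fock basis is more careful than the paper's. It rests on the symmetry of $\obsOp_s$ on $\denseDomain{\numModes}$ and on $\dm\,\denseDomain{\numModes}\subseteq\denseDomain{\numModes}$. Part~(1) is the same weighted sup estimate fed into Proposition~\ref{prop:domainFock}, with one difference in where summability comes from. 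The paper uses the hypothesis with $\vvals{b}=\vvals{0}$ and gets summability from $\sum_{\vvals{k}}|\cnum_{\vvals{k}}|<\infty$, i.e.\ the Schwartz decay of $\qstate$. You instead spend the $\vvals{b}$-decay of $\matrixOp$ to produce $w(\vvals{k})^{-2}$, and then need only $|\cnum_{\vvals{k}}|\le\norm{\qstate}$.

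What does not hold up is your closing claim that the expansion $\bra{\vvals{m}}\matrixOp\ket{\qstate}=\sum_{\vvals{k}}\matrixOp_{\vvals{m}\vvals{k}}\cnum_{\vvals{k}}$ is ``handled by the absolute convergence already shown''. Absolute convergence only says the series has a sum, not that this sum is the $\vvals{m}$-th Fock coefficient of $\matrixOp\ket{\qstate}$. Taking the matrix action as a ``working definition'' replaces $\matrixOp$ by a different operator.

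For a general $\matrixOp$ with $\denseDomain{\numModes}\subseteq\domainOp{\matrixOp}$, Part~(1) is actually false. Let $\ell$ be a linear functional on $\denseDomain{\numModes}$ (necessarily discontinuous, e.g.\ built from a Hamel basis) that vanishes on every Fock vector but has $\ell(\qstate_0)=1$ for some $\qstate_0\in\denseDomain{\numModes}$. Pick $\beta\in\hs\setminus\denseDomain{\numModes}$ and set $\matrixOp\ket{\qstate}=\ell(\qstate)\beta$. Then every $\matrixOp_{\vvals{m}\vvals{k}}=0$, yet $\matrixOp\ket{\qstate_0}=\beta\notin\denseDomain{\numModes}$. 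The paper's proof makes the same interchange silently.

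The repair is an extra hypothesis such as $\ket{\vvals{m}}\in\domainOp{\matrixOp^\dagger}$ for all $\vvals{m}$. This holds for bounded $\matrixOp$, which is the only case used (for density operators in Theorem~\ref{thm:prog-invariance}), and for every $\matrixOp\in\allOps$. Under it, $\bra{\vvals{m}}\matrixOp\ket{\qstate}=\langle\matrixOp^\dagger\ket{\vvals{m}},\qstate\rangle=\sum_{\vvals{k}}\langle\matrixOp^\dagger\ket{\vvals{m}},\ket{\vvals{k}}\rangle\,\cnum_{\vvals{k}}=\sum_{\vvals{k}}\matrixOp_{\vvals{m}\vvals{k}}\cnum_{\vvals{k}}$. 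This follows from Parseval and the definition of the adjoint, using $\ket{\vvals{k}}\in\domainOp{\matrixOp}$. Your estimate then goes through unchanged.
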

\begin{proof}
Let $\ket{\qstate}\in\denseDomain{\numModes}$ and denote by $\qstate_{\vvals{m}} = \qstate(\vvals{m})$ the coefficient associated to the Fock basis state $\vvals{m}$.
By Proposition~\ref{prop:domainFock} we have  $\sum_{\vvals{k}}|\qstate_{\vvals{k}}|<\infty$. 
\begin{enumerate}
    \item For each $\vvals{a}$, let 
$C_{\vvals{a}}:=\sup_{\vvals{m},\vvals{k}}(\vvals{m}^{\vvals{a}}|\matrixOp_{\vvals{m}\vvals{k}}|)<\infty$
(the hypothesis with $\vvals{b}=\vvals{0}$). 
Then, uniformly in $\vvals{m}$, we have 
$
\vvals{m}^{\vvals{a}}\, |(\matrixOp\ket{\qstate})_{\vvals{m}}|
\;\le\;\sum_{\vvals{k}}\vvals{m}^{\vvals{a}}|\matrixOp_{\vvals{m}\vvals{k}}|\,|\qstate_{\vvals{k}}|
\;\le\;C_{\vvals{a}}\sum_{\vvals{k}}|\qstate_{\vvals{k}}|\;<\;\infty .
$
Thus $\sup_{\vvals{m}}(\vvals{m}^{\vvals{a}}|(\matrixOp\ket{\qstate})_{\vvals{m}}|)<\infty$ for all $\vvals{a}$, so
$\matrixOp\ket{\qstate}\in\denseDomain{\numModes}$ by Proposition~\ref{prop:domainFock}.
\item By positivity of $\dm$ we have $|\dm_{\vvals{m}\vvals{k}}| \leq \sqrt{|\dm_{\vvals{m}\vvals{m}}||\dm_{\vvals{k}\vvals{k}}|}$.
Let $\vvals{\numOp}^{\vvals{a}} = \numOp_0^{a_0}\numOp^{a_1}_1\cdots \numOp_{\numModes-1}^{a_{\numModes-1}}$, then $\trace(\vvals{\numOp}^{2\vvals{a}}\dm) = \sum_{\vvals{m}\in \mathbb{N}_0^{\numModes}} \vvals{m}^{2\vvals{a}}\dm_{\vvals{m}\vvals{m}}$. 
Thus $\sup_{\vvals{m} \in \mathbb{N}_0^{\numModes}} (\vvals{m}^{2\vvals{a}}\dm _{\vvals{m}\vvals{m}}) \leq \trace(\vvals{\numOp}^{2\vvals{a}}\dm)$.
Next, let $$C_{\vvals{a}}:= \sup_{\vvals{m}\in \mathbb{N}_0^{\numModes}}(\vvals{m}^{\vvals{a}} \sqrt{\dm_{\vvals{m}\vvals{m}}}) = \sqrt{\sup_{\vvals{m}\in \mathbb{N}_0^{\numModes}}(\vvals{m}^{\vvals{2a}} \dm_{\vvals{m}\vvals{m}}})$$ for all $\vvals{a} \in \mathbb{N}_0^{\numModes}$.
Since by definition $\dm$ guarantees finiteness of $\trace(\vvals{\numOp}^{2\vvals{a}}\dm)$ then $\vvals{m}^{\vvals{a}}\,\vvals{k}^{\vvals{b}}|\dm_{\vvals{m}\vvals{k}}| \leq \vvals{m}^{\vvals{a}}\,\vvals{k}^{\vvals{b}}\sqrt{|\dm_{\vvals{m}\vvals{m}}||\dm_{\vvals{k}\vvals{k}}|} \leq C_{\vvals{a}}C_{\vvals{b}} < \infty$ for all $\vvals{a},\vvals{b}\in\mathbb{N}_0^{\numModes}$.
\end{enumerate}
\end{proof}

\subsubsection{Appendix: Unitary equivalence (up to a global phase) (Proposition~\ref{prop:unitary_eq})}
\label{appendix:prop:unitary_eq}
\begin{proposition*}[Unitary equivalence (up to a global phase) (Proposition~\ref{prop:unitary_eq})] For all unitaries $U_1, U_2 \in \allOps$, there exists a phase $\theta \in [0, 2\pi]$ such that $U_1=e^{\im \theta}U_2$ if and only if $U^\dagger_1 \obsOp U_1 = U^\dagger_2 \obsOp U_2$ for all canonical observables $\obsOp \in \quadOps$.
\end{proposition*}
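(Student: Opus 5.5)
The plan is to prove the two directions separately. The forward direction is a direct computation; the converse reduces to irreducibility of $\quadOps$ together with Schur's Lemma.

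\emph{($\Rightarrow$)} Suppose $U_1=e^{\im\theta}U_2$. Then $U_1^\dagger=e^{-\im\theta}U_2^\dagger$, because the adjoint conjugates scalars. For every $\obsOp\in\quadOps$ this gives $U_1^\dagger\obsOp U_1=e^{-\im\theta}e^{\im\theta}U_2^\dagger\obsOp U_2=U_2^\dagger\obsOp U_2$. By Lemma~\ref{lemma:invariance_unitaries} and Lemma~\ref{lemma:inv-pol}, $U_1$, $U_2$ and $\obsOp$ all leave $\denseDomain{\numModes}$ invariant. Hence both sides are defined on $\denseDomain{\numModes}$, and equality holds in the sense fixed in Section~\ref{sec:background}.

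\emph{($\Leftarrow$)} Set $V=U_2U_1^\dagger$. This is a bounded unitary, and by Lemma~\ref{lemma:invariance_unitaries} it preserves $\denseDomain{\numModes}$. Assume $U_1^\dagger\obsOp U_1=U_2^\dagger\obsOp U_2$ for all $\obsOp\in\quadOps$. Multiplying on the left by $U_2$ and on the right by $U_1^\dagger$ gives $V\obsOp=\obsOp V$ on $\denseDomain{\numModes}$ for every canonical observable. The paper's equality is stated via expectation values $\innerProd{\matrixOp\qstate}{\qstate}$. Since both sides are operators that map $\denseDomain{\numModes}$ into itself, polarization upgrades this to equality of matrix elements $\innerProd{\matrixOp\qstate}{\qstate'}$, and density then gives equality as operators on $\denseDomain{\numModes}$.

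Next I would upgrade commutation with the unbounded $\positionOp_i,\momentumOp_i$ to commutation with bounded functions of them, namely the Weyl operators $\weylOp(\vvals{x},\vvals{p})$. For $\qstate\in\denseDomain{\numModes}$, the curve $t\mapsto V e^{\im t\obsOp}V^\dagger\qstate$ satisfies the same differential equation as $e^{\im t\obsOp}\qstate$, with the same initial value. Essential self-adjointness of $\obsOp$ on the invariant core $\denseDomain{\numModes}$, a standard fact for linear combinations of quadratures, gives uniqueness of solutions. Hence $V$ commutes with every Weyl operator. By the Stone--von Neumann theorem the Weyl operators act irreducibly on $\hs$, so Schur's Lemma, in the form recalled before Proposition~\ref{prop:unitary_eq}, gives $V=c\,\identity$ for some $c\in\mathbb{C}$.

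Unitarity of $V$ forces $|c|=1$, so $c=e^{\im\theta}$ with $\theta\in[0,2\pi]$. Then $U_2=e^{\im\theta}U_1$, equivalently $U_1=e^{-\im\theta}U_2$, and reducing the phase modulo $2\pi$ puts it in the stated range.

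\textbf{Main obstacle.} The hard step is passing from commutation with the unbounded quadratures on $\denseDomain{\numModes}$ to commutation with bounded operators, which is what Schur's Lemma actually needs. I would handle it through the Weyl-operator argument above, citing essential self-adjointness on the Schwartz space~\cite{Reed12}.

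There is an alternative that avoids this step, which I would note as a fallback. Commutation with $\destroyOp_i=(\positionOp_i+\im\momentumOp_i)/\sqrt2$ shows that $V$ maps the vacuum $\ket{0}^{\otimes\numModes}$ into the kernel of every $\destroyOp_i$. That kernel is one-dimensional, so $V\ket{0}^{\otimes\numModes}=c\ket{0}^{\otimes\numModes}$. Commutation with each $\createOp_i$ then propagates the same scalar $c$ to every Fock basis state, and $V=c\,\identity$ follows directly.
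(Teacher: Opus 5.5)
Your proof is correct and follows the paper's route: you form $U_2U_1^\dagger$, show that it commutes with every quadrature, and invoke Schur's Lemma, and your forward direction is the same phase cancellation. Passing to the Weyl operators via essential self-adjointness, or using your vacuum/Fock-basis fallback (which avoids Schur altogether), makes explicit a step the paper glosses by applying Schur directly to the unbounded generators. The only slip is citing Lemma~\ref{lemma:invariance_unitaries}, which covers only atomic programs; invariance of $\denseDomain{\numModes}$ under $U_1$, $U_2$ and $U_2U_1^\dagger$ comes from the hypothesis $U_1,U_2\in\allOps$.
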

\begin{proof} Suppose we have $\numModes$ qumodes, and canonical observables $\quadOp_i \in \quadOps$ for $0 \leq i < \numModes$.
\begin{itemize}
    \item ($\Rightarrow$) Suppose there exists a phase $\theta \in [0, 2\pi]$ such that $U_1=e^{\im \theta}U_2$.
    Then we have $U^\dagger_1 \obsOp_i U_1 = e^{-\im \theta}e^{\im \theta} U^\dagger_2 \obsOp_i U_2= U^\dagger_2 \obsOp_i U_2$ for all $0\leq i < \numModes$.
    \item ($\Leftarrow$) Suppose $U^\dagger_1 \obsOp_i U_1 = U^\dagger_2 \obsOp_i U_2$ for all canonical observables $\obsOp_i \in \quadOps$.
    Now consider, $W = U_2U_1^\dagger$, then we have
    $$
    W^\dagger \quadOp_i W = 
U_1U_2^\dagger\quadOp_iU_2U_1^\dagger = U_1(U_2^\dagger\quadOp_iU_2)U_1^\dagger = U_1U_1^\dagger\quadOp_iU_1U_1^\dagger = \quadOp_i.
    $$
    Thus we have, $W$ commutes with all canonical observables.
    By Schur's Lemma~\cite{Ballentine14} $W = e^{\im \theta}\identity$ for some $\theta \in [0, 2\pi]$. Since $W = U_2U_1^\dagger= e^{\im \theta}\identity$ it follows $U_2 = e^{\im \theta} U_1$.
\end{itemize}
\end{proof}

\subsection{Appendix: proofs of lemmas}

\subsubsection{Appendix: Backwards unitary evolution via substitution (Lemma~\ref{lemma:unitary_substitution})}
\label{appendix:lemma:unitary_substitution}
\begin{lemma*}[Backwards unitary evolution via substitution (Lemma~\ref{lemma:unitary_substitution})]
For all unitary operators $U\in \allOps$ and all polynomials $\poly(\allPosOp, \allMomOp) \in \polyOps$, we have $U^\dagger \poly(\allPosOp, \allMomOp) U = \poly(U^\dagger\allPosOp U, U^\dagger\allMomOp U)$.
\end{lemma*}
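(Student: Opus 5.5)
The plan is to prove the identity by structural induction on the polynomial $\poly$. The key tool is the insertion of $UU^\dagger = \identity$ between factors. Work on the common invariant domain $\denseDomain{\numModes}$, where all compositions make sense. Since $U \in \allOps$, both $U$ and $U^\dagger$ map $\denseDomain{\numModes}$ into itself. By Lemma~\ref{lemma:inv-pol}, every polynomial in $\allPolyOps$ also maps $\denseDomain{\numModes}$ into itself. Hence for any $\ket{\qstate}\in\denseDomain{\numModes}$, every expression $U^\dagger \matrixOp U\ket{\qstate}$ with $\matrixOp\in\allPolyOps$ is well defined and lies in $\denseDomain{\numModes}$. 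Equalities of operators are then understood pointwise on $\denseDomain{\numModes}$, which is stronger than the paper's notion of $\matrixOp=\matrixOp'$.

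First, handle the generators of the polynomial algebra.
\begin{itemize}
\item For constants $\cnum\identity$, unitarity gives $U^\dagger \cnum\identity U = \cnum U^\dagger U = \cnum\identity$.
\item For a canonical observable $\quadOp \in \quadOps$, the claim $U^\dagger\quadOp U = U^\dagger\quadOp U$ is trivial. This is exactly the component-wise substitution $\allPosOp \mapsto U^\dagger\allPosOp U$, $\allMomOp\mapsto U^\dagger\allMomOp U$.
\end{itemize}

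Next, the inductive steps.
\begin{itemize}
\item For sums, $U^\dagger(\matrixOp_1+\matrixOp_2)U\ket{\qstate} = U^\dagger\matrixOp_1U\ket{\qstate} + U^\dagger\matrixOp_2U\ket{\qstate}$ by linearity of $U^\dagger$.
\item For products, write $U^\dagger \matrixOp_1\matrixOp_2 U\ket{\qstate} = U^\dagger \matrixOp_1 (UU^\dagger) \matrixOp_2 U\ket{\qstate} = (U^\dagger\matrixOp_1U)(U^\dagger\matrixOp_2U)\ket{\qstate}$. This uses $UU^\dagger=\identity$ on the vector $\matrixOp_2U\ket{\qstate}$, which lies in $\denseDomain{\numModes}$ because both $U$ and $\matrixOp_2$ preserve the Schwartz space.
\end{itemize}
Applying the induction hypotheses to $\matrixOp_1$ and $\matrixOp_2$ turns each factor into the same polynomial with substituted arguments. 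Since $\poly$ is a finite expression built from quadratures and constants by sums and products, induction yields $U^\dagger\poly(\allPosOp,\allMomOp)U = \poly(U^\dagger\allPosOp U, U^\dagger\allMomOp U)$ on $\denseDomain{\numModes}$.

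One subtlety is that $\poly$ is a formal noncommutative expression, and different expressions can denote the same operator. Both sides are nevertheless computed from the same syntactic expression, so the argument runs over the term structure and no well-definedness issue arises. Self-adjointness of $\poly$ is not needed; the identity holds for all of $\allPolyOps$.

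The main obstacle is the domain bookkeeping. Unbounded operators such as $\positionOp$ need not satisfy $U^\dagger \positionOp U$ being defined on all of $\hs$, and inserting $UU^\dagger$ is only legitimate on vectors in the domain of the next operator. I would resolve this as above, using the invariance of $\denseDomain{\numModes}$ under $U$, $U^\dagger$ (from $U\in\allOps$), and polynomials (from Lemma~\ref{lemma:inv-pol}), so that every intermediate vector stays in $\denseDomain{\numModes}$.
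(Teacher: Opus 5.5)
Your proposal is correct and follows essentially the same route as the paper. The conjugation map $\matrixOp \mapsto U^\dagger \matrixOp U$ is linear, fixes scalars, and is multiplicative by inserting $UU^\dagger = \identity$, with everything computed on a common dense domain invariant under $U$, $U^\dagger$ and the quadratures. Your structural induction over the term, with $\denseDomain{\numModes}$ explicitly chosen as that domain (justified via $U\in\allOps$ and Lemma~\ref{lemma:inv-pol}), is a more carefully bookkept version of the paper's term-by-term expansion into monomials; your remark that self-adjointness of $\poly$ is never used is also correct.
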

\begin{proof}
Let $\matrixOp =\poly(\allPosOp, \allMomOp)$ and  $\phi(\matrixOp):=U^\dagger \matrixOp U$. 
All identities below hold on a common dense domain
invariant under $U$, $U^\dagger$, and the $\allPosOp_i,\allMomOp_i$. The map $\phi$ is
linear and multiplicative ---$\phi(\matrixOp)\phi(\matrixOp')=U^\dagger \matrixOp\,(U U^\dagger)\,\matrixOp'\,U=U^\dagger \matrixOp\matrixOp'\,U
=\phi(\matrixOp\matrixOp')$, using $U U^\dagger=\identity$--- and fixes scalars, $\phi(\cnum\identity)=\cnum \phi(\identity)=\cnum\identity$. Expanding
$\poly$ as a finite $\mathbb{C}$-linear combination of monomials in the
$\allPosOp_i,\allMomOp_i$ and applying these three properties term by term yields
$U^\dagger\poly(\allPosOp,\allMomOp)U
=\poly\!\big(U^\dagger\allPosOp U,\,U^\dagger\allMomOp U\big)$.
\end{proof}

\subsubsection{Appendix: Invariance of Schwartz space under atomic unitary programs (Lemma~\ref{lemma:invariance_unitaries})}
\label{appendix:lemma:invariance_unitaries}
\begin{lemma*}[Invariance of Schwartz space under atomic unitary programs (Lemma~\ref{lemma:invariance_unitaries})]For a unitary atomic program $\prog$, let $U$ be its corresponding unitary operator. Then $U, U^\dagger \in \allOps$.
\end{lemma*}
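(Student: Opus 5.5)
Since $U^\dagger$ for each atomic unitary program is again an atomic unitary of the same family with negated parameter(s), it suffices to show $U\denseDomain{\numModes}\subseteq\denseDomain{\numModes}$ for each of $\dispIns,\squeezeIns,\rotationIns,\cubicPhaseIns,\beamSplitterIns$. The inverses are $\dispIns(\qvar,-\real_0,-\real_1)$, $\squeezeIns(\qvar,-\real)$, $\rotationIns(\qvar,-\real)$, $\cubicPhaseIns(\qvar,-\real)$ and $\beamSplitterIns(\qvar_0,\qvar_1,-\real_0,\real_1)$. The permutation $\permuteHs_{\qvars}$ and tensoring with $\identity$ preserve $\denseDomain{\numModes}$, since Schwartz functions are closed under permuting variables. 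We may therefore work on the one- or two-mode factor and treat the remaining variables as parameters, with uniform seminorm bounds. The plan is to give each gate an explicit action on wave functions in the position representation and to check the defining seminorms $\sup|\fvars^{\vvals{m}}\partial^{\vvals{k}}\qstate|$ directly.

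\textbf{Displacement, cubic phase, squeeze.} The displacement acts, up to a phase, as $(U\qstate)(x)=e^{\im c_1 x}\qstate(x-c_0)$. The cubic phase acts as $(U\qstate)(x)=e^{\im\real x^3/3}\qstate(x)$, and the squeeze acts as the dilation $(U\qstate)(x)=e^{\pm\real/2}\qstate(e^{\pm\real}x)$. In each case, derivatives of the multiplier $e^{\im p(x)}$ with $p$ a real polynomial are polynomials times a unimodular factor. Leibniz's rule then bounds $x^m\partial^k(U\qstate)$ by finitely many seminorms of $\qstate$, using that translation by $c_0$ changes $|x|^m$ only by a polynomial in $|x-c_0|$. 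Hence $U\qstate\in\denseDomain{\numModes}$.

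\textbf{Rotation.} The rotation $e^{\im\real\numOp}$ is diagonal in the Fock basis with unimodular entries. By Proposition~\ref{prop:domainFock}, the criterion $\sup_{\vvals{\imath}}\vvals{\imath}^{\vvals{m}}|\cnum_{\vvals{\imath}}|<\infty$ is unchanged by multiplying the coefficients by phases, so invariance is immediate.

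\textbf{Beam splitter, the main obstacle.} Here the first step is to identify the action concretely. I would use Figure~\ref{fig:evol-canon-op}, which shows that $\semD{\beamSplitterIns}$ acts on $(\positionOp_{\qvar_0},\momentumOp_{\qvar_0},\positionOp_{\qvar_1},\momentumOp_{\qvar_1})$ by a real symplectic orthogonal matrix, that is, a passive linear-optics transformation. For $\real_1=0$, this gives $(U\qstate)(x_0,x_1)=\qstate(R^{-1}(x_0,x_1))$ with $R$ a planar rotation by $\real_0$, which clearly preserves the Schwartz class. For general $\real_1$, I would factor $U=(\rotationIns\text{ on }\qvar_1)\,\beamSplitterIns(\real_0,0)\,(\rotationIns\text{ on }\qvar_1)^{\dagger}$ up to a phase. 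This reduces the general case to the rotation and $\real_1=0$ cases.

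To justify these identifications, I would check that both sides induce the same conjugation on all quadratures and then invoke Proposition~\ref{prop:unitary_eq}. A cleaner alternative is to note that $U$ conserves total photon number, so it preserves each finite-dimensional subspace spanned by the states $\ket{i,j}$ with $i+j=n$, acting unitarily there. The Fock coefficients of $U\qstate$ on level $n$ then have the same $\ell^2$-norm as those of $\qstate$. Since $\vvals{\imath}^{\vvals{m}}\le n^{|\vvals{m}|}$ on level $n$, and the level has only $n+1$ states, the decay criterion of Proposition~\ref{prop:domainFock} transfers, at the cost of a polynomial factor. I would adopt this Fock-level argument as the primary route, since it avoids the phase bookkeeping of the factorization.
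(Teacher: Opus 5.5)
Your proposal is correct. For $\cubicPhaseIns$ it is the paper's argument: a smooth unimodular multiplier $e^{\im\real x^3/3}$ whose derivatives are polynomials times a phase, with $U^\dagger$ the same gate at $-\real$.

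The difference lies in the Gaussian gates. The paper handles $\dispIns,\squeezeIns,\rotationIns,\beamSplitterIns$ in one sentence, citing the standard fact that Gaussian unitaries and their adjoints preserve the Schwartz space. You instead prove each case by hand:
\begin{itemize}
\item $\dispIns$: a modulation composed with a translation.
\item $\squeezeIns$: an $L^2$-normalized dilation.
\item $\rotationIns$: unimodular Fock-diagonal phases together with Proposition~\ref{prop:domainFock}.
\item $\beamSplitterIns$: conservation of $\numOp_{\qvar_0}+\numOp_{\qvar_1}$ together with Proposition~\ref{prop:domainFock}. Unitarity on each $(n+1)$-dimensional level, for fixed spectator indices, gives $|(U\qstate)_{\vvals{\imath}}|\le\sqrt{n+1}\,\max_{\text{level}}|\cnum|$, and the bound $n^{K}\le 2^{K}(\imath_0^{K}+\imath_1^{K})$ absorbs the polynomial weight.
\end{itemize}

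The citation is shorter and covers every Gaussian unitary at once, including ones outside the instruction set. Your route is self-contained and uses only results already stated in the paper. It also makes explicit two points the paper leaves implicit: that $U^\dagger$ is the same gate with negated parameters, and that spectator modes are harmless. Its level argument applies unchanged to any number-conserving (passive) unitary.

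You were right to make the Fock-level argument primary. Identifying the beam splitter through Proposition~\ref{prop:unitary_eq} would be circular as that proposition is stated, because it assumes $U_1,U_2\in\allOps$. If you want the geometric picture anyway, you can get it without Schur's lemma:
\begin{itemize}
\item Conjugating the generator by $e^{\im\real_1\numOp_{\qvar_1}}$ gives exactly $\beamSplitterIns(\qvar_0,\qvar_1,\real_0,\real_1)=\rotationIns(\qvar_1,\real_1)\,\beamSplitterIns(\qvar_0,\qvar_1,\real_0,0)\,\rotationIns(\qvar_1,\real_1)^\dagger$ for the corresponding unitaries.
\item For $\real_1=0$ the generator equals $\im(\positionOp_{\qvar_1}\momentumOp_{\qvar_0}-\positionOp_{\qvar_0}\momentumOp_{\qvar_1})$. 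This is $-\im$ times the planar angular momentum, which generates rotations of the coordinates $(x_0,x_1)$.
\end{itemize}
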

\begin{proof}Let $\qstate \in \denseDomain{\numModes}$ be a Schwartz function, and suppose that $\qstate(x)$ denotes the amplitude of the eigenstate $\ket{x}$ of the position operator $\positionOp$. 
\begin{itemize}
    \item Case $\prog \in \{\dispIns, \squeezeIns, \rotationIns, \beamSplitterIns\}$. Gaussian unitaries and their conjugate transpose always preserve the Schwartz space~\cite{Folland89,Birkhauser06,Gosson11}.
    \item Case $\prog = \cubicPhaseIns(\qvar, \real)$. Let $U = \permuteHs_{\qvar}^\dagger e^{\im \real\positionOp^3/3}\otimes \identity^{\otimes \numModes-1}\permuteHs_{\qvar}$ be the corresponding unitary. The conjugate transpose is $U^\dagger = \permuteHs_{\qvar}^\dagger e^{\im -\real\positionOp^3/3}\otimes \identity^{\otimes \numModes-1}\permuteHs_{\qvar}$. 
    Therefore, it is enough to note that the successor vector is $U\ket{\qstate} = \qstate'(x_\qvar, x_{\qvars}) = e^{\im\real x_{\qvar}^3/3} \qstate(x_\qvar, x_{\qvars})$ for any $\ket{\qstate} \in \denseDomain{\numModes}$. 
    Since $e^{\im\real x_{\qvar}^3/3}$ is a smooth function whose derivatives generate polynomials and $|e^{\im\real x_{\qvar}^3/3}| = 1$, it follows that  $\ket{\qstate'} \in \denseDomain{\numModes}$.
\end{itemize}
\end{proof}